\documentclass[11pt]{article}
\usepackage{latexsym}
\usepackage{amssymb}
\usepackage{amsmath}
\usepackage{amsthm}
\usepackage[margin=1in]{geometry}
\usepackage[ruled,vlined,linesnumbered]{algorithm2e}
\usepackage{forloop}
\usepackage{lmodern}
\usepackage{paralist}
\usepackage{hyperref}
\usepackage{tikz}
\usepackage{etoolbox}
\usepackage{calc}
\usetikzlibrary{shapes,shapes.multipart}


\newtheorem{theorem}{Theorem}[section]
\newtheorem{lemma}[theorem]{Lemma}

\newtheorem{corollary}[theorem]{Corollary}
\newtheorem{definition}{Definition}[section]
\newtheorem{proposition}[theorem]{Proposition}
\newtheorem{observation}[theorem]{Observation}
\newtheorem{claim}[theorem]{Claim}

\newtheorem{problem}{Problem}
\newtheorem{hypothesis}{Hypothesis}


\makeatletter
\newtheorem*{rep@theorem}{\rep@title}
\newcommand{\newreptheorem}[2]{%
\newenvironment{rep#1}[1]{%
 \def\rep@title{#2 \ref{##1}}%
 \begin{rep@theorem}}%
 {\end{rep@theorem}}}
\makeatother

\newreptheorem{theorem}{Theorem}

\newcommand{\defcal}[1]{\expandafter\newcommand\csname c#1\endcsname{{\mathcal{#1}}}}
\newcounter{ct}
\forLoop{1}{26}{ct}{
    \edef\letter{\Alph{ct}}
    \expandafter\defcal\letter
}


\newcommand{\NP}{{\cal NP}}
\newcommand{\nonnegR}{\mathbb{R}^+}
\newcommand{\ee}{{\varepsilon}}
\newcommand{\ie}{{\it i.e.}}
\newcommand{\eg}{{\it e.g.}}
\newcommand{\DSS}[2]{{\cD_{#1}^+(#2)}}
\newcommand{\DS}[1]{{\cD^+(#1)}}
\newcommand{\DSF}[1]{{\cD^+_{#1}}}
\newcommand{\DDS}[2]{{\cD_{#1}(#2)}}
\newcommand{\DD}[1]{{\cD(#1)}}
\newcommand{\DDF}[1]{{\cD_{#1}}}
\newcommand{\Poly}{{\mathtt{Poly}}}
\newcommand{\ignore}[1]{}
\newcommand{\eqdef}{\stackrel{\mbox{\tiny{def}}}{=}}

\makeatletter
\def\moverlay{\mathpalette\mov@rlay}
\def\mov@rlay#1#2{\leavevmode\vtop{%
   \baselineskip\z@skip \lineskiplimit-\maxdimen
   \ialign{\hfil$\m@th#1##$\hfil\cr#2\crcr}}}
\newcommand{\charfusion}[3][\mathord]{
    #1{\ifx#1\mathop\vphantom{#2}\fi
        \mathpalette\mov@rlay{#2\cr#3}
      }
    \ifx#1\mathop\expandafter\displaylimits\fi}
\makeatother

\newcommand{\bigcupdot}{\charfusion[\mathop]{\bigcup}{\cdot}}

\pagenumbering{Roman}


\title{\textbf{Constrained Monotone Function Maximization and the Supermodular Degree}}

\author{
Moran Feldman\thanks{School of Computer and Communications, EPFL.
Email: \texttt{moran.feldman@epfl.ch}.}
\and
Rani Izsak\thanks{Weizmann Institute of Science.
Email: \texttt{ran.izsak@weizmann.ac.il}.}
}

\begin{document}

\maketitle

\begin{abstract}
The problem of maximizing a constrained monotone set function has many practical applications and
generalizes many combinatorial problems such as $k$-\textsc{Coverage}, \textsc{Max-SAT}, \textsc{Set Packing}, \textsc{Maximum Independent Set} and \textsc{Welfare Maximization}.
Unfortunately, it is generally not possible to maximize a monotone set function up to an acceptable approximation ratio, even subject to simple constraints. One highly studied approach to cope with this hardness is to restrict the set function, for example, by requiring it to be submodular. An outstanding disadvantage of imposing such a restriction on the set function is that no result is implied for set functions deviating from the restriction, even slightly. 
A more flexible approach, studied by Feige and Izsak [ITCS 2013],
is to design an approximation algorithm whose approximation ratio depends on the complexity of the instance, as measured by some \textsf{complexity measure}.
Specifically, they introduced a complexity measure called \textsf{supermodular degree}, measuring deviation from submodularity, and designed an algorithm for the welfare maximization problem with an approximation ratio that depends on this measure. 

In this work, we give the first (to the best of our knowledge) algorithm for maximizing an {\em arbitrary} monotone set function, subject to a {\boldmath $k$}\textsf{-extendible system}. This class of constraints captures, for example, the intersection of $k$-matroids (note that a single matroid constraint is sufficient to capture the welfare maximization problem).
Our approximation ratio deteriorates gracefully with the complexity of the set function and $k$. 
Our work can be seen as generalizing both the classic result of Fisher, Nemhauser and Wolsey [Mathematical Programming Study 1978], for maximizing a submodular set function subject to a $k$-extendible system, and the result of Feige and Izsak for the welfare maximization problem.
Moreover, when our algorithm is applied to each one of these simpler cases, it obtains the same approximation ratio as of the respective original work. That is, the generalization does not incur any penalty.
Finally, we also consider the less general problem of maximizing a monotone set function subject to a uniform matroid constraint, and give a somewhat better approximation ratio for it.
\end{abstract}

\thispagestyle{empty}

\newpage

\pagenumbering{arabic}
\setcounter{page}{1}

\newtoggle{normal}
\newtoggle{SetS}
\forLoop{1}{6}{ct}{
	\expandafter\newtoggle{row\arabic{ct}}
	\expandafter\newtoggle{col\arabic{ct}}
	\expandafter\newtoggle{height\arabic{ct}}
}

\newcounter{tempa}
\newcounter{tempb}

\newcommand{\DrawTwoDimensionsRectangle}[3]{
	\ifdefequal{#3}{e}{
		\toggletrue{normal}
	}
	{
		\forLoop{1}{6}{ct}{
			\ifnumequal{#1}{\arabic{ct}}{\toggletrue{col\arabic{ct}}}{\togglefalse{col\arabic{ct}}}
			\ifnumequal{#2}{\arabic{ct}}{\toggletrue{row\arabic{ct}}}{\togglefalse{row\arabic{ct}}}
		}
		
		\ifboolexpr{(togl{row4} and togl{col1}) or (togl{row5} and togl{col2}) or (togl{row6} and togl{col3}) or (togl{row1} and togl{col4})}
		{
			\togglefalse{normal}
			\toggletrue{SetS}
		}
		\ifboolexpr{togl{row1} and (togl{col1} or togl{col2} or togl{col3})}
		{
			\togglefalse{normal}
			\togglefalse{SetS}
		}
		{
			\toggletrue{normal}
		}
	}
	
	\iftoggle{normal}
	{
		\draw[fill=lightgray, very thick] (#1, #2) rectangle (1+#1, 1+#2)
	}
	{\iftoggle{SetS}
	{
		\draw[fill=black, very thick] (#1, #2) rectangle (1+#1, 1+#2)
	}
	{
		\draw[fill=white, very thick] (#1, #2) rectangle (1+#1, 1+#2)
	}}
}

\newcommand{\ExampleTwoDimensions}[1]{
\begin{tikzpicture}[scale=0.47]
    \draw [<->,very thick] (0,8) node (yaxis) {}
        |- (8,0) node (xaxis) {};
		\foreach \x in {0,1,2,3,4,5}
     		\draw node[anchor=north] at (\x + 1.5, 0) {\x};
		\foreach \y in {0,1,2,3,4,5}
     		\draw node[anchor=east] at (0, \y + 1.5) {\y};
		
		\foreach \x in {1, 2, 3, 4, 5, 6}
			\DrawTwoDimensionsRectangle{\x}{1}{#1};

		\foreach \x in {1, 2, 3}
			\foreach \y in {4, 5, 6}
				\DrawTwoDimensionsRectangle{\x}{\y}{#1};

\end{tikzpicture}
}

\newcommand{\DrawCube}[4]{
	\ifdefequal{#4}{e}{
		\toggletrue{normal}
	}
	{
		\forLoop{1}{6}{ct}{
			\ifnumequal{#1}{\arabic{ct}}{\toggletrue{col\arabic{ct}}}{\togglefalse{col\arabic{ct}}}
			\ifnumequal{#2}{\arabic{ct}}{\toggletrue{row\arabic{ct}}}{\togglefalse{row\arabic{ct}}}
			\ifnumequal{#3}{\arabic{ct}}{\toggletrue{height\arabic{ct}}}{\togglefalse{height\arabic{ct}}}
		}
		
		\ifboolexpr{(togl{height1} and togl{row2} and togl{col3}) or (togl{height6} and togl{row3} and togl{col2}) or (togl{height5} and togl{row4} and togl{col1}) or
								(togl{height4} and togl{row5} and togl{col6}) or (togl{height3} and togl{row6} and togl{col5})}
		{
			\togglefalse{normal}
			\toggletrue{SetS}
		}
		\ifboolexpr{togl{height1} and ((togl{col1} and togl{row6}) or (togl{col2} and togl{row5}))}
		{
			\togglefalse{normal}
			\togglefalse{SetS}
		}
		{
			\toggletrue{normal}
		}
	}
	
	\iftoggle{normal}
	{
		\draw[fill=lightgray, very thick] (0.3*#2+#1, -0.5+0.3*#2+#3) rectangle (1+0.3*#2+#1, 0.5+0.3*#2+#3)
																			node[fill=lightgray, very thick, draw, trapezium, trapezium left angle=45, trapezium right angle=-45, minimum height=0.3cm, scale=0.8] at (0.3*#2+#1+0.75, 0.8+0.3*#2+#3) {}
																			node[fill=lightgray, very thick, draw, trapezium, trapezium left angle=-45, trapezium right angle=45, minimum height=0.3cm, scale=0.78, rotate=90] at (0.3*#2+#1+1.3, 0.27+0.3*#2+#3) {}
	}
	{\iftoggle{SetS}
	{
		\draw[fill=black, very thick] (0.3*#2+#1, -0.5+0.3*#2+#3) rectangle (1+0.3*#2+#1, 0.5+0.3*#2+#3)
																			node[fill=black, very thick, draw, trapezium, trapezium left angle=45, trapezium right angle=-45, minimum height=0.3cm, scale=0.8] at (0.3*#2+#1+0.75, 0.8+0.3*#2+#3) {}
																			node[fill=black, very thick, draw, trapezium, trapezium left angle=-45, trapezium right angle=45, minimum height=0.3cm, scale=0.78, rotate=90] at (0.3*#2+#1+1.3, 0.27+0.3*#2+#3) {}
	}
	{
		\draw[fill=white, very thick] (0.3*#2+#1, -0.5+0.3*#2+#3) rectangle (1+0.3*#2+#1, 0.5+0.3*#2+#3)
																			node[fill=white, very thick, draw, trapezium, trapezium left angle=45, trapezium right angle=-45, minimum height=0.3cm, scale=0.8] at (0.3*#2+#1+0.75, 0.8+0.3*#2+#3) {}
																			node[fill=white, very thick, draw, trapezium, trapezium left angle=-45, trapezium right angle=45, minimum height=0.3cm, scale=0.78, rotate=90] at (0.3*#2+#1+1.3, 0.27+0.3*#2+#3) {}
	}}
}

\newcommand{\ExampleThreeDimensions}[2]{
\begin{tikzpicture}[scale=0.47]
    \draw [<->,very thick] (0,8) node (yaxis) {}
        |- (8,0) node (xaxis) {};
		\draw [->,very thick] (0, 0) -- (3,3);
		\foreach \x in {0,1,2,3,4,5}
     		\draw node[anchor=north] at (1.05 * \x + 0.7, 0) {\x};
 		\draw node[anchor=east] at (1.3, 1.3) {5};
		\foreach \z in {0,1,2,3,4,5}
     		\draw node[anchor=east] at (0, \z + 1.3) {\z};

		\foreach \y in {6, 5, 4, 3, 2, 1}
			\foreach \x in {1, 2, 3, 4, 5, 6}
				\DrawCube{\x}{\y}{1}{#1};
				
		\foreach \z in {3, 4, 5, 6}
			\foreach \y in {6, 5, 4, 3, 2, 1}
				\foreach \x in {1, #2}
					\DrawCube{\x}{\y}{\z}{#1};

		\setcounter{tempa}{7-#2}
		\setcounter{tempb}{1+#2}
		\foreach \z in {3, 4, 5, 6}
			\foreach \y in {6, \arabic{tempa}}
				\foreach \x in {\arabic{tempb}, 3, 4, 5, 6}
					\DrawCube{\x}{\y}{\z}{#1};

\end{tikzpicture}
}
\section{Introduction} \label{sc:introduction}
A set function $f$ is a function assigning a non-negative real value to every subset of a given ground set $\cN$. A set function is (non-decreasing) monotone if $f(A) \leq f(B)$ whenever $A \subseteq B \subseteq \cN$. Monotone set functions are often used to represent utility/cost functions in economics and algorithmic game theory. From a theoretical perspective, many combinatorial problems such as $k$-\textsc{Coverage}, \textsc{Max-SAT}, \textsc{Set Packing} and \textsc{Maximum Independent Set} can be represented as constrained maximization of monotone set functions.

Unfortunately, it is generally not possible to maximize a general monotone set function up to an acceptable approximation ratio, even subject to simple constraints. For example, consider the case of a partition matroid constraint, where the ground set is partitioned into subsets of size $m$, and we are allowed to pick only a single element from each subset. This problem generalizes the well-known \textsc{welfare maximization} problem\footnote{The welfare maximization problem consists of a set $\cB$ of $m$ bidders and a set $\cN$ of $n$ items. Each bidder $b \in \cB$ has a monotone utility function $u_b : 2^\cN \rightarrow \mathbb{R}^+$. The objective is to assign a disjoint set $\cN_b \subseteq \cN$ of items to each bidder in a way maximizing $\sum_{b \in \cN} u_b(\cN_b)$ (\ie, the ``social welfare'').}, and thus, cannot be generally approximated by a factor of $O(\log m / m)$, in time polynomial in $n$ and $m$ (see Blumrosen and Nisan \cite{BN09}).\footnote{This result applies to value oracles, which we use throughout this work.}

One highly studied approach to cope with this hardness is to restrict the set function. 
A common restriction is submodularity. 
A set function is submodular if the marginal contribution of an element to a set can only decrease as the set increases. 
More formally, for every two sets $A \subseteq B \subseteq \cN$ and element $u \in \cN \setminus B$, $f(B \cup \{u\}) - f(B) \leq f(A \cup \{u\}) - f(A)$. Submodular functions are motivated by many real world applications since they represent the principle of economy of scale, and are also induced by many natural combinatorial structures (\eg, the cut function of a graph is submodular). Fortunately, it has been shown that submodular functions can be maximized, up to a constant approximation ratio, subject to various constraints. 
For example, maximizing a monotone submodular function subject to the partition matroid constraint, considered above, has a $(1-1/e)$-approximation algorithm (see Calinescu, Chekuri, Pal and Vondr\'{a}k \cite{CCPV11}).

An outstanding disadvantage of imposing a restriction on the set function, such as submodularity, is that no result is implied for functions deviating from the restriction, even slightly. A more flexible approach, studied by \cite{FI13}, is to define a \textsf{complexity measure} for set functions, and then design an approximation algorithm whose guarantee depends on this measure. 
More specifically, \cite{FI13} introduced a complexity measure called \textsf{supermodular degree}.
A submodular set function has a supermodular degree of~0. The supermodular degree becomes larger as the function deviates from submodularity.
Feige and Izsak \cite{FI13} designed a $(1/(d+2))$-approximation algorithm for the welfare maximization problem,
where $d$ is the maximum supermodular degree of the bidders' utility functions. 

In a classic work, Fisher, Nemhauser and Wolsey~\cite{FNW78} introduced a $(1/(k+1))$-approx\-imation algorithm for maximizing a submodular set function subject to a {\boldmath $k$}\textsf{-extendible system} (in fact, they proved this approximation ratio even for a more general class of constraints called {\boldmath $k$}-\textsf{systems}).
In this work, we leverage their work, together with the supermodular degree, and give the first (to the best of our knowledge) algorithm for maximizing an {\em arbitrary} monotone set function subject to a $k$-extendible system. 
Note that $k$-extendible system generalizes, for example, the intersection of $k$-matroids (see Section~\ref{sc:preliminaries} for definitions), and thus, also the welfare maximization problem, which can be captured by a single matroid constraint.
As in the works of Fisher, Nemhauser and Wolsey~\cite{FNW78} and~\cite{FI13}, our algorithm is greedy.
Like in \cite{FI13}, the approximation ratio of our algorithm deteriorates gracefully with the complexity of the set function.
Interestingly, when our algorithm is applied to the simpler cases studied by \cite{FNW78} and \cite{FI13}, its approximation ratio is exactly the same as that proved by the respective work. That is, we have no penalty for generality, either for handling an arbitrary set function (as opposed to only submodular) or for handling an arbitrary $k$-extendible system (as opposed to only welfare maximization).
We also show an hardness result, depending on $k$ and the supermodular degree of the instance,
suggesting the approximation ratio of our algorithm is almost the best possible.
Finally, we consider the less general problem of maximizing a monotone set function subject to a \textsf{uniform matroid} constraint 
(see Section~\ref{sc:preliminaries}), and give a somewhat better approximation ratio for it.

\subsection{Related work} \label{sc:related_work}
Extensive work has been conducted in recent years in the area of maximizing monotone submodular set functions subject to various constraints. We mention here the most relevant results.
Historically, one of the very first problems examined was maximizing a monotone submodular set function subject to a matroid constraint. Several special cases of matroids and submodular functions were studied in \cite{CC84,HK78,HKJ80,J76,KH78}, using the greedy approach. Recently, the general problem, with an arbitrary matroid and an arbitrary submodular set function, was given a tight approximation of $(1-1/e)$ by Calinescu et al. \cite{CCPV11}. A matching lower bound is due to \cite{NW78,NWF78}.

The problem of maximizing a monotone submodular set function over the intersection of $k$ matroids was considered by Fisher et al. \cite{FNW78}, who gave a greedy algorithm with an
approximation ratio of $1/(k+1)$, and stated that their proof extends to the more general class of $k$-systems using the outline of Jenkyns \cite{J76} (the extended proof is explicitly given by Calinescu et al. \cite{CCPV11}).  For $k$-intersection systems and $k$-exchange systems, this result was improved by Lee et al. \cite{LSV10} and Feldman et al. \cite{FNSW11}, respectively, to $1/(k+\ee)$, for every constant $\ee > 0$. The improvement is based on a local search approach that exploits exchange properties of the underlying combinatorial structure. Ward \cite{W12} further improved the approximation ratio for $k$-exchange systems to $2 / (k + 3 + \ee)$ using a non-oblivious local search. However, for maximizing a monotone submodular set function over $k$-extendible independence systems (and the more general class of $k$-systems), the current best known approximation is still $1/(k+1)$ \cite{FNW78}.

Other related lines of work deal with maximization of non-monotone submodular set functions (constrained or unconstrained) (see \cite{BFNS12,FNS11,V13} for a few examples) and minimization of submodular set functions \cite{GKTW10,GLS81,IN09,IO09}.

The welfare maximization problem (or combinatorial auction) is unique in the sense that it was studied in the context of many classes of utility (set) functions, including classes generalizing submodular set functions such as sub-additive \cite{F06} and fractionally sub-additive valuations \cite{DS06}. For many of these classes a constant approximation algorithm is known \cite{BM97,DNS05,F06,FV10,GS99} assuming access to a demand oracle, which given a vector of prices returns a set of elements maximizing the welfare of a player given these prices. However, when only a value oracle is available to the algorithm (\ie, the only access the algorithm has to the utility functions is by evaluating them on a chosen set) one cannot get a better than a polynomial approximation ratio, even for fractionally sub-additive valuations~\cite{DS06}. We are not aware of any other maximization subject to a constraint problem that was studied with respect to a non-submodular objective before our work.

\section{Preliminaries} \label{sc:preliminaries}
In this work, we consider set functions $f:2^{\cN} \to \nonnegR$ that are (non-decreasing) monotone (\ie, $A \subseteq B \subseteq \cN$ implies $f(A) \leq f(B)$) and non-negative.
We denote the cardinality of $\cN$ by $n$.
For readability, given a set $S \subseteq \cN$ and an element $u \in \cN$ we use $S + u$ to denote $S \cup \{u\}$ and $S - u$ to denote $S \setminus \{u\}$. 

\subsection{Independence Systems}
Given a ground set $\cN$, a pair $(\cN, \cI)$ is called an \textsf{independence system} if $\cI \subseteq 2^\cN$ is hereditary (that is, for every set $S \in \cI$, every set $S' \subseteq S$ is also in $\cI$).
Independence systems are further divided into a few known classes. The probably most highly researched class of independence systems is the class of matroids.
\begin{definition} [Matroid]
An independence system is a \textsf{matroid} if for every two sets $S, T \in \cI$ such that $|S| > |T|$,
there exists an element $u \in S \setminus T$, such that $T + u \in \cI$. This property is called the \emph{augmentation property} of matroids.
\end{definition}

Two important types of matroids are uniform and partition matroids. In a \textsf{uniform matroid} a subset is independent if and only if its size is at most $k$, for some fixed $k$. In a \textsf{partition matroid}, the ground set $\cN$ is partitioned into multiple subsets $\cN_1, \cN_2, \dotsc, \cN_k$, and an independent set is allowed to contain at most a single element from each subset $\cN_i$.

Some classes of independence systems are parametrized by a value $k \in \mathbb{N}$ ($k \geq 1$). 
The following is a simple example of such a class.
\begin{definition} [$k$-intersection]
An independence system $(\cN, \cI)$ is a {\boldmath $k$}-\textsf{intersection} if
there exist $k$ matroids $(\cN, \cI_1) \ldots (\cN, \cI_k)$ such that a set $S \subseteq \cN$ is in $\cI$
if and only if $S \in \bigcap_{i = 1}^k \cI_i$. 
\end{definition}
The problem of $k$-dimensional matching can be represented as maximizing a linear function over a $k$-intersection independence system. In this problem, one looks for a maximum weight matching in a $k$-sided hypergraph, \ie, an hypergraph where the nodes can be partitioned into $k$ ``sides'' and each edge contains exactly one node of each side. The representation of this problem as the intersection of $k$ partition matroids consists of one matroid per ``side'' of the hypergraph. The ground set of such a matroid is the set of edges, and a subset of edges is independent if and only if no two edges in it share a common vertex of the side in question.

The following definition, introduced by Mestre \cite{M06}, describes a more general class of independence systems which is central to our work. 
\begin{definition} [$k$-extendible] \label{def:k_extendible}
An independence system $(\cN, \cI)$ is a {\boldmath $k$}\textsf{-extendible} system if for every two subsets $T \subseteq S \in \cI$ and element $u \not \in T$ for which $T \cup \{u\} \in \cI$, there exists a subset $Y \subseteq S \setminus T$ of cardinality at most $k$ for which $S \setminus Y + u \in \cI$. 
\end{definition}

The problem of maximizing a linear function over a $k$-extendible system captures the problem of $k$-set packing.\footnote{$k$-set packing is, in fact, already captured by a smaller class called $k$-exchange, defined by \cite{FNSW11}.}
In this problem, one is given a weighted collection of subsets of $\cN$, each of cardinality at most $k$, and seeks a maximum weight sub-collection of pairwise disjoint sets.
The corresponding $k$-extendible system is as follows.
The ground set contains the sets as elements.
The independent subsets are all subsets of pairwise disjoint sets.
Let us explain why this is a $k$-extendible system. Adding a set $S$ of size $k$ to an independent set $I$, 
while respecting disjointness, requires that every elements of $S$ is not contained in any other set of $I$.
On the other hand, since $I$ is independent, each element is contained in at most one set of $I$.
Therefore, in order to add $S$, while preserving disjointness, we need to remove up to $k$ sets from $I$, as required by Definition~\ref{def:k_extendible}.

The most general class of independence systems considered is given by Definition~\ref{def:k_system}.
The following definition is used to define it.
\begin{definition} [Base]
Given an independence system $(\cN, \cI)$ and a set $S \subseteq \cN$, we say that a set $B \subseteq S$ is a \textsf{base} of $S$ if $B \in \cI$ but $B + u \not \in \cI$ for every element $u \in S \setminus B$. Furthermore, if $S = \cN$, then we say that $B$ is a base of the set system itself, or simply, a base.
\end{definition}

\begin{definition} [$k$-system] \label{def:k_system}
An independence system $(\cN, \cI)$ is a {\boldmath $k$}\textsf{-system} if for every set $S \subseteq \cN$, the ratio between the sizes of the smallest and largest bases of $S$ is at most $k$.
\end{definition}

An example of a natural problem which can be represented by a $k$-system, but not by a $k$-extendible system is given by \cite{CCPV11}. The following (strict) inclusions can be shown to hold \cite{CCPV11}:
\[
	\text{matroids} \subset k\text{-intersection} \subset k\text{-extendible systems} \subset k\text{-systems}
	\enspace.
\]

\subsection{Degrees of dependency}
We use the following standard definition.
\begin{definition} [Marginal set function] \label{def:marginal}
Let $f:2^{\cN} \to \nonnegR$ be a set function and let $u \in \cN$.
The \textsf{marginal set function} of $f$ with respect to $u$, denoted by $f(u \mid \cdot)$
is defined as $f(u \mid S) \eqdef f(S + u) - f(S)$.
When the underlying set function $f$ is clear from the context, we sometimes call $f(u \mid S)$ the \textsf{marginal contribution} of $u$ to the set $S$.
For subsets $S, T \subseteq \cN$,
we also use the notation $f(T \mid S) \eqdef f(S \cup T) - f(S)$.
\end{definition}

We recall the definitions of the complexity measures used in this work (defined by \cite{FI13}).

\begin{definition} [Dependency degree] \label{def:dependency}
The \textsf{dependency degree of an element} $u \in \cN$ by $f$ is defined as the cardinality of the set $\DDS{f}{u} = \{v \in \cN \mid \exists_{S \subseteq \cN} f(u \mid S + v) \neq f(u \mid S)\}$, containing all elements whose existence in a set might affect the marginal contribution of $u$.
$\DDS{f}{u}$ is called the \textsf{dependency set} of $u$ by $f$. 
The \textsf{dependency degree of a function} $f$, denoted by $\DDF{f}$, is simply the maximum dependency degree of any element $u \in \cN$. 
Formally, $\DDF{f} = \max_{u \in \cN} | \DDS{f}{u} |$.
When the underlying set function is clear from the context, we sometimes omit it from the notations.
\end{definition}

Note that $0 \leq \DDF{f} \leq n-1$ for any set function $f$. $\DDF{f} = 0$ when $f$ is {\em linear}, and becomes larger as $f$ deviates from linearity.

\begin{definition} [Supermodular (dependency) degree] \label{def:supermodular}
The \textsf{supermodular degree of an element} $u \in \cN$ by $f$ is defined as the cardinality of the set $\DSS{f}{u} = \{v \in \cN \mid \exists_{S \subseteq \cN} f(u \mid S + v) > f(u \mid S)\}$, containing all elements whose existence in a set might \emph{increase} the marginal contribution of $u$. $\DSS{f}{u}$ is called the \textsf{supermodular dependency set} of $u$ by $f$. 
The \textsf{supermodular degree of a function} $f$, denoted by $\DSF{f}$, is simply the maximum supermodular degree of any element $u \in \cN$. Formally, $\DSF{f} = \max_{u \in \cN} | \DSS{f}{u} |$.
Again, when the underlying set function is clear from the context, we sometimes omit it from the notations.
\end{definition}

Note that $0 \leq \DSF{f} \leq \DDF{f} \leq n-1$ for any set function $f$. $\DSF{f} = 0$ when $f$ is {\em submodular}, and becomes larger as $f$ deviates from submodularity.

\subsection{Representing the input}

Generally speaking, a set function might assign $2^n$ different values for the subsets of a ground set of size $n$.
Thus, one cannot assume that any set function has a succinct (\ie, polynomial in $n$) representation.
Therefore, it is a common practice to assume access to a set function via oracles.
That is, an algorithm handling a set function often gets an access to an oracle that answers queries about the function, instead of getting an explicit representation of the function.
Arguably, the most basic type of an oracle is the \textsf{value oracle}, which given any subset of the ground set, returns the value assigned to it by the set function. Formally:
\begin{definition} [Value oracle]
\textsf{Value oracle} of a set function $f:2^{\cN} \to \nonnegR$ is the following:
\\\textsf{Input:} A subset $S \subseteq \cN$.
\\\textsf{Output:} $f(S)$.
\end{definition}

Similarly, since in a given independence system the number of independence subsets might be, generally, exponential in the size of the ground set,
it is common to use the following type of oracle.
\begin{definition} [Independence oracle]
\textsf{Independence oracle} of an independence system $(\cN, \cI)$ is the following:
\\\textsf{Input:} A subset $S \subseteq \cN$.
\\\textsf{Output:} A Boolean value indicating whether $S \in \cI$.
\end{definition}
Our algorithms use the above standard oracles. 
Additionally, in order to manipulate a function with respect to the dependency/supermodular degree,
we need a way to know what are the (supermodular) dependencies of a given element in the ground set.
Oracles doing so were introduced by \cite{FI13}, and were used in their algorithms for the welfare maximization problem.
Formally:
\begin{definition} [Dependency and Supermodular oracles]
\textsf{Dependency oracle} (\textsf{Supermodular oracle}) of a set function $f:2^{\cN} \to \nonnegR$ is the following:
\\\textsf{Input:} An element $u \in \cN$.
\\\textsf{Output:} The set $\DD{u}$ ($\DS{u}$) of the (supermodular) dependencies of $u$ with respect to $f$.
\end{definition}

\subsection{Our results}
Our main result is an algorithm for maximizing any monotone set function subject to a $k$-extendible system,
with an approximation ratio that degrades gracefully as the supermodular degree increases.
Note that our algorithm achieves the best known approximation ratios also for the more specific problems of welfare maximization \cite{FI13} and maximizing a monotone submodular function subject to a $k$-extendible system \cite{FNW78}.
\begin{theorem} \label{th:supermodular_extendible}
There exists a $(1/(k(\DSF{f} + 1) + 1))$-approximation algorithm of $\Poly(|\cN|, 2^{\DSF{f}})$ time complexity for the problem of maximizing a non-negative monotone set function $f$ subject to a $k$-extendible system.
\end{theorem}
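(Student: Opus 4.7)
The plan is to analyse a bundled greedy algorithm that generalises Fisher--Nemhauser--Wolsey's construction by letting each iteration add an entire ``supermodular bundle''. The algorithm maintains a current solution $S$, initially empty; at each iteration it enumerates every pair $(u,B)$ with $u\in\cN\setminus S$, $B\subseteq \DS{u}\setminus S$ and $S\cup B+u\in\cI$, picks the pair maximising $f(B+u\mid S)$, and sets $S\leftarrow S\cup B+u$. It continues until $S$ is a base of $(\cN,\cI)$. Because $|\DS{u}|\le \DSF{f}$, each iteration evaluates $O(|\cN|\cdot 2^{\DSF{f}})$ candidates, and there are at most $|\cN|$ iterations in total, yielding the claimed $\Poly(|\cN|,2^{\DSF{f}})$ running time.

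To analyse the approximation ratio, denote the sequence of bundles chosen by the algorithm by $A_1,\dots,A_r$, set $S_i=\bigcup_{j\le i}A_j$ and $\alpha_i=f(A_i\mid S_{i-1})$, and fix an optimal independent set $O$. I will construct inductively a chain of independent sets $O=O_0,O_1,\dots,O_r$ with $S_i\subseteq O_i$ as follows: to obtain $O_i$ from $O_{i-1}$, add the elements of $A_i$ to $O_{i-1}$ one at a time, invoking the $k$-extendibility of $(\cN,\cI)$ each time to discard up to $k$ elements of $O_{i-1}\setminus S_{i-1}$. Let $Y_i=O_{i-1}\setminus O_i$; then $Y_i\cap A_i=\emptyset$ and $|Y_i|\le k\cdot|A_i|\le k(\DSF{f}+1)$. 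Running the algorithm until $S_r$ is a base forces $O_r=S_r$, so each $o\in O$ is either an element of some $A_i$ or belongs to a unique $Y_i$.

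The proof then proceeds by two telescopings. First, monotonicity of $f$ gives $f(O_{i-1})\le f(O_{i-1}\cup A_i)=f(O_i\cup Y_i)=f(O_i)+f(Y_i\mid O_i)$, whose iteration yields $f(O)=f(O_0)\le f(S_r)+\sum_{i=1}^{r} f(Y_i\mid O_i)$. Second, within each step $i$ I telescope $f(Y_i\mid O_i)=\sum_{y\in Y_i} f(y\mid O_i\cup Y_{i,\prec y})$ along a carefully chosen order on $Y_i$ and bound each summand by $\alpha_i$, combining the supermodular-degree inequality $f(y\mid T)\le f(y\mid T\cap \DS{y})$ with the greedy maximality of $A_i$. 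To invoke the latter I use, for each $y\in Y_i$, the candidate pair $(y,B_y)$ with $B_y=\DS{y}\cap(O_{i-1}\setminus S_{i-1})$, which is valid since $S_{i-1}\cup B_y+y\subseteq O_{i-1}\in\cI$; greedy maximality then gives $f(y\mid S_{i-1}\cup B_y)\le f(B_y+y\mid S_{i-1})\le \alpha_i$. Summing the at most $k(\DSF{f}+1)$ such bounds per step yields $\sum_i f(Y_i\mid O_i)\le k(\DSF{f}+1)\,f(S_r)$, and combining with the earlier inequality gives $f(O)\le[k(\DSF{f}+1)+1]\,f(S_r)$, as claimed.

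The main obstacle is the bridging step in the inner telescoping. The context $O_i\cup Y_{i,\prec y}$ that naturally appears there may contain elements of $\DS{y}$ coming from $A_i\setminus O_{i-1}$ or from $Y_i$-elements processed earlier, yet greedy maximality only directly controls $f(y\mid S_{i-1}\cup B_y)$ for bundles $B_y\subseteq O_{i-1}\setminus S_{i-1}$. The bulk of the technical work is to choose an ordering on $Y_i$ consistent with the supermodular dependencies within $Y_i$, and to track how the inequality $f(y\mid T)\le f(y\mid T\cap \DS{y})$ collapses the conditioning set to something captured by $S_{i-1}\cup B_y$, so that each summand in the inner telescoping really does drop to $\alpha_i$ and the final factor $k(\DSF{f}+1)+1$ is achieved without loss.
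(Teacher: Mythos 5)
Your algorithm, the running-time accounting, and the skeleton of the analysis (the chain $O_0,\dots,O_r$ built via $k$-extendibility, the bound $|Y_i|\le k\cdot|A_i|\le k(\DSF{f}+1)$, and the endgame $O_r=S_r$) all coincide with the paper's proof. The genuine gap is in the per-iteration charging step, and it is not a technicality that a clever ordering of $Y_i$ can repair: the inequality you need, $f(Y_i\mid O_i)\le k(\DSF{f}+1)\,\alpha_i$, is false in general. The reason is exactly the obstacle you flag: $O_i$ already contains the freshly added bundle $A_i$, and an element $a\in A_i\cap\DS{y}$ with $a\notin O_{i-1}$ can inflate the marginal of a discarded element $y$ far beyond anything the greedy step ever evaluated, precisely because $y$ conflicts with $a$ in $\cI$, so no feasible candidate pair $(y,B)$ can include $a$. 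Concretely, take $k=1$, $\cN=\{a,y,z\}$, a partition matroid with $a,y$ in one part and $z$ free, and $f(S)=|S|+100$ if $\{a,y\}\subseteq S$, $f(S)=|S|$ otherwise; then $f$ is monotone with $\DSF{f}=1$ (here $\DS{a}=\{y\}$, $\DS{y}=\{a\}$, $\DS{z}=\varnothing$). All feasible first-step candidates have value $1$, so $\alpha_1=1$; if the tie-breaking picks $A_1=\{a\}$ and $O=\{y,z\}$, then $O_1=\{a,z\}$, $Y_1=\{y\}$, and $f(Y_1\mid O_1)=f(\{a,y,z\})-f(\{a,z\})=101$, while $k(\DSF{f}+1)\alpha_1=2$. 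So the summand $f(y\mid O_i\cup Y_{i,\prec y})$ cannot be driven down to $\alpha_i$ by any choice of order on $Y_i$.

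The paper sidesteps this by charging a smaller quantity: instead of $f(Y_i\mid O_i)$ it bounds $f(O_{i-1})-f(O_i)$ directly (Lemma~\ref{le:tradeoff_iteration_extendible}). One orders the elements of $Y_i$ arbitrarily and telescopes their removal from $O_{i-1}$ itself, so every conditioning set is a subset of $O_{i-1}$ containing $S_{i-1}$ and never contains $A_i\setminus O_{i-1}$; the supermodular-degree inequality then shrinks each conditioning set to $S_{i-1}$ together with a subset of $\DS{v_j}\cap O_{i-1}$, which is a feasible candidate pair (being contained in the independent set $O_{i-1}$) and hence is bounded by $\alpha_i$, and finally monotonicity is used via $O_{i-1}\setminus Y_i\subseteq O_i$. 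Summing $f(O_{i-1})-f(O_i)\le k(\DSF{f}+1)\alpha_i$ over the iterations gives $f(O)-f(S_r)\le k(\DSF{f}+1)f(S_r)$, i.e., the claimed ratio; note that in the example above the correct quantity $f(O_0)-f(O_1)$ equals $0$ and is indeed controlled. In short, replace your outer telescoping, which passes through $f(O_{i-1}\cup A_i)=f(O_i)+f(Y_i\mid O_i)$, by the direct telescoping of $f(O_{i-1})-f(O_i)$; with that change your argument becomes the paper's proof, and no ordering subtleties remain.
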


Note that an exponential dependence in $\DSF{f}$ is unavoidable, since, otherwise, we would get a polynomial time ($n-1$)-approximation algorithm for maximizing any set function subject to a $k$-extendible system.\footnote{To see that this cannot be done, consider the problem of maximizing the following family of set functions subject to a uniform $k=n/2$ matroid constraint. 
Each function in the family has a value of $1$ for sets strictly larger than $k$ and for a single set $A$ of size $k$. For all other sets the function assigns the value of $0$ (observe that $\DS{u} = \cN - u$ for every element $u \in \cN$, hence, the supermodular oracle is useless in this example). Given a random member of the above family, a deterministic algorithm using a polynomial number of oracle queries can determine $A$ only with an exponentially diminishing probability, and thus, will also output a set of value $1$ with such an exponentially diminishing probability. Using Yao's principle, this implies an hardness also for randomized algorithms.}

We show a similar result also for the dependency degree, providing a better approximation ratio when $\DDF{f} = \DSF{f}$.

\begin{theorem} \label{th:dependency_extendible}
There exists a $(1/(k(\DDF{f} + 1)))$-approximation algorithm of $\Poly(|\cN|, 2^{\DDF{f}})$ time complexity for the problem of maximizing a non-negative monotone set function $f$ subject to a $k$-extendible system.
\end{theorem}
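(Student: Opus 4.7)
The plan is to adapt the proof of Theorem~\ref{th:supermodular_extendible} by replacing the supermodular dependency set $\DS{u}$ with the full dependency set $\DD{u}$ everywhere. The algorithm is therefore the natural bundle-greedy: it maintains an independent set $S$, and at each iteration enumerates all pairs $(u, T)$ with $u \in \cN \setminus S$ and $T \subseteq \DD{u} \setminus S$ for which $S \cup T \cup \{u\} \in \cI$, selects the pair of maximum marginal $f(T \cup \{u\} \mid S)$, appends $T \cup \{u\}$ to $S$, and terminates when no valid pair exists. Since $|\DD{u}| \leq \DDF{f}$, each iteration examines $O(|\cN| \cdot 2^{\DDF{f}})$ pairs, yielding the claimed $\Poly(|\cN|, 2^{\DDF{f}})$ running time.

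The critical extra structure supplied by the dependency degree (as opposed to the supermodular degree used in Theorem~\ref{th:supermodular_extendible}) is the two-sided \emph{equality} $f(u \mid A) = f(u \mid A \cap \DD{u})$, valid for every set $A$ and a direct consequence of Definition~\ref{def:dependency}. This equality is strictly stronger than the one-sided inequality available for $\DS{u}$, and it is what allows the analysis to absorb the contributions of all OPT elements (including those lying in $S$) into a single charging sum, saving the additive $+1$ present in the denominator of Theorem~\ref{th:supermodular_extendible}.

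For the analysis, write $P_i = T_i \cup \{u_i\}$ for the bundle added at iteration $i$, let $S_i = \bigcup_{j \leq i} P_j$, and let $S$ and $O$ denote the final greedy solution and an optimum. I would fix an ordering $o_1, o_2, \ldots$ of $O$ designed to align with the greedy iterations, telescope
\[
	f(O) = \sum_j f(o_j \mid O_{<j}) = \sum_j f(o_j \mid O_{<j} \cap \DD{o_j})
\]
by the dependency identity, and charge each term on the right to an iteration $i(o_j)$: for $o_j \in S$, the iteration in which it was added; for $o_j \notin S$, the first iteration satisfying $S_{i(o_j)-1} + o_j \in \cI$ but $S_{i(o_j)} + o_j \notin \cI$, which exists by termination. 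Applying $k$-extendibility at iteration $i(o_j)$ for $o_j \notin S$ (with $T = S_{i(o_j)-1}$, $S = S_{i(o_j)}$, $u = o_j$) produces a witness $Y(o_j) \subseteq P_{i(o_j)}$ of size at most $k$, and a counting argument over these witnesses bounds the number of OPT elements charged to any single iteration $i$ by $k \cdot |P_i| \leq k(\DDF{f} + 1)$.

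The main obstacle will be verifying the per-charge inequality $f(o_j \mid O_{<j} \cap \DD{o_j}) \leq f(P_{i(o_j)} \mid S_{i(o_j)-1})$. The strategy is to exhibit a pair $(o_j, T)$ with $T \subseteq \DD{o_j} \setminus S_{i(o_j)-1}$ that is a valid competitor in iteration $i(o_j)$ and for which $(S_{i(o_j)-1} \cup T) \cap \DD{o_j} = O_{<j} \cap \DD{o_j}$; the dependency identity then turns the marginal $f(o_j \mid S_{i(o_j)-1} \cup T)$ into the desired left-hand side, and monotonicity together with greedy's maximality yields the inequality $f(T \cup \{o_j\} \mid S_{i(o_j)-1}) \leq f(P_{i(o_j)} \mid S_{i(o_j)-1})$. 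Realizing the set equality forces us to order $O$ so that every OPT element lying in $S_{i(o_j)-1} \cap \DD{o_j}$ precedes $o_j$, and to handle any residual non-OPT elements of $S_{i(o_j)-1} \cap \DD{o_j}$ by a careful choice of $T$ — this is precisely where the full dependency identity (and not merely a one-sided inequality) is indispensable. Summing the per-charge inequalities over $j$ yields $f(O) \leq k(\DDF{f} + 1) \cdot f(S)$, which is the claimed approximation ratio.
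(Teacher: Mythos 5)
Your plan (swap $\DS{\cdot}$ for $\DD{\cdot}$ and exploit the two-sided identity $f(u \mid A) = f(u \mid A \cap \DD{u})$) is in the right spirit and the running-time argument is fine, but the analysis you sketch has two genuine gaps, and your algorithm itself differs from the one the paper analyzes (Algorithm~\ref{alg:ExtendibleSystemGreedyDependency}) in a way that matters for the exact constant. The first gap is the counting step: you charge each $o \in OPT \setminus S$ to the first iteration $i$ with $S_{i-1} + o \in \cI$ but $S_i + o \notin \cI$, and assert that at most $k \cdot |P_i|$ elements of $OPT$ are charged to iteration $i$. This does not follow from $k$-extendibility and is false as a structural claim even for matroids ($k = 1$): in the linear matroid on $a=(1,0,0)$, $b=(0,1,0)$, $o_1=(1,1,0)$, $o_2=(1,2,0)$, $o_3=(0,0,1)$, with $S_1 = \{a\}$ and $P_2 = \{b\}$, both $o_1$ and $o_2$ are individually addable to $S_1$ and individually un-addable to $S_2$, yet $\{o_1,o_2,o_3\}$ is independent --- two charges against a budget of $k|P_2| = 1$. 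Individual addability is the wrong invariant; the paper's hybrid sets $H_i$ (maximum independent subsets of $H_{i-1} \cup S_i$ containing $S_i$) are precisely the device that makes a per-iteration bound of $k \cdot |S_i \setminus H_{i-1}| \leq k(\DDF{f}+1)$ true (Lemma~\ref{le:removed_items_count_extendible}), and note that the hybrid may drop an $OPT$ element at an iteration where it is still individually addable, which is how the damage gets spread correctly.

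The second gap is the per-charge inequality: you need the set equality $(S_{i-1} \cup T) \cap \DD{o_j} = O_{<j} \cap \DD{o_j}$, but this is unattainable whenever $S_{i-1}$ already contains an element of $\DD{o_j}$ outside $OPT$, since $T$ can only add elements, never remove them; and because $f$ is not submodular, the marginal of $o_j$ conditioned on the larger set can be strictly \emph{smaller}, so there is no one-sided fallback (this is exactly the failure mode the dependency identity does not repair). The paper sidesteps this by conditioning on supersets of $S_{i-1}$ (the hybrids) and, crucially, by changing the greedy criterion: Algorithm~\ref{alg:ExtendibleSystemGreedyDependency} maximizes the conditional single-element marginal $f(u \mid D \cup S_{i-1})$, not the bundle marginal $f(D + u \mid S_{i-1})$ that your algorithm uses. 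With your bundle criterion, the hybrid argument only yields $1/(k(\DDF{f}+1)+1)$, i.e., the ratio of Theorem~\ref{th:supermodular_extendible}; the paper shaves the extra $+1$ in Lemma~\ref{le:tradeoff_iteration_extendible_dependency} by a case analysis when exactly $k(\DDF{f}+1)$ hybrid elements are removed, in which case the added bundle is disjoint from $H_{i-1}$ and the dependency identity shows $u_i$'s conditional marginal is already recovered inside $f(H_i)$ --- a step that uses the modified criterion (the analogous claim for the bundle marginal, $f(P_i \mid \bar{H} \cup S_{i-1}) \geq f(P_i \mid S_{i-1})$, can fail for non-submodular $f$). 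So as written, neither the counting bound nor the charging inequality is established, and the claimed ratio $1/(k(\DDF{f}+1))$ is not reached by your argument.
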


On the other hand, we give tight examples for both algorithms guaranteed by Theorems~\ref{th:supermodular_extendible} and \ref{th:dependency_extendible}, and show the following hardness result via a reduction from $k$-dimensional matching.

\begin{theorem} \label{th:hardness}
No polynomial time algorithm for maximizing a non-negative monotone set function $f$ subject to a $k$-intersection independence system has an approximation ratio within $O\left(\frac{\log k + \log \DDF{f}}{k\DDF{f}}\right)$, unless $\cP = \NP$. This is true even if $k$ and $\DDF{f}$ are considered constants.
\end{theorem}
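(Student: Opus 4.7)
The plan is to prove Theorem~\ref{th:hardness} via an approximation-preserving polynomial-time reduction from $k'$-dimensional matching to our problem, with $k' = k(\DDF{f}+1)$. Combined with the Hazan--Safra--Schwartz inapproximability of $k'$-dimensional matching (no polynomial-time approximation ratio $\Omega(\log k' / k')$ is achievable unless $\cP = \NP$, even for sufficiently large constant $k'$), substituting $k' = kd$ with $d = \DDF{f}+1$ gives hardness within $O(\log(kd)/(kd)) = O\bigl((\log k + \log \DDF{f})/(k\DDF{f})\bigr)$, as required.

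For the construction, given a $kd$-partite, $kd$-uniform hypergraph $H$, I would partition its $kd$ sides into $k$ groups $G_1, \dotsc, G_k$ of $d$ sides each, and denote the $i$-th side of $G_g$ by $D_{g,i}$. Take the ground set to be $\cN = E(H) \times [d]$ (thought of as $d$ copies of each hyperedge), and for every $g \in [k]$ let $M_g$ be the partition matroid on $\cN$ whose classes are $C_{g,i,v} = \{(e,i) : e \text{ uses } v \text{ in side } D_{g,i}\}$, indexed by $i \in [d]$ and $v \in V(D_{g,i})$, each of capacity $1$. Let the constraint be $\bigcap_{g=1}^k M_g$, a $k$-intersection system by construction, and let the objective be
\[
    f(S) \;=\; \sum_{e \in E(H)} w(e) \cdot \mathbf{1}\bigl[\{(e,1), \dotsc, (e,d)\} \subseteq S\bigr] \enspace.
\]
Two verifications then remain. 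First, $f$ is monotone and non-negative, and the marginal of $(e,i)$ is non-zero only when every other copy of $e$ lies in $S$, so $\DDS{f}{(e,i)} = \{(e,i') : i' \neq i\}$ has cardinality $d-1$ and hence $\DDF{f} = d-1$ as required. Second, for any feasible $S$ the set $T(S) = \{e : (e,i) \in S \text{ for all } i \in [d]\}$ is a matching in $H$ with $w(T(S)) = f(S)$: feasibility in $M_g$ forces at most one element of each $C_{g,i,v}$ into $S$, so at most one edge of $T(S)$ uses $v$ in $D_{g,i}$; conversely any matching $T \subseteq E(H)$ lifts to a feasible set $\{(e,i) : e \in T, \, i \in [d]\}$ of $f$-value $w(T)$. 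Consequently the two optima coincide, and any $\alpha$-approximate feasible $S$ immediately yields an $\alpha$-approximate matching $T(S)$.

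The main obstacle is designing the construction itself: one must encode $kd$ matching-style constraints using only $k$ matroids together with a set function of dependency degree $d - 1$. The resolution is to use the $[d]$-indexed copies to ``split'' the $d$ sides of each group across $d$ distinct classes of the single partition matroid $M_g$, and then let $f$ re-couple the copies by rewarding an edge only when all of its copies are taken; this coupling is exactly what produces dependency degree $d - 1$, and no more, since the indicator contributed by an edge $e$ involves no element outside $\{(e,1), \dotsc, (e,d)\}$. Everything else is mechanical: the value, independence, and dependency oracles for the constructed instance are trivial to implement in polynomial time from $H$, and invoking Hazan--Safra--Schwartz with $k' = kd$---a constant whenever $k$ and $\DDF{f}$ are---delivers the ``even when $k$ and $\DDF{f}$ are considered constants'' clause of the theorem.
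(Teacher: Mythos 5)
Your proof is correct and follows essentially the same route as the paper's: a reduction from $k(\DDF{f}+1)$-dimensional matching (invoking the Hazan et al.\ hardness, which holds even for constant dimension), in which each hyperedge is split into $\DDF{f}+1$ ground-set elements that the objective re-couples by rewarding only complete edges (giving dependency degree exactly $\DDF{f}$), while the sides are distributed among $k$ partition matroids whose intersection encodes disjointness. The only difference is in the gadget's bookkeeping: the paper uses the partial edges $e(j)$ obtained by restricting each edge to blocks of $k$ sides, whereas you use labeled copies $(e,i)$ with matroid classes indexed by (copy, vertex) --- an isomorphic encoding that, if anything, is slightly cleaner since distinct edges can never share a copy.
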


Note that since $\DSF{f} \leq \DDF{f}$ for any set function $f$, the hardness claimed in Theorem~\ref{th:hardness} holds also in terms of $\DSF{f}$.

Finally, we also consider the special case of a uniform matroid constraint, \ie, where one is allowed to pick an arbitrary subset of $\cN$ of size at most $k$. For this simpler constraint we present an algorithm whose approximation ratio has a somewhat better dependence on $\DSF{f}$.\footnote{The guarantee of Theorem~\ref{th:supermodular_uniform} is indeed an improvement over the guarantee of Theorem~\ref{th:dependency_extendible} for $1$-extendible system, because for every $x \geq 1$, $1 - e^{-1/x} \geq 1 - (1 - x^{-1} + x^{-2}/2) = x^{-1}(1 - x^{-1}/2) \geq x^{-1}/(1 + x^{-1}) = 1 / (x + 1)$.}

\begin{theorem} \label{th:supermodular_uniform}
There exists a $(1 - e^{-1 / (\DSF{f} + 1)})$-approximation algorithm of $\Poly(|\cN|, 2^{\DSF{f}})$ time complexity for the problem of maximizing a non-negative monotone set function $f$ subject to a uniform matroid constraint.
\end{theorem}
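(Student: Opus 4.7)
The plan is to generalize the classical greedy algorithm for cardinality-constrained submodular maximization by replacing ``best next element'' with ``best next bundle'' consisting of an element $u$ together with a subset $B \subseteq \DS{u}$ of its supermodular-dependent elements. Because $|\DS{u}| \leq d := \DSF{f}$, each bundle has size at most $d+1$, and there are only $|\cN| \cdot 2^{d}$ candidate bundles per iteration, which yields the claimed $\Poly(|\cN|, 2^{\DSF{f}})$ running time. Starting from $S_0 = \emptyset$, at iteration $i$ the algorithm selects, among all pairs $(u, B)$ with $u \notin S_{i-1}$, $B \subseteq \DS{u} \setminus S_{i-1}$, and $|(B \cup \{u\}) \setminus S_{i-1}| \leq k - |S_{i-1}|$, the pair maximizing the marginal $f(B \cup \{u\} \mid S_{i-1})$, and then commits $S_i = S_{i-1} \cup B \cup \{u\}$. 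It halts when the cardinality reaches $k$ or no legal bundle remains.

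The heart of the analysis is the following covering bound. Let $O$ be an optimal solution, order $O \setminus S_{i-1} = \{o_1, \ldots, o_m\}$ arbitrarily, and set $B_j := \DS{o_j} \cap \{o_1, \ldots, o_{j-1}\}$, so that $|B_j \cup \{o_j\}| \leq d+1$. Since no element outside $\DS{o_j}$ can increase the marginal of $o_j$,
\[
f(o_j \mid S_{i-1} \cup \{o_1,\ldots,o_{j-1}\}) \leq f(o_j \mid S_{i-1} \cup B_j) \leq f(B_j \cup \{o_j\} \mid S_{i-1}),
\]
and telescoping over $j$ gives $\sum_j f(B_j \cup \{o_j\} \mid S_{i-1}) \geq f(O \cup S_{i-1}) - f(S_{i-1}) \geq f(O) - f(S_{i-1}) =: g_{i-1}$. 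By averaging, at least one of the $m = |O \setminus S_{i-1}| \leq k$ covering bundles has marginal at least $g_{i-1}/m \geq g_{i-1}/k$, so the greedy's choice satisfies $\Delta_i := f(S_i) - f(S_{i-1}) \geq g_{i-1}/k$, which rearranges to $g_i \leq (1 - 1/k)\, g_{i-1}$.

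To extract the claimed ratio, let $T$ be the number of iterations executed. Each chosen bundle contributes at most $d+1$ fresh elements, and the algorithm always exhausts its budget because a single-element bundle fits whenever $|S_{i-1}| < k$. Hence $|S_T| = k$ and $T \geq k/(d+1)$, giving
\[
g_T \leq (1 - 1/k)^T f(O) \leq (1 - 1/k)^{k/(d+1)} f(O) \leq e^{-1/(d+1)} f(O),
\]
where the last inequality follows from $(1 - 1/k)^k \leq 1/e$. Therefore $f(S_T) \geq (1 - e^{-1/(\DSF{f}+1)}) f(O)$, as required.

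The main obstacle I anticipate lies in the very last iterations, once the remaining budget drops below $d+1$: some of the covering bundles $B_j \cup \{o_j\}$ may fail to fit within the remaining budget, which a priori could weaken the per-iteration inequality $\Delta_i \geq g_{i-1}/k$. Making this airtight, either by truncating the covering argument to the bundles that still fit and charging the lost mass to the slack in the $1 - x \leq e^{-x}$ step, or by restructuring the algorithm so that the first $\lfloor k/(d+1) \rfloor$ iterations each commit a full size-$(d+1)$ bundle and the residual budget is handled separately, is the technical step requiring the most care. Everything else is a direct extension of the Nemhauser--Wolsey--Fisher greedy analysis, and in the case $\DSF{f} = 0$ the argument recovers the classical $(1 - 1/e)$ bound verbatim.
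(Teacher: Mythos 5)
Your greedy-over-bundles algorithm and the covering bound (ordering $OPT \setminus S_{i-1}$, taking $B_j = \DS{o_j} \cap \{o_1,\dots,o_{j-1}\}$, and telescoping) are exactly the engine of the paper's analysis of its ``Simple Greedy'' variant. But the obstacle you flag at the end is not a routine technicality to be tidied up --- it is precisely the point where the paper needs a new idea, and as written your proposal does not close it. The per-iteration inequality $\Delta_i \geq g_{i-1}/k$ is only justified while the remaining budget is at least $d+1$, i.e.\ for roughly $\lfloor k/(d+1)\rfloor$ iterations, and when $d+1$ does not divide $k$ this yields only $1 - e^{-1/(d+1)} - O(1/k)$, whereas the theorem claims the clean ratio with no such loss (and $k$ may be as small as $d+2$). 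Your first repair idea --- charging the shortfall to the slack in $1-x \leq e^{-x}$ --- does not go through quantitatively: the slack $(e^{-1/(d+1)} - (1-1/k)^{k/(d+1)})$ is of order $e^{-1/(d+1)}/(k(d+1))$, while losing even one iteration of decay costs order $e^{-1/(d+1)}/k$, larger by a factor of about $d+1$. Your second idea --- run $\lfloor k/(d+1)\rfloor$ full-bundle iterations and handle the residual budget in a final partial phase --- is exactly the paper's Algorithm~3, whose approximation ratio the authors explicitly could not pin down; they state it as an open problem, so this route is not known to deliver the claimed bound.

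The paper's actual fix is a guessing step you are missing. Fix an optimal $OPT$ of size $k$, let $d' = \max_{u \in OPT} |\DS{u} \cap OPT|$ (note $d' \leq d$), and guess, by exhaustive search over $O(n d \cdot 2^{d})$ possibilities, the value $d'$, an element $u^*$ attaining it, and the set $C = \DS{u^*} \cap OPT$. Seed the solution with $S_0 \subseteq C$ of size $r = k \bmod (d'+1)$, so that $S_0 \subseteq OPT$ and the residual budget $k' = k-r$ is exactly divisible by $d'+1$; then run $k'/(d'+1)$ iterations, each choosing a pair $(u_i, D^+_{best}(u_i))$ with $|D^+_{best}(u_i)| \leq d'$ maximizing the bundle marginal. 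In the covering argument the candidate pairs become $(v_j, \DS{v_j} \cap OPT)$, which have size at most $d'$ by the definition of $d'$, so feasibility never interferes, and the recursion $f(S_i) \geq (1-1/k')^i f(S_0) + [1-(1-1/k')^i] f(OPT)$ with exactly $k'/(d'+1)$ iterations gives $1 - e^{-1/(d'+1)} \geq 1 - e^{-1/(d+1)}$ with no additive loss. Without this (or some substitute for it), your argument proves a weaker statement than the theorem.
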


\begin{theorem} \label{th:hardness_uniform}
No polynomial time algorithm for maximizing a non-negative monotone set function $f$ subject to a uniform matroid constraint
has a constant approximation ratio, unless SSE (Small-Set Expansion Hypothesis)\footnote{See \cite{RS10, RST12} and Section~\ref{ap:uniform} for definitions and more information about SSE.} is false. 
\end{theorem}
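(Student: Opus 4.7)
The plan is a one-step reduction from the \emph{Densest-$k$-Subgraph} problem (DkS) composed with an SSE-based inapproximability of DkS. Given a DkS instance $(G = (V, E), k)$, I would set the ground set $\cN = V$, impose the uniform matroid constraint of rank $k$, and take as the objective the function $f : 2^V \to \nonnegR$ defined by $f(S) = |\{e \in E : e \subseteq S\}|$. This $f$ is non-negative, monotone (adding a vertex can only reveal new induced edges), and polynomial-time evaluable given $G$. By construction, the $k$-subset of $V$ maximizing $f$ is exactly a densest $k$-subgraph of $G$, and more generally, any $\alpha$-approximate solution to this constrained maximization instance is an $\alpha$-approximate densest $k$-subgraph. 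Hence a constant-factor approximation for monotone maximization under a uniform matroid constraint would immediately yield one for DkS. Note that the $f$ produced by this reduction has large supermodular degree, which is exactly what the theorem allows (it rules out constant-factor approximation for arbitrary monotone $f$, not just those of small $\DSF{f}$).

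It then suffices to invoke an SSE-based inapproximability result for DkS. The Small-Set Expansion Hypothesis of Raghavendra and Steurer~\cite{RS10} asserts, roughly, the hardness of distinguishing graphs that contain a small vertex set of nearly zero edge expansion from graphs in which every similarly sized set has nearly full edge expansion. Through the reductions of Raghavendra, Steurer, and Tulsiani~\cite{RST12}, this dichotomy translates into a gap between DkS instances whose optimal $k$-subgraph has edge density essentially $1$ and those in which every $k$-subgraph has arbitrarily small edge density, which is precisely constant-factor inapproximability of DkS under SSE. Composing with the reduction above yields Theorem~\ref{th:hardness_uniform}.

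The reduction itself is entirely routine; the real content lies in the SSE-to-DkS step, which is the main obstacle. In writing out the proof, I would either cite an off-the-shelf DkS-gap statement under SSE or, if none is available in the precise form needed, derive it directly from the small-set expansion gap by standard graph-product manipulations. In either case, the formal statement of SSE and the quantitative details are naturally deferred to the appendix Section~\ref{ap:uniform} referenced by the theorem's footnote.
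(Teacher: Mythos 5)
Your reduction is essentially the paper's: the objective $f(S)$ counting the edges induced by $S$ under a cardinality-$k$ constraint is exactly the unit-weight hypergraph-representation function the paper builds from the Gap Small-Set Expansion instance, with $k = \delta|V|$. The only difference is that you interpose Densest-$k$-Subgraph and defer the SSE-to-density-gap step to \cite{RST12} or graph-product arguments, whereas the paper simply observes directly that $\phi_G(\delta) \leq \eta$ versus $\phi_G(\delta) \geq 1-\eta$ already forces an arbitrarily large constant gap in the maximum number of edges induced by a set of size $\delta|V|$, so a constant-factor approximation would decide the promise problem---no intermediate problem or product construction is needed.
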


\section{\texorpdfstring{$k$}{k}-Extendible system} \label{sc:extendible}
In this section we prove Theorems~\ref{th:supermodular_extendible} and~\ref{th:hardness}. The proof of Theorem~\ref{th:dependency_extendible} uses similar ideas and is deferred to Appendix~\ref{ap:dependency_extendile}, for readability.

\subsection{Algorithm for \texorpdfstring{$k$}{k}-extendible system (Proof of Theorem~\ref{th:supermodular_extendible})}
We consider in this section Algorithm~\ref{alg:ExtendibleSystemGreedy}, and prove it fulfills the guarantees of Theorem~\ref{th:supermodular_extendible}. 

\begin{algorithm}[h!t]
\caption{\textsf{Extendible System Greedy}$(f, \cI)$} \label{alg:ExtendibleSystemGreedy}
Initialize: $S_0 \leftarrow \varnothing$, $i \leftarrow 0$.\\
\While{$S_i$ is not a base}
{
    $i \leftarrow i + 1$.\\
    Let $u_i \in \cN \setminus S_{i - 1}$ and $D^+_{best}(u_i) \subseteq \DS{u_i}$ be a pair of an element and a set maximizing $f(D^+_{best}(u_i) + u_i ~|~ S_{i - 1})$ among all pairs obeying $S_{i - 1} \cup D^+_{best}(u_i) + u_i \in \cI.$ \label{ln:selection_extendible_system}\\
    $S_i \leftarrow S_{i - 1} \cup D^+_{best}(u_i) + u_i$.
}
Return $S_i$.
\end{algorithm}
First, let us give some intuition.
Let $APX$ be an approximate solution and let $OPT$ be an arbitrary optimal solution.
Originally, before the algorithm adds any elements to $APX$, 
it still can be that it chooses to add all elements of $OPT$ (together) to $APX$, and get an optimal solution.
At each iteration, when adding elements to $APX$, this possibility might get ruined for some elements of $OPT$. 
If we want to keep the invariant that all elements of $OPT$ can be added to $APX$, then we might 
have to discard some elements of $OPT$. 
This discard potentially decreases the value of $OPT$, and therefore, can be seen as the damage incurred by the iteration.
Note that by definition of a $k$-extendible system, we do not have to discard more than $k$ elements for every element we add.
That is, at every iteration, only up to $k(\DSF{F} + 1)$ elements must be discarded.
Therefore, if we manage to upper bound the damage of discarding a single element by the benefit of the allocation at the same iteration, we get the desired bound.\footnote{The other additive~1 in the denominator of the approximation ratio comes from the fact that by $OPT$'s value we actually mean its marginal contribution to $APX$. In this sense, addition of elements to $APX$ also might reduce the value of $OPT$.}
Recall that the supermodular dependencies of an element are exactly the elements that may increase its marginal value.
Therefore, when discarding an element from $OPT$, the maximum damage is bounded by the marginal value of this element with respect to its supermodular dependencies in $OPT$. But, as any subset of $OPT$ can be added to $APX$, the greedy choice of Algorithm~\ref{alg:ExtendibleSystemGreedy} explicitly takes into account the possibility of adding this element and its supermodular dependencies to $APX$. If another option is chosen, it must have at least the same immediate benefit, as wanted.

We now give a formal proof for Theorem~\ref{th:supermodular_extendible}.
Let us begin with the following observation.
\begin{observation} \label{ob:find_base}
Whenever $S_i$ is not a base, there exists an element $u \in \cN \setminus S_i$ for which $S_i \cup \varnothing + u \in \cI$ (note that $\varnothing \subseteq \DS{u}$). Hence, Algorithm~\ref{alg:ExtendibleSystemGreedy} always outputs a base.
\end{observation}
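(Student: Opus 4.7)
The observation is essentially a book-keeping fact, so my plan is to verify (a) that the selection step on line~\ref{ln:selection_extendible_system} is always well defined whenever the loop executes, and (b) that the loop must terminate, at which point $S_i$ is a base by the loop condition.

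First I would establish, by a short induction on $i$, that $S_i \in \cI$ throughout the execution. The base case $S_0 = \varnothing$ is in $\cI$ by heredity of the independence system. For the inductive step, line~\ref{ln:selection_extendible_system} explicitly requires $S_{i-1} \cup D^+_{best}(u_i) + u_i \in \cI$, so $S_i \in \cI$ by construction. Given this, suppose the loop condition holds at iteration $i$, i.e.\ $S_{i-1}$ is not a base of~$\cN$. By the definition of a base, since $S_{i-1} \in \cI$ but $S_{i-1}$ is not a base, there must exist some $u \in \cN \setminus S_{i-1}$ with $S_{i-1} + u \in \cI$. Taking the empty set as the accompanying subset of $\DS{u}$ (which is legal because $\varnothing \subseteq \DS{u}$ trivially), the pair $(u, \varnothing)$ is a feasible candidate for the maximization on line~\ref{ln:selection_extendible_system}. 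Hence the selection step always succeeds when the loop executes, which is exactly the first half of the observation.

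For the second half, the loop terminates because $|S_i| = |S_{i-1}| + 1 + |D^+_{best}(u_i)| \geq |S_{i-1}| + 1$, so the cardinality strictly increases and is bounded by $n = |\cN|$. Thus after at most $n$ iterations the loop condition fails, meaning $S_i$ is a base at termination, and this is precisely what is returned.

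There is really no hard part here: the only potential subtlety is remembering to invoke heredity to keep $S_i \in \cI$ as an invariant, so that the ``not a base'' hypothesis can be fed into the definition of a base to produce the extending element. No property specific to $k$-extendibility is needed for the observation itself; that structure will only enter later when bounding the approximation ratio.
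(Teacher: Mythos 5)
Your proof is correct and fills in exactly the reasoning the paper treats as immediate (the paper states this observation without proof, relying on the fact that each $S_i$ is independent by construction, that a non-base independent set admits an extending element by the definition of a base, and that the pair $(u,\varnothing)$ is then a legal candidate on Line~\ref{ln:selection_extendible_system}). Nothing is missing; the invariant $S_i \in \cI$ and the termination argument are precisely the right bookkeeping.
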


Throughout this section, we denote $d = \DSF{f}$.
Our proof is by a hybrid argument. That is, we have a sequence of hybrid solutions, one per iteration, where the first hybrid contains an optimal solution (and hence, has an optimal value), and the last hybrid is our approximate solution.\footnote{Actually, the last hybrid is defined as {\em containing} our approximate solution, but, as our approximate solution is a base, the hybrid must be exactly equal to it.}
Roughly speaking, we show the following: 
\begin{enumerate}
\item By adding each element to the approximate solution, we do not lose more than $k$ elements of the iteration`s hybrid 
(note that we add to our solution at most $d+1$ elements at any given iteration). 
This is formalized in Lemma~\ref{le:removed_items_count_extendible},
and the proof is based on Definition~\ref{def:k_extendible} ($k$-extendible system).
\item The damage from losing an element of an iteration's hybrid is bounded by the profit the algorithm gains at that iteration.
This is formalized in Lemma~\ref{le:tradeoff_iteration_extendible},
and the proof is based on Definition~\ref{def:supermodular} (supermodular degree).
\end{enumerate}
In conclusion, we show that when moving from one hybrid to the next, we lose no more than $k(d+1)$ times the profit at the respective iteration.

Let us formalize the above argument.
Let $\ell$ be the number of iterations performed by Algorithm~\ref{alg:ExtendibleSystemGreedy}, \ie, $\ell$ is the final value of $i$. 
We recursively define a series of $\ell + 1$ hybrid solutions as follows.
\begin{itemize}
    \item $H_0$ is a base containing $OPT$. By monotonicity, $f(H_0) = f(OPT)$.
    \item For every $1 \leq i \leq \ell$, $H_i$ is a maximum size independent subset of $H_{i - 1} \cup S_i$ containing $S_i$.
\end{itemize}

\begin{lemma} \label{le:removed_items_count_extendible}
For every iteration $1 \leq i \leq \ell$, $|H_{i - 1} \setminus H_i| \leq k \cdot |S_i \setminus H_{i - 1}| \leq k(d + 1)$.
\end{lemma}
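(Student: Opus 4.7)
The plan is to prove the two inequalities separately. The right-hand inequality $k\cdot|S_i\setminus H_{i-1}|\le k(d+1)$ is immediate: an easy induction on $i$ (using $S_0=\varnothing$ and the fact that each $H_j$ is defined to contain $S_j$) shows $S_{i-1}\subseteq H_{i-1}$, so $S_i\setminus H_{i-1}\subseteq S_i\setminus S_{i-1}=D^+_{best}(u_i)+u_i$, whose cardinality is at most $d+1$ by definition of the supermodular degree.

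For the left-hand inequality, my plan is to exhibit some independent set $H'\subseteq H_{i-1}\cup S_i$ containing $S_i$ with $|H_{i-1}\setminus H'|\le k\cdot|S_i\setminus H_{i-1}|$. Since $H_i$ is defined as a \emph{maximum-size} such set, and since any independent $X$ with $S_i\subseteq X\subseteq H_{i-1}\cup S_i$ satisfies the identity $|H_{i-1}\setminus X|=|H_{i-1}|+|S_i\setminus H_{i-1}|-|X|$ (a one-line cardinality calculation that uses $X\setminus H_{i-1}=S_i\setminus H_{i-1}$), the maximality $|H_i|\ge|H'|$ will transfer the bound from $H'$ to $H_i$ automatically.

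To construct $H'$, I enumerate $S_i\setminus H_{i-1}=\{s_1,\dots,s_m\}$ and build a chain $H_{i-1}=H^0,H^1,\dots,H^m=H'$ via $H^j=(H^{j-1}\setminus Y^j)\cup\{s_j\}$, where $Y^j$ is supplied by Definition~\ref{def:k_extendible} ($k$-extendibility) applied with $S=H^{j-1}$, $T=S_{i-1}\cup\{s_1,\dots,s_{j-1}\}$, and $u=s_j$. The hypothesis $T+s_j\in\cI$ of that definition holds because $T+s_j\subseteq S_i\in\cI$, so the definition yields $|Y^j|\le k$ and $H^j\in\cI$.

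The main technical point to verify is that the $Y^j$'s are pairwise disjoint subsets of $H_{i-1}$, which then gives $|H_{i-1}\setminus H^m|=\sum_j|Y^j|\le km$ immediately. Pairwise disjointness is automatic since each $Y^j\subseteq H^{j-1}$ is removed when passing to $H^j$. For $Y^j\subseteq H_{i-1}$ it suffices to exclude the newly added elements $s_{j'}$: those with $j'<j$ lie in $T$, so they are absent from $Y^j\subseteq H^{j-1}\setminus T$; and those with $j'\ge j$ belong to $S_i\setminus H_{i-1}$ and have not yet been added, hence are absent from $H^{j-1}$. One also needs $T\subseteq H^{j-1}$ at every step, which follows from $S_{i-1}\subseteq T^{j'}$ for all earlier $j'$ (so no $Y^{j'}$ intersects $S_{i-1}$). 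This bookkeeping is the only delicate part of the argument.
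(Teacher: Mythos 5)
Your argument is essentially the paper's own proof: the same chain $H^0,H^1,\dots,H^m$ obtained by applying $k$-extendibility with $S=H^{j-1}$, $T=S_{i-1}\cup\{s_1,\dots,s_{j-1}\}$ and $u=s_j$ (the paper's $S$ is exactly your $H^{j-1}$, written as $(H_{i-1}\setminus Y_1^{j-1})\cup v_1^{j-1}$), followed by the same use of the maximality of $H_i$ to transfer the cardinality bound, the only cosmetic difference being that the paper counts via $|H_{i-1}\setminus H_i|\le |H_{i-1}|-|H_i|+|S_i\setminus S_{i-1}|$ instead of your identity. One remark that applies equally to your write-up and to the paper's: with this choice of $T$, the set $Y^j$ may contain elements of $(S_i\cap H_{i-1})\setminus S_{i-1}$, so the constructed $H'$ need not contain all of $S_i$ as claimed; this is repaired by taking $T=(S_i\cap H_{i-1})\cup\{s_1,\dots,s_{j-1}\}$ (still a subset of $S_i$, so $T+s_j\in\cI$ by heredity), after which both your bookkeeping and the final comparison with $H_i$ go through verbatim.
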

\begin{proof}
Let us denote the elements of $S_i \setminus H_{i - 1}$ by $v_1, v_2, \ldots, v_r$. We prove by induction that there exists a collection of sets $Y_1, Y_2, \ldots, Y_r$, each of size at most $k$, such that: $Y_j \subseteq H_{i - 1} \setminus (S_{i - 1} \cup \{v_h\}_{h = 1}^{j - 1})$ and $H_{i - 1} \setminus (\cup_{h = 1}^j Y_h) \cup \{v_h\}_{h = 1}^j \in \cI$ for every $0 \leq j \leq r$. For ease of notation, let us denote $Y_1^j = \cup_{h = 1}^j Y_h$ and $v_1^j = \{v_h\}_{h = 1}^j$. Using this notation, the claim we want to prove can be rephrased as follows: there exists a collection of sets $Y_1, Y_2, \ldots, Y_r$, each of size at most $k$, such that: $Y_j \subseteq H_{i - 1} \setminus (S_{i - 1} \cup v_1^{j - 1})$ and $(H_{i - 1} \setminus Y_1^j) \cup v_1^j \in \cI$ for every $0 \leq j \leq r$.

For $j = 0$ the claim is trivial since $H_{i - 1} \in \cI$. Thus, let us prove the claim for $j$ assuming it holds for $j - 1$.
By the induction hypothesis, $(H_{i - 1} \setminus Y_1^{j-1}) \cup v_1^{j - 1} \in \cI$. On the other hand, $S_{i - 1} \cup v_1^{j - 1}$ is a subset of this set which is independent even if we add $v_j$ to it. Since $(\cN, \cI)$ is a $k$-extendible system, this implies the existence of a set $Y_j$ of size at most $k$ such that:
\[
	Y_j
	\subseteq
	[(H_{i - 1} \setminus Y_1^{j-1}) \cup v_1^{j - 1}] \setminus [S_{i - 1} \cup v_1^{j - 1}]
	\subseteq
	H_{i - 1} \setminus (S_{i - 1} \cup v_1^{j - 1})
	\enspace,
\]
and:
\[
	[(H_{i - 1} \setminus Y_1^{j-1}) \cup v_1^{j - 1}] \setminus Y_j + v_j \in \cI
	\Rightarrow
	(H_{i - 1} \setminus Y_1^j) \cup v_1^j \in \cI
	\enspace,
\]
which completes the induction step. Thus, $(H_{i - 1} \setminus Y_1^r) \cup v_1^r \in \cI$ is a subset of $H_{i - 1} \cup S_i$ which contains $S_i$ and has a size of at least: $|H_{i - 1}| - rk + r$. On the other hand, $H_i$ is a maximum size independent subset of $H_{i - 1} \cup S_i$, and thus:
$
	|H_i|
	\geq
	|H_{i - 1}| - rk + r
$. 
Finally, all elements of $H_i$ belong also to $H_{i - 1}$ except, maybe, the elements of $S_i \setminus S_{i - 1}$. Hence,
\[
	|H_{i - 1} \setminus H_i|
	\leq
	|H_{i - 1}| - |H_i| + |S_i \setminus S_{i - 1}|
	\leq
	|H_{i - 1}| - \left( |H_{i - 1}| - rk + r \right) + r
	=
	rk
	\enspace.
\]
Lemma~\ref{le:removed_items_count_extendible} now follows, since $r \leq d+1$.
\end{proof}

The following lemma upper bounds the loss of moving from one hybrid to the next one.

\begin{lemma} \label{le:tradeoff_iteration_extendible}
For every iteration $1 \leq i \leq \ell$, $f(H_{i - 1}) - f(H_i) \leq k(d + 1) \cdot f(D^+_{best}(u_i) + u_i ~|~ S_{i - 1})$,
where $u_i$ and $D^+_{best}(u_i)$ are the greedy choices made by Algorithm~\ref{alg:ExtendibleSystemGreedy} at iteration $i$.
\end{lemma}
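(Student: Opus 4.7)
The plan is to prove the lemma by a telescoping decomposition together with a per-element bound that leverages the supermodular degree. Fix an enumeration of $H_{i-1} \setminus H_i$ as $w_1, w_2, \ldots, w_t$ and set $T_j = H_{i-1} \setminus \{w_1, \ldots, w_j\}$, so $T_0 = H_{i-1}$ and $T_t = H_{i-1} \cap H_i$. Telescoping yields $f(H_{i-1}) - f(H_{i-1} \cap H_i) = \sum_{j=1}^t f(w_j \mid T_j)$, and since $H_i \supseteq H_{i-1} \cap H_i$ by definition of $H_i$, monotonicity of $f$ gives $f(H_{i-1}) - f(H_i) \leq \sum_{j=1}^t f(w_j \mid T_j)$. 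Because Lemma~\ref{le:removed_items_count_extendible} guarantees $t \leq k(d+1)$, it suffices to show that every term $f(w_j \mid T_j)$ is upper bounded by $f(D^+_{best}(u_i) + u_i \mid S_{i-1})$.

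To obtain the per-term bound I would exhibit a specific feasible candidate for the greedy rule tailored to $w_j$. Observe first that $S_{i-1} \subseteq H_{i-1} \cap H_i \subseteq T_j$ (which implies in particular $w_j \notin S_{i-1}$); setting $E = \DS{w_j} \cap T_j$, the set $S_{i-1} \cup E + w_j$ sits inside $H_{i-1}$ and is therefore independent, while $E \subseteq \DS{w_j}$ by construction. Thus $(w_j, E)$ is a legal candidate in the greedy selection at iteration $i$, yielding
\[ f(D^+_{best}(u_i) + u_i \mid S_{i-1}) \geq f(E + w_j \mid S_{i-1}) \geq f(w_j \mid S_{i-1} \cup E), \]
where the second inequality is just monotonicity applied via $f(E \mid S_{i-1}) \geq 0$.

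The main obstacle is then the remaining inequality $f(w_j \mid T_j) \leq f(w_j \mid S_{i-1} \cup E)$, which asserts that enlarging the conditioning set from $S_{i-1} \cup E$ to $T_j$ cannot increase the marginal of $w_j$. This is exactly where the supermodular degree enters: the extra elements $T_j \setminus (S_{i-1} \cup E)$ are precisely the elements of $T_j \setminus S_{i-1}$ that lie outside $\DS{w_j}$, and by the contrapositive of Definition~\ref{def:supermodular} every $v \notin \DS{w_j}$ satisfies $f(w_j \mid A + v) \leq f(w_j \mid A)$ for all $A$. Inserting these extra elements one at a time into $S_{i-1} \cup E$ therefore never increases $w_j$'s marginal contribution, which yields the required inequality.

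Chaining the three inequalities gives $f(w_j \mid T_j) \leq f(D^+_{best}(u_i) + u_i \mid S_{i-1})$ for every $j$, and summing over the $t \leq k(d+1)$ indices completes the proof.
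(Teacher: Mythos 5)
Your proof is correct and follows essentially the same route as the paper's: the same telescoping over the removals $H_{i-1}\setminus H_i$, the same candidate pair for the greedy rule (your $E=\DS{w_j}\cap T_j$ versus the paper's $\DS{v_j}\cap \bar H_j\setminus S_{i-1}$, which give the same value), and the same use of the supermodular-degree definition to pass from $f(w_j \mid T_j)$ to $f(w_j \mid S_{i-1}\cup E)$. The only cosmetic difference is that you telescope $f(T_j)$ directly and invoke monotonicity once at the end via $f(H_{i-1}\cap H_i)\le f(H_i)$, whereas the paper telescopes $f(\bar H_j\cup S_{i-1})$; since $S_{i-1}\subseteq \bar H_j$ these coincide.
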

\begin{proof}
Order the elements of $H_{i - 1} \setminus H_i$ in an arbitrary order $v_1, v_2, \ldots v_r$, and let $\bar{H}_j = H_{i - 1} \setminus \{v_h ~|~ 1 \leq h \leq j\}$. For every $1 \leq j \leq r$,
\begin{align} \label{eq:bestMarginal}
    f(\DS{v_j} \cap \bar{H}_j + v_j ~|~ S_{i - 1})
    ={} &
    f(v_j ~|~ (\DS{v_j} \cap \bar{H}_j) \cup S_{i - 1}) + f(\DS{v_j} \cap \bar{H}_j ~|~ S_{i - 1}) \notag \\
    \geq{} &
    f(v_j ~|~ (\DS{v_j} \cap \bar{H}_j) \cup S_{i - 1})
    \geq
    f(v_j ~|~ \bar{H}_j \cup S_{i - 1})
    \enspace,
\end{align}
where the first inequality follows by monotonicity and the second by Definition~\ref{def:supermodular} (supermodular degree).
Specifically, the latter is correct, since the supermodular dependencies of an element are the only ones that can increase its marginal contribution.
Therefore, adding elements of $\bar{H}_j \setminus \DS{v_j}$ to a set can only decrease the marginal contribution of $v_j$ with respect to this set. 
Since $\bar{H}_j \cup S_{i - 1} = \bar{H}_{j - 1} \cup S_{i - 1} - v_j$, we get:
\begin{align*}
	&
	\sum_{j = 1}^r f(\DS{v_j} \cap \bar{H}_j \setminus S_{i - 1} + v_j ~|~ S_{i - 1}) = 
    \sum_{j = 1}^r f(\DS{v_j} \cap \bar{H}_j + v_j ~|~ S_{i - 1}) \\ \geq{} &
    \sum_{j = 1}^r f(v_j ~|~ \bar{H}_j \cup S_{i - 1}) =
    f(\bar{H}_0 \cup S_{i - 1}) - f(\bar{H}_r \cup S_{i - 1}) \geq
    f(H_{i - 1}) - f(H_i)
    \enspace,
\end{align*}
where the two equalities follow by Definition~\ref{def:marginal} (marginal set function);
the first inequality follows by~\eqref{eq:bestMarginal} and the last inequality holds since $\bar{H}_0 = H_{i - 1} \supseteq S_{i - 1}$ and $\bar{H}_r \cup S_{i - 1} \subseteq H_i$. 
Lemma~\ref{le:tradeoff_iteration_extendible} now follows by recalling that $r \leq k(d + 1)$ (by Lemma~\ref{le:removed_items_count_extendible}), and noticing that the pair $(v_j, \DS{v_j} \cap \bar{H}_j \setminus S_{i-1})$ is a candidate pair that Algorithm~\ref{alg:ExtendibleSystemGreedy} can choose on Line~\ref{ln:selection_extendible_system} for every element $v_j \in H_{i - 1} \setminus H_i$.
\end{proof}

\begin{corollary}
Algorithm~\ref{alg:ExtendibleSystemGreedy} is a $1/(k(d + 1) + 1)$-approximation algorithm.
\end{corollary}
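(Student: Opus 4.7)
The plan is to chain together Lemma~\ref{le:tradeoff_iteration_extendible} across all iterations via a telescoping argument, and relate the resulting sum to the value of the final solution. Specifically, I would first observe that by the definition of the marginal set function (Definition~\ref{def:marginal}) and the update rule of the algorithm, $f(S_i) - f(S_{i-1}) = f(D^+_{best}(u_i) + u_i \mid S_{i-1})$ for every $1 \leq i \leq \ell$. Summing Lemma~\ref{le:tradeoff_iteration_extendible} over all iterations and telescoping therefore yields
\[
    f(H_0) - f(H_\ell) = \sum_{i=1}^\ell \bigl[f(H_{i-1}) - f(H_i)\bigr] \leq k(d+1) \sum_{i=1}^\ell \bigl[f(S_i) - f(S_{i-1})\bigr] = k(d+1) \cdot f(S_\ell),
\]
where I use $f(S_0) = f(\varnothing) \geq 0$ (replacing this by~$0$ only weakens the bound).

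Next I would pin down the two endpoint hybrids. For $H_0$: by construction $H_0 \supseteq OPT$, so monotonicity gives $f(H_0) \geq f(OPT)$. For $H_\ell$: by Observation~\ref{ob:find_base} the algorithm terminates only when $S_\ell$ is a base, and $H_\ell$ was defined as an independent subset of $H_{\ell-1} \cup S_\ell$ containing $S_\ell$; since no element can be added to a base while preserving independence, $H_\ell = S_\ell$. Combining these two facts with the telescoped inequality yields
\[
    f(OPT) \leq f(H_0) \leq k(d+1) \cdot f(S_\ell) + f(H_\ell) = \bigl(k(d+1) + 1\bigr) \cdot f(S_\ell),
\]
which is exactly the claimed approximation ratio.

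Finally, I would verify the running time. The outer loop performs at most $|\cN|$ iterations (each iteration strictly enlarges $S_i$). Within a single iteration, Line~\ref{ln:selection_extendible_system} enumerates at most $|\cN|$ candidate elements $u$, and for each one at most $2^{|\DS{u}|} \leq 2^{\DSF{f}}$ subsets of $\DS{u}$, checking independence and evaluating $f$ for each pair. This gives $\Poly(|\cN|, 2^{\DSF{f}})$ oracle queries and arithmetic operations, matching the complexity claim of Theorem~\ref{th:supermodular_extendible}. The only nontrivial ingredients are the two preceding lemmas; once they are in hand, the corollary is essentially a one-line telescoping computation, so I do not expect any real obstacle here.
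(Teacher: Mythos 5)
Your proof is correct and follows essentially the same route as the paper's: sum Lemma~\ref{le:tradeoff_iteration_extendible} over the iterations, telescope both sides, and use $f(S_0)\geq 0$, $f(H_0)\geq f(OPT)$ and $H_\ell = S_\ell$ (the latter because $S_\ell$ is a base). The only nitpick is that the final ``$=$'' in your first display should be ``$\leq$'' since $f(S_0)=f(\varnothing)$ need not be zero, a point you already acknowledge in words.
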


\begin{proof}
Adding up Lemma~\ref{le:tradeoff_iteration_extendible} over $1 \leq i \leq \ell$, we get:
\begin{align*}
    k(d + 1) \cdot [f(S_\ell) - f(S_0)]
    ={} &
    k(d + 1) \cdot \sum_{i = 1}^\ell f(D^+_{best}(u_i) + u_i ~|~ S_{i - 1})\\
    \geq{} &
    \sum_{i = 1}^\ell [f(H_{i - 1}) - f(H_i)]
    =
    f(H_0) - f(H_\ell)
    \enspace.
\end{align*}

Note that $H_\ell = S_\ell$ because $S_\ell$ is a base, and therefore, every independent set containing $S_\ell$ must be $S_\ell$ itself. Recall also that $f(H_0) = f(OPT)$ and $f(S_0) \geq 0$. Plugging these observations into the previous inequality gives:
\[
    k(d + 1) \cdot f(S_\ell) \geq f(OPT) - f(S_\ell)
    \Rightarrow
    f(S_\ell) \geq \frac{f(OPT)}{k(d + 1) + 1}
    \enspace.
    \qedhere
\]
\end{proof}

\subsubsection{A Tight Example for Algorithm~\ref{alg:ExtendibleSystemGreedy}} \label{ssc:tight_supermodular}
In this section we present an example showing that our analysis of Algorithm~\ref{alg:ExtendibleSystemGreedy} is tight even when the independence system $(\cN, \cI)$ belongs to $k$-intersection (recall that any independence system that is $k$-intersection is also $k$-extendible, but not vice versa).

\begin{proposition} \label{p:tight_extendible}
For every $k \geq 1$, $d \geq 0$ and $\ee > 0$, there exists a $k$-intersection independence system $(\cN, \cI)$ and a function $f : 2^\cN \rightarrow \mathbb{R}^+$ with $\DSF{f} = d$ for which Algorithm~\ref{alg:ExtendibleSystemGreedy} produces a $(1 + \ee) / (k(d + 1) + 1)$ approximation.
\end{proposition}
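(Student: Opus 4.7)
The plan is to exhibit a tight instance built from $\ell$ independent ``rounds.'' Each round~$i$ contributes a ``bad bundle'' $A^i=\{a^i_1,\ldots,a^i_{d+1}\}$ together with a set $B^i$ of $k(d+1)+1$ ``good'' singleton elements. Inside round~$i$ the constraint $\cI$ is the intersection of $k$ partition matroids arranged in a $k$-dimensional matching layout: each $a^i_l$ occupies one dedicated vertex on each of the $k$ sides (so $A^i$ uses $k(d+1)$ distinct ``core'' vertices), $k(d+1)$ of the good hyperedges each share exactly one core vertex with some $a^i_l$ (blocking them against $A^i$), while the last good of the round lives on entirely fresh vertices. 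Across rounds the vertex sets are disjoint, so $\bigcup_i A^i, \bigcup_i B^i \in \cI$, $A^i\cup\{b\}\notin\cI$ for each of the $k(d+1)$ blockable goods, and exactly one good per round is forced to remain unblocked by matroid augmentation.

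Next I define $f$ additively over rounds, $f(S)=\sum_i f_i(S)$, where each $f_i$ is chosen so that $a^i_l$ has supermodular dependency set $A^i\setminus\{a^i_l\}$ of size $d$ (so $\DSF{f}=d$), each good $g\in B^i$ has individual value $f_i(\{g\})=1$, and the bundle value is $f_i(A^i)=1+\ee$. The delicate requirement is that once $A^i\subseteq S$ the marginal of every element of $B^i$ must be zero, without creating supermodular interactions between pairs of goods (which would force $|\DS{g}|$ to grow with $|B^i|$). I meet this by making the per-round contribution ``bundle-dominant'': the goods contribute $|S\cap B^i|$ whenever $A^i\not\subseteq S$, and this contribution is absorbed into the bundle bonus $1+\ee$ whenever $A^i\subseteq S$, phrased purely through $\mathbf{1}[A^i\subseteq S]$ rather than through any pair of individual goods. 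Verifying that this preserves $\DSF{f}=d$ is a direct case analysis on marginals.

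The run of Algorithm~\ref{alg:ExtendibleSystemGreedy} then traces easily: at every iteration the best pair is some $(a^i_1,A^i\setminus\{a^i_1\})$ with marginal $1+\ee$, which strictly dominates the single-element marginal $1$ of any good; so greedy picks the $\ell$ bundles $A^1,\ldots,A^\ell$ in turn and terminates with $f(S_\ell)=\ell(1+\ee)$, all remaining elements having marginal zero by the bundle-dominance property. Meanwhile OPT takes $\bigcup_i B^i$ (which is independent in $\cI$) for value $\ell(k(d+1)+1)$, yielding the approximation ratio $\ell(1+\ee)/\ell(k(d+1)+1)=(1+\ee)/(k(d+1)+1)$, as claimed.

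The main obstacle I expect is precisely the balancing act in the second paragraph: the matroid-augmentation argument used in the proof of Lemma~\ref{le:removed_items_count_extendible} forces at least one unblocked good per round, so if the marginal of that good does not collapse to zero once the round's bundle is picked, greedy will pick it too and the ratio will be off by a factor of~$2$; yet an obvious $\max$-based construction forcing the collapse induces supermodular dependencies between pairs of goods and blows up $\DSF{f}$ well beyond $d$. All remaining steps (checking that the layout is genuinely $k$-intersection, that the bases are as described, and that greedy's tie-breaking is handled) are routine bookkeeping once this $f$ is set up correctly.
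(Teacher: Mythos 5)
Your high-level architecture matches the paper's: a ``bundle'' of $d+1$ supermodularly linked elements of value $1+\ee$, $k$ partition matroids that let this bundle block $k(d+1)$ unit-value elements, and one leftover unit-value element that remains feasible after the bundle is picked and whose contribution must somehow be neutralized. The genuine gap is exactly at the step you flag as the ``balancing act'': the function you propose does not exist as specified. You require that every $g\in B^i$ has zero marginal once $A^i\subseteq S$, while also $f_i(B^i)=k(d+1)+1$ (this is what gives OPT its value). Starting from $A^i$ and adding the elements of $B^i$ one by one, these zero marginals force $f_i(A^i\cup B^i)=f_i(A^i)=1+\ee<k(d+1)+1=f_i(B^i)$, contradicting monotonicity; equivalently, your explicit formula $f_i(S)=|S\cap B^i|\cdot\mathbf{1}[A^i\not\subseteq S]+(1+\ee)\mathbf{1}[A^i\subseteq S]$ is not monotone, and the proposition is only meaningful (and the algorithm only analyzed) for monotone $f$. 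Note also that your requirement is stronger than needed: the $k(d+1)$ blocked goods are already excluded by the matroids, so only the single unblocked good must lose its marginal --- but even that cannot be done by ``absorbing'' already-counted value into an indicator bonus.

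The paper resolves this tension differently, and this is the idea your proposal is missing: it takes $f=f'+\ee\cdot(\text{bundle indicator})$ where $f'$ is a monotone \emph{submodular} coverage function (it counts how many values of one distinguished coordinate are ``hit''). All bundle elements, and --- by the combinatorial design of the ground set --- every element that remains feasible after the bundle is chosen, share that distinguished coordinate value, so their marginals vanish by submodular saturation rather than by deleting previously counted value; monotonicity is automatic, and since $f'$ is submodular the only supermodular dependencies are among the $d+1$ bundle elements, giving $\DSF{f}=d$. Your instinct that a naive $\max$-based fix would blow up $\DSF{f}$ is correct, but the conclusion to draw is not the indicator-absorption you use (which breaks monotonicity); it is that the unit-value elements' saturation must come from a submodular structure such as coverage. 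Until the function $f_i$ is replaced by a construction of this kind (and you verify monotonicity, $\DSF{f}=d$, and the greedy trace for it), the proof does not go through.
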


The rest of this section is devoted for constructing the independence system guaranteed by Proposition~\ref{p:tight_extendible}. Let $\cT$ be the collection of all sets $T \subseteq \{1, 2, \ldots, k + 1\} \times \{0, 1, \ldots, (d + 1)(k + 1) - 1\}$ obeying the following properties:
\begin{itemize}
	\item For every $1 \leq i \leq k + 1$, there exists exactly one $x$ such that $T$ contains the pair $(i, x)$.
	\item At least one pair $(i, x)$ in $T$ has $x \leq d$.
	\item Let $x_{k+1}$ be such that $(k + 1, x) \in T$. Then $x_{k+1} = 0$ or $x_{k+1} > d$.
\end{itemize}

Intuitively, the first requirement means that we can view a set $T \in \cT$ as a point in a $(k+1)$-dimensional space. The other two requirements make some points illegal.
For example, for $k=1$ the space is a $2(d+1) \times 2(d+1)$ grid, and the legal points are the ones that are either in row~0 or in one of the rows $d+1$ to $2(d+1)-1$ and one of the columns $0$ to $d$. Two examples of $\cT$ can be seen in Figure~\ref{fig:supermodular_construction_examples}.

\begin{figure}[!ht]
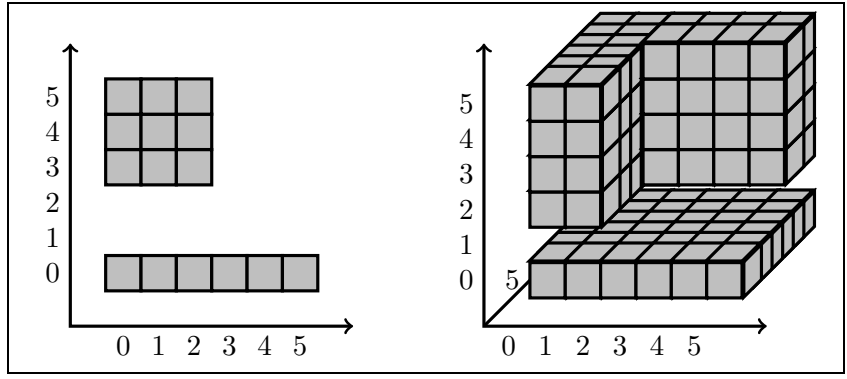

\center
\fbox{
	\ExampleTwoDimensions{e} \qquad \ExampleThreeDimensions{e}{2}
}
\caption{Graphical representations of $\cT$ for two configurations: $k=1,d=2$ and $k=2,d=1$. In both cases the last coordinate corresponds to the top-down axis.} \label{fig:supermodular_construction_examples}
\end{figure}

Let $\cN$ be the ground set $\{u_T \mid T \in \cT\}$. We define $k$ matroids on this ground set as follows. For every $1 \leq i \leq k$, $\cM_i = (\cN, \cI_i)$, where a set $S \subseteq \cN$ belongs to $\cI_i$ if and only if for every $0 \leq x < (d + 1)(k + 1)$, $|\{u_T \in S \mid (i, x) \in T\}| \leq 1$. One can easily verify that $\cM_i$ is a partition matroid. The independence system we construct is the intersection of these matroids, \ie, it is $(\cN, \cI)$, where $\cI = \bigcap_{i = 1}^k \cI_i$. Next, we define the objective function $f : 2^\cN \rightarrow \mathbb{R}^+$ as follows.
\[
	f'(S)
	=
	\sum_{x = 0}^{(d + 1)(k + 1) - 1} \mspace{-9mu} \min\{1, |\{u_T \in S \mid (k + 1, x) \in T\}|\}
	\enspace.
\]

That is, for $k=1$, $f'$ gains a value of~1 for every row that was ``hit'' by an element.
For every $0 \leq x \leq d$, let $\hat{T}(x) = \{(k + 1, 0)\} \cup \{(i, x)\}_{i = 1}^k$ (note that $\hat{T}_x \in \cT$).
\[
	f(S)
	=
	\begin{cases}
	f'(S) + \ee & \text{if $\{u_{\hat{T}(x)}\}_{x = 0}^d \subseteq S$} \enspace, \\
	f'(S) & \text{otherwise} \enspace.
	\end{cases}
\]
One can check that $f'$ is a non-negative monotone submodular function, and thus, $\DSF{f} = d$.

Claim~\ref{cl:algorithm_output} argues that Algorithm~\ref{alg:ExtendibleSystemGreedy} outputs a poor solution for the above independence system and objective function. The discussion after the claim presents an independent set $S^*$ of large value. Examples for both the solution of the algorithm and the set $S^*$ can be found in Figure~\ref{fig:supermodular_construction_examples_solutions}.

\begin{figure}
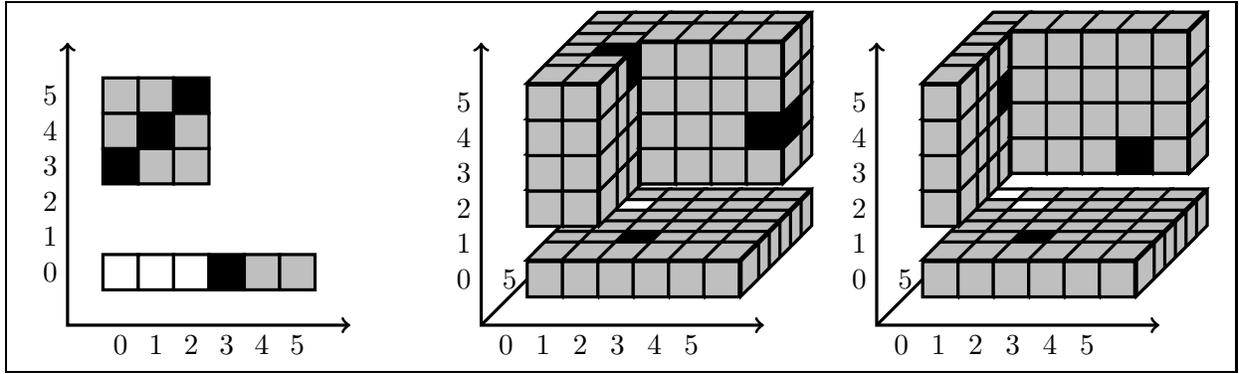

\center
\fbox{
	\ExampleTwoDimensions{s} \qquad \ExampleThreeDimensions{s}{2} \ExampleThreeDimensions{s}{1}
}
\caption{The solution produced by Algorithm~\ref{alg:ExtendibleSystemGreedy} and the set $S^*$ for the two examples presented in Figure~\ref{fig:supermodular_construction_examples}. The second example is depicted twice, once with some of the elements removed to make more elements visible. The set $S^*$ is denoted by black squares and the solution of Algorithm~\ref{alg:ExtendibleSystemGreedy} is denoted by white squares. Note that in both solutions no two elements share a row or a depth (when there is a depth). In $S^*$ no two points share a height, and thus, every element in $S^*$ contributes $1$ to the value. On the other hand, in the algorithm's solution all the elements share height, and thus its {\em overall value} is $1$ by $f'$ and $1+\ee$ by $f$.} \label{fig:supermodular_construction_examples_solutions}
\end{figure}

\begin{claim} \label{cl:algorithm_output}
Given the above constructed independence system $(\cN, \cI)$ and objective function $f$, Algorithm~\ref{alg:ExtendibleSystemGreedy} outputs a solution of value $1 + \ee$.
\end{claim}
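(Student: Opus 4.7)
The plan is to show that after one greedy step the value $f(S_i)$ already equals $1+\ee$ and that every subsequent step has marginal gain $0$, so that the algorithm's final output has value exactly $1+\ee$.

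First, I will pin down the supermodular dependency sets. For any $T \in \cT$ that is not of the form $\hat T(x)$ with $0 \le x \le d$, adding $u_T$ to a set $S$ cannot change whether the $\ee$-bonus fires, because the bonus depends solely on the presence of the $d+1$ designated bonus elements $u_{\hat T(0)}, \ldots, u_{\hat T(d)}$. Hence $f(u_T \mid S) = f'(u_T \mid S)$ for every $S$, and since $f'$ is submodular we get $\DS{u_T} = \varnothing$. In contrast, for a bonus index $T = \hat T(x)$ the dependency set is exactly $\{u_{\hat T(x')} : x' \ne x,\ 0 \le x' \le d\}$, since only the simultaneous presence of the other bonus elements can raise the marginal of $u_{\hat T(x)}$ (by exactly $\ee$).

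Next, I will analyze the first iteration. Any candidate pair $(u_T, D)$ with $D \subseteq \DS{u_T}$ and $D + u_T \in \cI$ satisfies $f(D + u_T) \le 1 + \ee$: if $T$ is a non-bonus index then $D = \varnothing$ and $f(\{u_T\}) = f'(\{u_T\}) = 1$; if $T = \hat T(x)$ then $D \cup \{u_T\}$ consists entirely of bonus elements, all sharing $x_{k+1} = 0$, so $f'(D \cup \{u_T\}) = 1$ and at most the extra $\ee$ is earned on top. On the other hand, the pair $\bigl(u_{\hat T(0)},\ \{u_{\hat T(x)} : 1 \le x \le d\}\bigr)$ is feasible---in each of the first $k$ coordinates the values $0, 1, \ldots, d$ are pairwise distinct across the $d+1$ bonus elements, so no matroid $\cM_i$ is violated---and it attains $f = 1 + \ee$. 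Consequently $S_1 = \{u_{\hat T(x)}\}_{x=0}^d$ and $f(S_1) = 1 + \ee$.

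The step I expect to require the most care is showing that every subsequent iteration contributes $0$. Suppose $u_T \notin S_1$ can be added to $S_{i-1} \supseteq S_1$ at some later stage. Since $S_1$ already uses each of the values $0, 1, \ldots, d$ in every one of the first $k$ dimensions (via the bonus elements), the coordinates of $T$ in those dimensions must all lie in $\{d+1, \ldots, (d+1)(k+1) - 1\}$. The second requirement defining $\cT$ forces some coordinate of $T$ to be $\le d$, which therefore must sit in dimension $k+1$; the third requirement then pins it down to $x_{k+1} = 0$. In particular $T$ is not a bonus index, so $\DS{u_T} = \varnothing$, the companion set chosen by the algorithm must be empty, and since row $0$ is already covered by $u_{\hat T(0)} \in S_1$ while the $\ee$-bonus has already been earned, the marginal contribution is $0$. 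The algorithm simply continues adding such zero-contribution elements until $S_i$ becomes a base, and we conclude $f(S_\ell) = f(S_1) = 1 + \ee$, which is the claim.
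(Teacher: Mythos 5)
Your proposal is correct and follows essentially the same route as the paper's proof: identify the supermodular dependency sets (empty for non-bonus elements since $f'$ is submodular, the remaining bonus elements for $u_{\hat{T}(x)}$), show the first greedy step must select exactly $\{u_{\hat{T}(x)}\}_{x=0}^d$ with value $1+\ee$, and then use the matroid constraints together with the structure of $\cT$ to force every later addition to contain the pair $(k+1,0)$ and hence contribute nothing. Your write-up merely makes explicit a couple of steps the paper leaves implicit (feasibility of the bonus pair and the $1+\ee$ upper bound on all candidate pairs), so no substantive difference.
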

\begin{proof}
Consider the first iteration of Algorithm~\ref{alg:ExtendibleSystemGreedy}.
Let $u_T \in \cN$. If $T \not \in \{\hat{T}(x)\}_{x=0}^d$, then $f(u_T \mid S) = f'(u_T \mid S)$ for every set $S \subseteq \cN$, and thus, $\DS{u_T} = \varnothing$ because $f'$ is a submodular function. Hence, for every such $u_T$, we get: $f(\DS{u_T} + u_T) = 1$. Consider now the case $T \in \{\hat{T}(x)\}_{x=0}^d$. In this case, clearly, $\DS{u_T} = \{u_{\hat{T}(x)}\}_{x=0}^d - u_T$, and thus, $f(\DS{u_T} + u_T) = 1 + \ee$. In conclusion, Algorithm~\ref{alg:ExtendibleSystemGreedy} picks exactly the elements of $\{u_{\hat{T}(x)}\}_{x=0}^d$ to its solution at the first iteration.

To complete the proof, we show that Algorithm~\ref{alg:ExtendibleSystemGreedy} cannot increase the value of its solution at the next iterations. Consider an arbitrary element $u_T \in \cN \setminus \{u_{\hat{T}(x)}\}_{x=0}^d$. By definition, $T$ must contain a pair $(i, x)$ such that $0 \leq x \leq d$. There are two cases:
\begin{itemize}
	\item If $i \neq k + 1$, then $u_T$ cannot coexist in an independent set of $\cM_i$ with $u_{\hat{T}(x)}$ because both correspond to sets containing the pair $(i, x)$.
	\item If $i = k + 1$, then $x = 0$ because $u_T \in \cN$.
\end{itemize}
From the above analysis, we get that all elements added to the solution after the first iteration contain the pair $(k + 1, 0)$ (and thus, no other pair of the form $(k + 1, x)$). Hence, they do not increase the value of either $f'$ or $f$.
\end{proof}

To prove Proposition~\ref{p:tight_extendible}, we still need to show that $(\cN, \cI)$ contains an independent set of a high value. Consider the set $S^* = \{u_{T^*(j)}\}_{j = 0}^{k(d + 1)}$, where $T^*(j) \eqdef \{(i, x) \mid 1 \leq i \leq k + 1 \text{ and } x = (i(d + 1) - j) \bmod (d + 1)(k + 1)\}$.

\begin{claim} \label{c:tightOptimal}
$S^* \subseteq \cN$.
\end{claim}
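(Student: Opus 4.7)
The plan is to verify, for each $j \in \{0, 1, \ldots, k(d+1)\}$, that the set $T^*(j)$ satisfies the three defining conditions of $\cT$, which immediately yields $u_{T^*(j)} \in \cN$ and hence $S^* \subseteq \cN$. Fix such a $j$.

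The first condition (exactly one $x$ per coordinate $i$) is immediate from the definition of $T^*(j)$, since $x$ is defined as an explicit function of $i$. The third condition concerns $x_{k+1}$, the unique value with $(k+1, x_{k+1}) \in T^*(j)$. Here I would compute directly: since $(k+1)(d+1) \equiv 0 \pmod{(d+1)(k+1)}$, we get $x_{k+1} = (-j) \bmod (d+1)(k+1)$. For $j = 0$ this is $0$, and for $1 \leq j \leq k(d+1)$ it equals $(d+1)(k+1) - j \geq (d+1)(k+1) - k(d+1) = d+1 > d$. So either $x_{k+1} = 0$ or $x_{k+1} > d$, as required.

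The main work is in verifying the second condition: there exists some $i \in \{1, \ldots, k+1\}$ with $x_i \leq d$, where $x_i = (i(d+1) - j) \bmod (d+1)(k+1)$. My plan here is a small modular-arithmetic argument. Write $j = q(d+1) - r$ with $r \in \{0, 1, \ldots, d\}$, so that $r = (-j) \bmod (d+1)$ and $q = \lceil j/(d+1) \rceil \in \{0, 1, \ldots, k\}$ (since $0 \leq j \leq k(d+1)$). Choose
\[
i = \begin{cases} q & \text{if } q \geq 1, \\ k+1 & \text{if } q = 0. \end{cases}
\]
In both cases $i \in \{1, \ldots, k+1\}$, and since $k+1 \equiv 0 \pmod{k+1}$ one checks that $i(d+1) \equiv q(d+1) \pmod{(d+1)(k+1)}$. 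Therefore
\[
x_i = (i(d+1) - j) \bmod (d+1)(k+1) = (q(d+1) - j) \bmod (d+1)(k+1) = r \leq d,
\]
establishing the second condition.

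The only subtlety I anticipate is bookkeeping the mod: one must check that $q \leq k$ so that $i = q$ actually lies in the admissible range $\{1, \ldots, k+1\}$, and separately handle $j = 0$ (where $q = 0$ and $i = k+1$, producing $x_{k+1} = 0 \leq d$, consistent with condition three). Once these boundary cases are checked, the three conditions hold for every $j$ in the claimed range, so each $T^*(j) \in \cT$ and consequently $S^* \subseteq \cN$.
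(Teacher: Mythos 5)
Your proof is correct and follows essentially the same route as the paper: for $j\geq 1$ you pick $i=\lceil j/(d+1)\rceil$ (via the decomposition $j=q(d+1)-r$, which is the same computation showing $0\leq i(d+1)-j\leq d$), and for $j=0$ you use the coordinate $k+1$ with $x_{k+1}=0$. You are in fact slightly more thorough than the paper, which only verifies the ``some $x\leq d$'' condition and leaves the first and third defining conditions of $\cT$ implicit, whereas you check $x_{k+1}=(-j)\bmod (d+1)(k+1)\in\{0\}\cup\{d+1,\dots\}$ explicitly.
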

\begin{proof}
We need to show that for every $0 \leq j \leq k(d + 1)$, $u_{T^*(j)} \in \cN$. For $j = 0$, $(k + 1, 0) \in T^*(0)$, which completes the proof. Thus, we may assume from now on $1 \leq j \leq k(d + 1)$, and let $i = \lceil j / (d + 1) \rceil$.
Clearly $1 \leq i \leq k$ and $T^*(j)$ contains the pair $(i, x)$ for:
\[
	x
	=
	(i(d+1) - j) \bmod (d+1)(k + 1)
	=
	(\lceil j / (d+1) \rceil \cdot (d+1) - j) \bmod (d+1)(k + 1)
	\enspace.
\]
To conclude Claim~\ref{c:tightOptimal}, we need to show that $0 \leq x \leq d$. This follows since $\lceil j / (d+1) \rceil \cdot (d+1) - j \geq (j / (d+1)) \cdot (d+1) - j = 0$ and $\lceil j / (d+1) \rceil \cdot (d+1) - j < [j/(d+1) + 1] \cdot (d+1) - j = d + 1$.
\end{proof}

\begin{claim}
For every two values $0 \leq j_1 < j_2 \leq k(d + 1)$, $T^*(j_1) \cap T^*(j_2) = \varnothing$. 
Hence $S^* \in \cI$ and $f(S^*) \geq f'(S^*) = |S^*| = k(d + 1) + 1$.
\end{claim}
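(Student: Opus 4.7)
The plan is to first establish the pairwise-disjointness assertion and then read off both independence and the value bound as corollaries of it. Recall that each $T^{*}(j)$ contains, for every fixed first coordinate $i \in \{1,\dots,k+1\}$, exactly one pair of the form $(i, (i(d+1)-j) \bmod (d+1)(k+1))$.

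For disjointness, I would suppose toward a contradiction that some pair $(i,x)$ lies in both $T^{*}(j_{1})$ and $T^{*}(j_{2})$. Since each $T^{*}(j_{\ell})$ has a unique pair with first coordinate $i$, this forces
\[
 i(d+1)-j_{1} \equiv i(d+1)-j_{2} \pmod{(d+1)(k+1)},
\]
so $j_{1} \equiv j_{2} \pmod{(d+1)(k+1)}$. Because $0 \le j_{1}<j_{2}\le k(d+1)<(d+1)(k+1)$, both representatives lie in a single residue window of length less than $(d+1)(k+1)$, giving $j_{1}=j_{2}$, a contradiction. This step is essentially the only one that requires any care; the main (and only) subtle point is verifying the strict inequality $k(d+1) < (d+1)(k+1)$ so that the modular equation collapses to actual equality.

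From disjointness I would deduce $S^{*}\in\cI$ as follows. For each $1\le i\le k$ and each value $x$, the pair $(i,x)$ can appear in at most one $T^{*}(j)$ (otherwise two distinct $T^{*}(j_{1}),T^{*}(j_{2})$ would share it, contradicting what we just proved). Hence $|\{u_{T}\in S^{*}\mid (i,x)\in T\}|\le 1$ for every $x$, which is exactly the definition of membership in $\cI_{i}$. Taking intersection over $i=1,\dots,k$ yields $S^{*}\in\bigcap_{i=1}^{k}\cI_{i}=\cI$.

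Finally, to lower-bound the value, I would observe that each $u_{T^{*}(j)}\in S^{*}$ contributes the unique pair $(k+1,\, ((k+1)(d+1)-j)\bmod (d+1)(k+1))$ to the last coordinate, and by the disjointness above these second coordinates are distinct as $j$ ranges over $\{0,1,\ldots,k(d+1)\}$. Therefore the inner indicator in the definition of $f'$ is $1$ for exactly $k(d+1)+1$ values of $x$, giving $f'(S^{*})=k(d+1)+1=|S^{*}|$. Since $f\ge f'$ pointwise, we conclude $f(S^{*})\ge k(d+1)+1$, which, combined with Claim~\ref{cl:algorithm_output}, yields Proposition~\ref{p:tight_extendible} with the ratio $(1+\varepsilon)/(k(d+1)+1)$.
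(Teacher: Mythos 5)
Your proof is correct and follows essentially the same route as the paper: the disjointness part is the identical modular-arithmetic contradiction (shared pair forces $j_1 \equiv j_2 \pmod{(d+1)(k+1)}$, impossible for distinct values in $[0, k(d+1)]$). The paper leaves the ``hence'' consequences (membership in each partition matroid $\cI_i$ and the count of distinct $(k+1,\cdot)$ coordinates giving $f'(S^*)=k(d+1)+1$) implicit, and you simply spell them out correctly.
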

\begin{proof}
Assume towards contradiction that $(i, x) \in T^*(j_1) \cap T^*(j_2)$. Then, modulo $(d + 1)(k + 1)$, the following equivalence must hold:
\[
	(i(d + 1) - j_1) \equiv (i(d + 1) - j_2)
	\Rightarrow
	j_1 \equiv j_2
	\enspace,
\]
which is a contradiction since $j_1 \neq j_2$ and they are both in the range $[0, k(d + 1)]$.
\end{proof}

\subsection{Hardness (Proof of Theorem~\ref{th:hardness})}
Before proving Theorem~\ref{th:hardness} let us state the hardness result of \cite{HSS06} given by Theorem~\ref{th:hardness_matching}. In the $r$-\textsf{Dimensional Matching} problem one is given an $r$-sided hypergraph $G = (\bigcupdot_{i = 1}^r V_i, E)$, where every edge $e \in E$ contains exactly one vertex of each set $V_i$. The objective is to select a maximum size matching $M \subseteq E$, \ie, a subset $M \subseteq E$ of edges which are pairwise disjoint.

\begin{theorem}[Hazan et al. \cite{HSS06}] \label{th:hardness_matching}
It is \NP-hard to approximate $r$-\textsf{Dimensional Matching} to within $O(\log r/r)$ in polynomial time, even if $r$ is a constant.
\end{theorem}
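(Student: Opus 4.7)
The plan is to establish Theorem~\ref{th:hardness_matching} via the PCP framework, obtaining a gap-preserving reduction whose two ends carry the same $\Omega(\log r / r)$ signature. I would follow the high-level strategy of \cite{HSS06}: start from a PCP-hard source problem, and translate it into an $r$-partite $r$-uniform hypergraph instance for which matchings correspond to feasible source solutions. The key design goal is that the $\log r / r$ factor should be \emph{inherited} from the source rather than re-derived inside the reduction, since the trivial combinatorial argument only gives an $\Omega(1/r)$ gap.

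A concrete route is to reduce from Maximum Independent Set (or an equivalent set-packing variant) on $r$-uniform hypergraphs of bounded degree, for which $\Omega(\log r / r)$ inapproximability can be obtained via PCP plus long-code or dispersive constructions. Given such an instance $H = (V, F)$, I would build an $r$-partite hypergraph $G = (\bigcupdot_{i = 1}^r V_i, E)$ by introducing, for each vertex $v \in V$, a ``diagonal'' edge $(v^{(1)}, \ldots, v^{(r)})$ that places one copy of $v$ on each side. For every hyperedge $f \in F$ of $H$, I would attach a conflict gadget of $r$-partite edges whose role is to forbid the simultaneous selection of all diagonals corresponding to $f$, while contributing only lower-order terms to any matching size. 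Under this construction, matchings of $G$ project back to independent sets of $H$ up to negligible overhead, and the hardness gap transfers.

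Verifying the gap is then largely bookkeeping. In the completeness case, an independent set $I \subseteq V$ of $H$ yields the matching $\{(v^{(1)}, \ldots, v^{(r)}) : v \in I\}$ of size $|I|$, since $I$ contains at most one vertex from each hyperedge and so triggers no conflict gadget. In the soundness case, the set of diagonal edges used by any matching $M$ in $G$ must correspond to an independent set of $H$ (otherwise the gadget for the violated hyperedge forces a conflict with $M$), so $|M|$ is bounded above by the soundness value of $H$ plus the (lower-order) contribution of gadget edges. Since $r$ is held constant and $|H|$ is polynomial, the reduction runs in polynomial time and preserves the $\Omega(\log r / r)$ gap all the way through to $r$-\textsf{Dimensional Matching}.

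The main obstacle, and the technical heart of \cite{HSS06}, is the design of the conflict gadget. It must simultaneously (a) respect the $r$-partite structure of $G$ (every gadget edge must use exactly one vertex per side), (b) genuinely enforce the independence constraint across all hyperedges of $H$ without introducing spurious matchings, and (c) contribute only a negligible additive amount to the matching size so that the $\log r / r$ gap is not destroyed by gadget-only matchings in the soundness case. Achieving all three properties in a single unified construction (rather than trading off against each other) is precisely what separates the proof of $\Omega(\log r / r)$ from the easy combinatorial $\Omega(1 / r)$ bound; once such a gadget is in hand, the remainder of the argument is a direct application of the PCP source hardness.
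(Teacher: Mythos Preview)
The paper does not prove Theorem~\ref{th:hardness_matching} at all: it is simply quoted as a known result of Hazan, Safra and Schwartz~\cite{HSS06} and then used as a black box (together with a short reduction lemma) to derive Theorem~\ref{th:hardness}. So there is nothing to compare your proposal against; the ``paper's own proof'' consists of a citation.

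As for the proposal itself, it is not really a proof but a plan that explicitly defers the only nontrivial step --- the design of the conflict gadget satisfying properties (a)--(c) --- back to~\cite{HSS06}. That makes the argument circular: you are invoking the very result you are trying to establish. Moreover, the high-level route you sketch (reduce from $\Omega(\log r/r)$-hard Independent Set on bounded-degree $r$-uniform hypergraphs via diagonal edges plus per-hyperedge conflict gadgets) is not how~\cite{HSS06} actually proceed; their argument goes through a direct PCP construction tailored to $r$-dimensional matching rather than a generic IS-to-matching gadget reduction. If you want a self-contained proof, you would need to either reproduce their PCP machinery or give a concrete gadget with a quantitative soundness analysis showing the gadget contribution is $o(\log r / r)$ relative to the diagonal matching size; your sketch does neither. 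For the purposes of this paper, though, the correct move is simply to cite~\cite{HSS06} and move on.
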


Theorem~\ref{th:hardness} follows by combining Theorem~\ref{th:hardness_matching} with the following lemma.

\begin{lemma}
Any instance of $r$-\textsf{Dimensional Matching} can be represented as maximizing a monotone function $f$ with $\DSF{f} = \DDF{f} \leq d$ over a $k$-intersection set system for every $d \geq 0$ and $k \geq 1$ obeying $r \leq k(d + 1)$.
\end{lemma}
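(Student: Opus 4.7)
Given an $r$-dimensional matching instance on $G = (\bigcupdot_{i=1}^r V_i, E)$ with $r \leq k(d+1)$, the plan is to build an equivalent triple $(\cN, \cI, f)$ of our problem. Since $r \leq k(d+1)$, I first partition the $r$ sides into $k$ groups $G_1, \ldots, G_k$ of size at most $d+1$ each, take $\cN = E$, and for each group $G_i$ define the partition matroid $\cM_i$ on $\cN$ whose blocks are indexed by the vertex-tuples $T_i(e) = (v_e^s)_{s \in G_i}$, so that two edges conflict in $\cM_i$ iff they have identical $G_i$-tuples. Setting $\cI = \bigcap_{i=1}^k \cI_i$ produces a $k$-intersection system by construction, and every matching of $G$ automatically lies in $\cI$ since distinct edges of a matching differ on every side.

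Next, I will specify a monotone non-negative $f$ with $\DSF{f} = \DDF{f} \leq d$ such that $\max_{S \in \cI} f(S)$ equals the maximum matching size of $G$. The naive choice $f(S) =$ ``max matching inside $S$'' gives the right optimum but in general has $\DSF{f} < \DDF{f}$ (an added edge can either increase or decrease another's marginal) and large dependency degrees. I will instead use a positive-coefficient polynomial $f(S) = \sum_{T} w(T)\,\mathbf{1}[T \subseteq S]$, where $T$ ranges over a carefully chosen family of subsets of $\cN$ each of size at most $d+1$ and $w(T) \geq 0$. Monotonicity is immediate; every marginal $f(u \mid S)$ is a sum of non-negative terms, so $\DS{u} = \DD{u}$ for every $u$; and if each $u \in \cN$ belongs to at most one such $T$, then $|\DS{u}| \leq d$, yielding the required degree bound.

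The main obstacle is the combinatorial design of the $T$'s and the weights $w(T)$ so that $\max_{S \in \cI} f(S)$ agrees with the maximum matching size of $G$. My intended intuition is that each $T$ should encode a ``local certificate'' of a matching: a set of at most $d+1$ edges that together form a matching on the sides of a single group $G_i$, paired with weights that force an $f$-maximizer over $\cI$ to contain certificates corresponding to an actual maximum matching of $G$. Correctness will be established in two directions: (i) any matching $M$ of $G$ lies in $\cI$ and realizes the optimum through the certificates it covers, and (ii) any $S \in \cI$ realizing that optimum can be pruned to a matching of $G$ of the same size, using the matroid constraints $\cM_i$ to rule out non-matching witnesses. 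Combined with the $k$-intersection structure of $\cI$, the monotonicity of $f$, and the degree bound from the previous paragraph, this would complete the proof.
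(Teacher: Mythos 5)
There is a genuine gap, and it sits exactly where your plan says ``the main obstacle'' is: the function that is supposed to make your weak independence system equivalent to $r$-dimensional matching is never constructed, and the restrictions you impose on it make the construction impossible in general. With your choice $\cN = E$ and matroids in which two edges conflict only when their \emph{entire} $G_i$-tuples coincide, the system $\cI$ contains many non-matchings: two edges that share a vertex on one side of $G_i$ but differ on another side of $G_i$ are independent in every $\cM_i$. So all of the work of ruling out vertex overlaps must be done by $f$. But you also require $f(S)=\sum_T w(T)\mathbf{1}[T\subseteq S]$ with $w(T)\geq 0$, $|T|\leq d+1$, and each edge in at most one $T$ (this is how you get $|\DS{u}|\leq d$); such an $f$ is just a weighted count of completed blocks of a fixed partition of (part of) $E$ into sets of size at most $d+1$. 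Its maximum over your $\cI$ need not equal, or even be proportional to, the matching number: consider a ``sunflower'' instance in which all edges share one vertex on a side of $G_1$ but pairwise differ on the remaining sides of every group, so the matching number is $1$ while arbitrarily many edges (and hence arbitrarily many disjoint blocks $T$) are simultaneously independent in your $\cI$. Your step (ii) (``any optimal $S\in\cI$ can be pruned to a matching of the same size'') has no visible route under these constraints, and without it the reduction does not preserve the optimum, which is what Theorem~\ref{th:hardness} needs.

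The paper's proof avoids this by changing the ground set rather than keeping whole edges: each edge $e$ is split into $d+1$ pieces $e(j)=e\cap(\bigcupdot_{i=(j-1)k+1}^{jk}V_i)$, one per block of $k$ consecutive sides, and the $k$ partition matroids live on the piece set, with $\cM_j$ forbidding two pieces to share a vertex of $V'_j=\bigcupdot_{i=0}^{d}V_{j+ki}$ (each piece has exactly one vertex there). Thus the hard constraints already force pairwise vertex-disjointness of the chosen pieces, and the objective $f(S)=\sum_{e\in E}\lfloor|S\cap\{e(j)\}_{j=1}^{d+1}|/(d+1)\rfloor$ pays $1$ only when all $d+1$ pieces of an edge are present; this couples each piece to exactly the $d$ other pieces of its own edge, giving $\DSF{f}=\DDF{f}=d$, and makes the optimum over $\cI$ exactly the maximum matching size. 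If you want to salvage your route, you essentially need this idea: the ``at most $d+1$ interacting objects'' must be $d+1$ fragments of a single edge whose joint presence certifies the whole edge, with the matroids enforcing disjointness fragment-by-fragment; keeping $\cN=E$ with only $k$ partition matroids cannot encode per-side disjointness across groups of $d+1>1$ sides.
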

\begin{proof}
For simplicity, assume $r = k(d + 1)$. Let $G = (\bigcupdot_{i = 1}^r V_i, E)$ be the graph representing the $r$-\textsf{Dimensional Matching} instance. We first construct a new graph $G'$ as follows. For every edge $e \in E$ and $1 \leq j \leq d + 1$, let $e(j) = e \cap (\bigcupdot_{i = (j - 1)k + 1}^{jk} V_i)$, \ie, $e(j)$ is the part of $e$ hitting the vertex sets $V_{(j - 1)k + 1}, \dotsc, V_{jk}$. The edges of the new graph $G' = (\bigcupdot_{i = 1}^r V_i, E')$ are then defined as all edges that can be obtained this way. More formally:
\[
	E' = \{e(j) \mid e \in E \text{ and } 1 \leq j \leq d + 1\}
	\enspace.
\]

It is easy to see that the original instance of $r$-\textsf{Dimensional Matching} is equivalent to the problem of finding a matching in $G'$ maximizing the objective function $f : 2^{E'} \rightarrow \mathbb{R}^+$ defined as follows.
\[
	f(S) = \sum_{e \in E} \left \lfloor \frac{|S \cap \{e(j) \mid 1 \leq j \leq d + 1\}|}{d + 1} \right \rfloor
	\enspace.
\]
Moreover, $\DDF{f} = \DSF{f} = d$. Thus, to complete the proof we only need to show that the set of all legal matchings of $G'$ can be represented as a $k$-intersection independence system.

Consider the following partition of the vertices of $G'$. For every $1 \leq j \leq k$, $V'_j = \bigcupdot_{i = 0}^{d} V_{j + ki}$. Observe that each edge of $G'$ contains exactly one vertex of $V'_j$. Hence, the constraint that no two edges intersect on a node of $V'_j$ can be represented by the partition matroid $M_j = (E', \cI_j)$ defined as following. A set $S \subseteq E'$ is in $\cI_j$ if and only if no two edges of $S$ intersect on a node of $V'_j$. The set of legal matchings of $G'$ is, then, exactly $\bigcap_{j = 1}^k \cI_j$.
\end{proof}
\section{Uniform matroid constraint} \label{ap:uniform}
In this section we prove Theorems~\ref{th:supermodular_uniform} and~\ref{th:hardness_uniform}.

\subsection{Algorithm for uniform matroid constraint (Proof of Theorem~\ref{th:supermodular_uniform})}
Algorithm~\ref{alg:ExtendibleSystemGreedy} given in Section~\ref{sc:extendible} provides a $1 / (\DSF{f} + 2)$ approximation for a general $k$-extendible constraint. In this section, our objective is to improve over this approximation ratio for uniform matroid constraints. Throughout the section we use $d$ to denote $\DSF{f}$.

\subsubsection{First attempt}
Algorithm~\ref{alg:SimpleGreedy} is a slight simplification of Algorithm~\ref{alg:ExtendibleSystemGreedy} adapted to the context of a uniform matroid. We show that this algorithm already has a better than $1 / (\DSF{f} + 2)$ approximation ratio for some values of the parameters.

\begin{algorithm}[h!t]
\caption{\textsf{Simple Greedy}$(f, k)$} \label{alg:SimpleGreedy}
Initialize: $S_0 \leftarrow \varnothing$, $\ell = \lfloor k / (d + 1) \rfloor$.\\
\For{$i$ = $1$ \KwTo $\ell$}
{
    Let $u_i \in \cN$ be the element maximizing $f(\DS{u_i} + u_i ~|~ S_{i - 1})$.\\
    $S_i \leftarrow S_{i - 1} \cup \DS{u_i} + u_i$.
}
Return $S_\ell$.
\end{algorithm}

The feasibility of $S_\ell$ follows immediately by the observation that Algorithm~\ref{alg:SimpleGreedy} selects elements in $\ell \leq k / (d + 1)$ iterations, and in each iteration it selects up to $d + 1$ elements.

\begin{lemma} \label{le:simple_approximation}
Let $OPT$ be an arbitrary optimal solution.
Then, for every $0 \leq i \leq \ell$, $f(S_i) \geq [1 - (1 - 1/k)^i] \cdot f(OPT)$.
\end{lemma}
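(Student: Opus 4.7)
The plan is to prove the lemma by induction on $i$, following the standard greedy analysis for monotone submodular maximization under a cardinality constraint, but adapted to account for the supermodular dependencies. The base case $i=0$ is immediate since the right-hand side vanishes and $f(S_0) \geq 0$. For the inductive step, it suffices to establish the recurrence
\[
    f(S_i) - f(S_{i-1}) \;\geq\; \frac{1}{k}\bigl(f(OPT) - f(S_{i-1})\bigr),
\]
since unrolling this yields $f(OPT) - f(S_i) \leq (1 - 1/k)\bigl(f(OPT) - f(S_{i-1})\bigr) \leq (1-1/k)^i f(OPT)$.

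To establish the recurrence, I would first observe that $f(S_i) - f(S_{i-1}) = f(\DS{u_i} + u_i \mid S_{i-1})$ by construction. Writing $OPT = \{o_1, \ldots, o_m\}$ with $m \leq k$ in some arbitrary order, monotonicity gives $f(OPT) - f(S_{i-1}) \leq f(OPT \mid S_{i-1})$, and the telescoping identity expresses the latter as a sum of marginals $\sum_{j=1}^{m} f(o_j \mid S_{i-1} \cup \{o_1, \ldots, o_{j-1}\})$. The greedy choice guarantees $f(\DS{u_i} + u_i \mid S_{i-1}) \geq f(\DS{o_j} + o_j \mid S_{i-1})$ for every $o_j \in OPT$, so the recurrence will follow once I prove the per-element bound
\[
    f\bigl(o_j \mid S_{i-1} \cup \{o_1, \ldots, o_{j-1}\}\bigr) \;\leq\; f(\DS{o_j} + o_j \mid S_{i-1}),
\]
then sum over $j$ and divide by $k$.

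The main (and really only) technical step is this per-element bound; this is where the supermodular degree enters. By Definition~\ref{def:supermodular}, any element $v \notin \DS{o_j}$ satisfies $f(o_j \mid T + v) \leq f(o_j \mid T)$ for every $T$, so peeling off the elements of $\{o_1, \ldots, o_{j-1}\}$ lying outside $\DS{o_j}$ one at a time yields
\[
    f\bigl(o_j \mid S_{i-1} \cup \{o_1, \ldots, o_{j-1}\}\bigr) \;\leq\; f\bigl(o_j \mid S_{i-1} \cup A_j\bigr),
\]
where $A_j \eqdef \{o_1, \ldots, o_{j-1}\} \cap \DS{o_j} \subseteq \DS{o_j}$. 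Then two applications of monotonicity of $f$ (upper-bounding $f(S_{i-1} \cup A_j + o_j)$ by $f(S_{i-1} \cup \DS{o_j} + o_j)$ and lower-bounding $f(S_{i-1} \cup A_j)$ by $f(S_{i-1})$) give $f(o_j \mid S_{i-1} \cup A_j) \leq f(\DS{o_j} + o_j \mid S_{i-1})$, completing the bound.

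The hardest part is really just keeping the supermodular-degree step honest: one must remove elements outside $\DS{o_j}$ one at a time (rather than all at once), because Definition~\ref{def:supermodular} is a pointwise rather than a set-wise statement. Once that is done, the rest of the argument is the textbook greedy recurrence, with $k$ playing the role of the cardinality bound on $|OPT|$ and the quantity $f(\DS{u_i} + u_i \mid S_{i-1})$ serving as the proxy for the greedy marginal.
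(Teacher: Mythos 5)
Your proof is correct and takes essentially the same route as the paper's: induction yielding the recurrence $f(S_i) - f(S_{i-1}) \geq \tfrac{1}{k}\bigl(f(OPT) - f(S_{i-1})\bigr)$, obtained by ordering $OPT$, telescoping, and bounding each marginal $f(o_j \mid S_{i-1} \cup \{o_1,\dots,o_{j-1}\})$ by $f(\DS{o_j} + o_j \mid S_{i-1})$ using the supermodular-degree definition together with monotonicity, then invoking the greedy choice. The only difference is cosmetic: the paper splits $f(\DS{o_j} \cap \{o_1,\dots,o_{j-1}\} + o_j \mid S_{i-1})$ into two marginals and drops the nonnegative one, while you reach the same inequality by two applications of monotonicity.
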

\begin{proof}
We prove the theorem by induction on $i$. For $i = 0$ the claim is trivial since $f(S_0) \geq 0 = [1 - (1 - 1/k)^0] \cdot f(OPT)$. Next, assume the claim holds for $i - 1$, and let us prove it for $i > 0$. Order the elements of $OPT$ in an arbitrary order $v_1, v_2, \ldots, v_k$ (by monotonicity, we may assume $|OPT| = k$), and let $OPT_j = \{v_h ~|~ 1 \leq h \leq j\}$. Then, for every element $v_j \in OPT$:
\begin{align*}
    f(\DS{v_j} + v_j ~|~ S_{i - 1})
    \geq{} &
    f(\DS{v_j} \cap OPT_{j - 1} + v_j ~|~ S_{i - 1})\\
    ={} &
    f(v_j ~|~ (\DS{v_j} \cap OPT_{j - 1}) \cup S_{i - 1}) + f(\DS{v_j} \cap OPT_{j - 1} ~|~ S_{i - 1})\\
    \geq{} &
    f(v_j ~|~ (\DS{v_j} \cap OPT_{j - 1}) \cup S_{i - 1})
    \geq
    f(v_j ~|~ OPT_{j - 1} \cup S_{i - 1})
    \enspace,
\end{align*}
where the first and second inequalities follow by monotonicity, and the last by Definition~\ref{def:supermodular} (supermodular degree). Summing up the above inequality over all elements of $OPT$, we get:
\[
    \sum_{j = 1}^k f(\DS{v_j} + v_j ~|~ S_{i - 1})
    \geq
    \sum_{j = 1}^k f(v_j ~|~ OPT_{j - 1} \cup S_{i - 1})
    =
    f(OPT \cup S_{i - 1}) - f(S_{i - 1})
    \geq
    f(OPT) - f(S_{i - 1})
    \enspace,
\]
where the first equality follows by Definition~\ref{def:marginal} (Marginal set function) and
the second inequality follows by monotonicity. Hence, there must exists an element $v \in OPT$ such that $f(\DS{v} + v ~|~ S_{i - 1}) \geq [f(OPT) - f(S_{i - 1})]/k$. Since $v$ is a potential candidate to be $u_i$ (\ie., the element selected by greedy choice of Algorithm~\ref{alg:SimpleGreedy}),
\begin{align*}
    f(S_i)
    ={} &
    f(\DS{u_i} + u_i ~|~ S_{i - 1}) + f(S_{i - 1})
    \geq
    \frac{f(OPT) - f(S_{i - 1})}{k} + f(S_{i - 1})\\
    ={} &
    \frac{f(OPT)}{k} + \frac{k - 1}{k} \cdot f(S_{i - 1})
    \geq
    \frac{f(OPT)}{k} + \frac{k - 1}{k} \cdot [1 - (1 - 1/k)^{i - 1}] \cdot f(OPT)\\
    = &
    [1 - (1 - 1/k)^i] \cdot f(OPT)
    \enspace,
\end{align*}
where the last inequality follows by induction hypothesis.
\end{proof}

\begin{corollary}
If $d + 1$ divides $k$, then the approximation ratio of Algorithm~\ref{alg:SimpleGreedy} is $1 - e^{-1/(d + 1)}$. Otherwise, the approximation ratio of Algorithm~\ref{alg:SimpleGreedy} is at least $1 - e^{-1/(d + 1)} - O(1/k)$.
\end{corollary}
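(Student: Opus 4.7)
The plan is to instantiate Lemma~\ref{le:simple_approximation} at the final iteration $i = \ell$, which immediately yields
\[
    f(S_\ell) \geq \left[1 - (1 - 1/k)^\ell\right] \cdot f(OPT),
\]
and then reduce both parts of the corollary to estimating the quantity $(1 - 1/k)^\ell$ via the standard inequality $(1 - 1/k)^k \leq e^{-1}$, which holds for every integer $k \geq 1$.

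In the divisibility case, $\ell = k/(d + 1)$ is exact, so I would write
\[
    (1 - 1/k)^\ell = \left[(1 - 1/k)^k\right]^{1/(d+1)} \leq e^{-1/(d+1)},
\]
and conclude $f(S_\ell) \geq (1 - e^{-1/(d+1)}) \cdot f(OPT)$. For the general case, the only additional step is to absorb the rounding in $\ell = \lfloor k/(d+1) \rfloor \geq k/(d+1) - 1$. This gives
\[
    (1 - 1/k)^\ell \leq (1 - 1/k)^{k/(d+1) - 1} = (1 - 1/k)^{k/(d+1)} \cdot \frac{k}{k - 1} \leq e^{-1/(d+1)} \cdot \frac{k}{k - 1},
\]
and since $k/(k-1) = 1 + 1/(k-1)$, plugging this into the lemma's bound yields
\[
    f(S_\ell) \geq \left[1 - e^{-1/(d+1)} - \frac{e^{-1/(d+1)}}{k-1}\right] \cdot f(OPT) = \left[1 - e^{-1/(d+1)} - O(1/k)\right] \cdot f(OPT).
\]

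There is essentially no serious obstacle: the proof is a two-line calculation once Lemma~\ref{le:simple_approximation} is in hand. The only point requiring any care is to correctly track the effect of the floor in the definition of $\ell$, which costs at most a single factor of $(1 - 1/k)^{-1}$ and therefore only an additive $O(1/k)$ term, consistent with the stated bound.
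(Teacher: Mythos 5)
Your proposal is correct and follows essentially the same route as the paper: apply Lemma~\ref{le:simple_approximation} at $i = \ell$, use $(1-1/k)^k \leq e^{-1}$, and absorb the floor in $\ell$ as an extra factor of $(1-1/k)^{-1}$ costing only an additive $O(1/k)$. The only cosmetic difference is that you bound this factor by $k/(k-1) = 1 + 1/(k-1)$ while the paper bounds the non-divisible case via $e^{1/k}$ and concludes with $-2/k$; both yield the stated $O(1/k)$ loss.
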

\begin{proof}
Algorithm~\ref{alg:SimpleGreedy} outputs a set $S_\ell$, which by Lemma~\ref{le:simple_approximation} has a value of at least $[1 - (1 - 1/k)^\ell] \cdot f(OPT)$. If $d + 1$ divides $k$, then $\ell = k / (d + 1)$, and thus:
\[
	1 - (1 - 1/k)^\ell
	=
	1 - (1 - 1/k)^{k / (d + 1)}
	\geq
	1 - e^{-1/(d + 1)}
	\enspace.
\]
Otherwise, $\ell \geq k / (d + 1) - 1$, and thus:
\[
    1 - (1 - 1/k)^\ell
    \geq
    1 - (1 - 1/k)^{k/(d + 1) - 1}
    \geq
    1 - e^{-1/(d + 1)}e^{1/k}
		\geq
		1 - e^{-1/(d + 1)} - 2/k
    \enspace.
    \qedhere
\]
\end{proof}

Algorithm~\ref{alg:SimpleGreedy}, obviously, behaves very poorly when $k < d + 1$. However, this can be easily fixed by adding an additional phase to the algorithm as described by Algorithm~\ref{alg:SimpleGreedyFixed}.

\begin{algorithm}[h!t]
\caption{\textsf{Simple Greedy}$(f, k)$} \label{alg:SimpleGreedyFixed}
Initialize: $S_0 \leftarrow \varnothing$, $\ell = \lfloor k / (d + 1) \rfloor$.\\
\For{$i$ = $1$ \KwTo $\ell$}
{
    Let $u_i \in \cN$ be the element maximizing $f(\DS{u_i} + u_i ~|~ S_{i - 1})$.\\
    $S_i \leftarrow S_{i - 1} \cup \DS{u_i} + u_i$.
}
\If{$|S_{\ell}| = k$}
{
	Return $S_\ell$.\\
}
\Else
{
	Let $u_{\ell + 1}$ and $D^+_{best}(u_{\ell+1}) \subseteq \DS{u_{\ell + 1}}$ be a pair of an element and a set of size at most $k - |S_\ell|$ maximizing $f(D^+_{best}(u_{\ell+1}) + u_{\ell + 1} \mid S_{\ell})$.\\
	Return $S_\ell \cup D^+_{best}(u_{\ell+1}) + u_{\ell + 1}$.
}
\end{algorithm}

We are not aware of any example showing that the approximation ratio of Algorithm~\ref{alg:SimpleGreedyFixed} is worse than $1 - e^{-1/(d + 1)}$. We leave the problem of either finding such an example or improving the analysis of Algorithm~\ref{alg:SimpleGreedyFixed} as an open problem.

\subsubsection{Better approximation ratio}
In this section we analyse a more involved variant of Algorithm~\ref{alg:SimpleGreedy} achieving the approximation ratio of $1 - e^{-1/(d + 1)}$ guaranteed by Theorem~\ref{th:supermodular_uniform}.

Fix an arbitrary optimal solution $OPT$ of size $k$ (such an optimal solution exists by monotonicity), and let $d'$ be the maximum size of $\DS{u} \cap OPT$ for every $u \in OPT$. One can check the proof of Algorithm~\ref{alg:SimpleGreedy} and verify that if every reference to $d$ in the algorithm is replaced by $d'$ and $d' + 1$ happens to divide $k$, then the resulting algorithm has an approximation ratio of $1 - e^{-1/(d' + 1)} \geq 1 - e^{-1/(d + 1)}$.

An algorithm can guess\footnote{By ``guess'' we mean exhaustive search.} $d'$. 
To make sure that $d' + 1$ divides $k$ it might be necessary to modify $k$ by guessing some of the elements of $OPT$. Algorithm~\ref{alg:GuessGreedy} implements these ideas.
\begin{algorithm}[h!t]
\caption{\textsf{Guess Greedy}$(f, k)$} \label{alg:GuessGreedy}
Guess: $d'$, an element $u^*$ for which $d' = |\DS{u^*} \cap OPT|$ and the set $C = \DS{u^*} \cap OPT$ itself.\\
Initialize: let $r = k \bmod (d' + 1)$, $\ell = (k - r) / (d' + 1)$  and $S_0 \subseteq C$ be an arbitrary subset of size $r$.\\
\For{$i$ = $1$ \KwTo $\ell$}
{
    Let $u_i \in \cN$ and $D^+_{best}(u_i) \subseteq \DS{u_i}$ be a pair of an element and a set of size at most $d'$ maximizing $f(D^+_{best}(u_i) + u_i ~|~ S_{i - 1})$. \label{ln:selection}\\
    $S_i \leftarrow S_{i - 1} \cup D^+_{best}(u_i) + u_i$.
}
Return $S_\ell$.
\end{algorithm}

\begin{observation}
The time complexity of Algorithm~\ref{alg:GuessGreedy} is polynomial in $n$ and $2^d$.
\end{observation}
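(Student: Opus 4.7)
The plan is to bound the running time by multiplying the cost of the guessing phase, the number of iterations of the main loop, and the cost of each iteration's greedy selection, and observe that every factor is either polynomial in $n$ or of order $2^d$.

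First I would handle the guessing phase. The value $d'$ ranges over $\{0, 1, \ldots, d\}$, so there are at most $d+1 \leq n$ choices. The element $u^*$ ranges over $\cN$, giving $n$ choices. Finally, $C$ is a subset of $\DS{u^*}$, which has size at most $d$, so there are at most $2^d$ candidate sets $C$. Hence the total number of guesses the algorithm must enumerate is bounded by $(d+1) \cdot n \cdot 2^d$, which is polynomial in $n$ and $2^d$. (One can also verify correctness of a guess against $OPT$ only implicitly by returning the best solution across all guesses, so no extra oracle call is needed to ``check'' a guess.)

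Next I would bound the work done for a single guess. The outer \textbf{for} loop runs $\ell = (k-r)/(d'+1) \leq k \leq n$ times. Inside each iteration, Line~\ref{ln:selection} must find the pair $(u_i, D^+_{best}(u_i))$ maximizing the marginal. The candidate $u_i$ ranges over at most $n$ elements, and for each such candidate the subset $D^+_{best}(u_i)$ is drawn from $\DS{u_i}$, whose size is at most $d$; hence there are at most $2^d$ subsets to examine per candidate element. Each examined pair requires one call to the value oracle (plus one call to the supermodular oracle to retrieve $\DS{u_i}$, performed once per element). Thus each iteration costs $O(n \cdot 2^d)$ oracle calls.

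Putting the pieces together, the overall running time is $O\bigl((d+1) \cdot n \cdot 2^d\bigr) \cdot \ell \cdot O(n \cdot 2^d) = O(n^3 \cdot 4^d \cdot d)$ oracle calls, which is polynomial in $n$ and $2^d$, as claimed. No step of the argument is delicate; the only thing to be careful about is that the sizes of both the dependency set $\DS{u^*}$ used in the guessing phase and of $\DS{u_i}$ used in the selection phase are bounded by $d = \DSF{f}$, which is precisely the definition of the supermodular degree, so each contributes the $2^d$ factor rather than a $2^n$ factor.
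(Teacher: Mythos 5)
Your proof is correct and follows the same structure as the paper's: bound the number of guesses by $n(d+1)\cdot 2^d$ (using that $C \subseteq \DS{u^*}$ and $|\DS{u^*}| \leq d$) and observe that the work per guess is polynomial, which the paper states without detail and you spell out as $\ell \leq n$ iterations each costing $O(n \cdot 2^d)$ oracle calls. The extra explicitness (including the remark that a guess need not be verified, since the best output over all guesses is returned) is fine and does not change the argument.
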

\begin{proof}
It is easy to check that the algorithm uses only a polynomial time (in $n$ and $2^d$) for every given guess. Thus, we only need to bound the number of possible guesses. The algorithm has $n(d + 1)$ possible guesses for $d'$ and $u^*$. For every such guess, there are at most $2^d$ possible guesses for $C$.
\end{proof}

\begin{observation}
Algorithm~\ref{alg:GuessGreedy} returns a feasible solution.
\end{observation}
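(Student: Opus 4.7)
The plan is to verify the two requirements implicit in the observation: first that the set $S_0$ initialized in the algorithm is well-defined, and second that the returned set $S_\ell$ satisfies the uniform matroid constraint, i.e., $|S_\ell| \leq k$.

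First I would check that $S_0$ is well-defined. The algorithm requires an arbitrary subset of $C$ of size $r = k \bmod (d' + 1)$. Since $C = \DS{u^*} \cap OPT$ has cardinality exactly $d'$ (by the algorithm's guess), and since $r = k \bmod (d' + 1) \leq d'$, such a subset exists. (The edge case $d' = 0$ forces $r = 0$ and $S_0 = \varnothing$, which is fine.)

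Next I would argue by induction that $|S_i| \leq r + i(d' + 1)$ for every $0 \leq i \leq \ell$. The base case $i = 0$ is immediate from $|S_0| = r$. For the inductive step, at iteration $i$ the algorithm selects a pair $(u_i, D^+_{best}(u_i))$ with $|D^+_{best}(u_i)| \leq d'$, and sets $S_i \leftarrow S_{i-1} \cup D^+_{best}(u_i) + u_i$. Therefore
\[
    |S_i| \leq |S_{i-1}| + |D^+_{best}(u_i)| + 1 \leq |S_{i-1}| + (d' + 1) \leq r + i(d' + 1) \enspace,
\]
using the inductive hypothesis in the last step.

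Plugging in $i = \ell$ and recalling that $\ell = (k - r)/(d' + 1)$, we obtain $|S_\ell| \leq r + \ell(d' + 1) = r + (k - r) = k$. Since the feasible sets of a uniform matroid of rank $k$ are precisely the subsets of $\cN$ of size at most $k$, $S_\ell$ is feasible. I do not anticipate any genuine obstacle here; the only subtlety worth stating explicitly is that the bound $|C| = d' \geq r$ guaranteeing $S_0$'s existence follows from the definition $r = k \bmod (d' + 1)$.
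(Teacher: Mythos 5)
Your proof is correct and follows essentially the same route as the paper: bound the growth of $S_i$ by $d'+1$ per iteration and use $\ell = (k-r)/(d'+1)$ to conclude $|S_\ell| \leq r + \ell(d'+1) = k$. The additional check that $S_0$ is well-defined (since $r = k \bmod (d'+1) \leq d' = |C|$) is a small extra observation the paper leaves implicit, but it does not change the argument.
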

\begin{proof}
For every $1 \leq i \leq \ell$, $|S_i| - |S_{i - 1}| \leq d' + 1$ elements. Thus:
\[
	|S_\ell|
	\leq
	|S_0| + (d' + 1) \ell
	=
	r + (d' + 1) \cdot \frac{k - r}{d' + 1}
	=
	k
	\enspace.
	\qedhere
\]
\end{proof}

We turn our attention to analysing the approximation ratio of Algorithm~\ref{alg:GuessGreedy}. To simplify the notation, we define $k' \eqdef k - r$. Observe that $\ell = k' / (d' + 1)$.

\begin{lemma} \label{le:approximation_guess}
For every $0 \leq i \leq \ell$, $f(S_i) \geq (1 - 1/k')^i \cdot f(S_0) + [1 - (1 - 1/k')^i] \cdot f(OPT)$.
\end{lemma}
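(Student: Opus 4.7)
The plan is to prove this by induction on $i$, following the template of Lemma~\ref{le:simple_approximation} but with a sharper averaging argument that exploits the fact that $S_0 \subseteq OPT$. The base case $i = 0$ is immediate. For the inductive step, I will rewrite the target as
\[
f(S_i) \geq (1 - 1/k')\bigl[(1-1/k')^{i-1} f(S_0) + (1 - (1-1/k')^{i-1}) f(OPT)\bigr] + (1/k') f(OPT),
\]
so by the induction hypothesis it suffices to show the one-step bound $f(S_i) \geq (1 - 1/k') f(S_{i-1}) + (1/k') f(OPT)$, or equivalently that the greedy gain satisfies $f(D^+_{best}(u_i) + u_i \mid S_{i-1}) \geq (f(OPT) - f(S_{i-1}))/k'$.

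To produce this gain bound, I will order $OPT = \{v_1, \dots, v_k\}$ so that all elements of $OPT \cap S_{i-1}$ come first. Since $S_0 \subseteq C \subseteq OPT$ and $S_0 \subseteq S_{i-1}$ by the monotone growth of the greedy sets, $|OPT \cap S_{i-1}| \geq r$. Writing $OPT_j = \{v_1,\dots,v_j\}$, every $v_j$ that already lies in $S_{i-1}$ contributes a zero term to the telescoping identity $f(OPT \cup S_{i-1}) - f(S_{i-1}) = \sum_{j=1}^k f(v_j \mid OPT_{j-1} \cup S_{i-1})$, which therefore reduces to a sum of at most $k - r = k'$ nonzero terms. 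Combining this with monotonicity ($f(OPT \cup S_{i-1}) \geq f(OPT)$) and averaging, some index $j^*$ satisfies $f(v_{j^*} \mid OPT_{j^*-1} \cup S_{i-1}) \geq (f(OPT) - f(S_{i-1}))/k'$.

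To connect this to the algorithm, I will exhibit $(v_{j^*}, D)$ with $D = \DS{v_{j^*}} \cap OPT_{j^*-1}$ as a feasible candidate on Line~\ref{ln:selection}. By the defining property of the guessed value $d' = \max_{u \in OPT} |\DS{u} \cap OPT|$, we have $|D| \leq d'$, so the pair is admissible. Two standard manipulations then give
\[
f(D + v_{j^*} \mid S_{i-1}) \geq f(v_{j^*} \mid D \cup S_{i-1}) \geq f(v_{j^*} \mid OPT_{j^*-1} \cup S_{i-1}),
\]
where the first inequality drops the nonnegative term $f(D \mid S_{i-1})$ by monotonicity, and the second removes the non-supermodular-dependencies $OPT_{j^*-1} \setminus \DS{v_{j^*}}$ from the conditioning set, which can only increase the marginal by Definition~\ref{def:supermodular}. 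Since the algorithm selects the best pair, the required one-step bound follows. The main subtlety is the simultaneous bookkeeping needed to get both the factor $1/k'$ (via $S_0 \subseteq OPT$, suppressing $r$ terms) and the cardinality constraint $|D| \leq d'$ (via the definition of $d'$ applied to $v_{j^*} \in OPT$); once these two ingredients are lined up, the rest is a direct adaptation of the supermodular averaging in Lemma~\ref{le:simple_approximation}.
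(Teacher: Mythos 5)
Your proof is correct and takes essentially the same route as the paper's: induction combined with a telescoping/averaging argument over an ordering of the optimal solution, exhibiting the pair $(v_{j^*}, \DS{v_{j^*}} \cap OPT_{j^*-1})$ as an admissible candidate of size at most $d'$ and lower-bounding its gain via monotonicity and the supermodular-degree property. The only cosmetic difference is that the paper telescopes over $OPT \setminus S_0$ (exactly $k'$ elements) with candidate sets $\DS{v_j} \cap OPT$, while you telescope over all of $OPT$ and discard the zero terms coming from $OPT \cap S_{i-1}$; the bookkeeping is equivalent.
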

\begin{proof}
We prove the theorem by induction. For $i = 0$ the claim is trivial since $f(S_0) = (1 - 1/k')^0 \cdot f(S_0) + [1 - (1 - 1/k')^0] \cdot f(OPT)$. Next, assume the claim holds for $i - 1$, and let us prove it for $i > 0$. Observe that $S_0 \subseteq OPT$, and therefore, $|OPT \setminus S_0| = k'$. Order the elements of $OPT \setminus S_0$ in an arbitrary order $v_1, v_2, \ldots, v_{k'}$, and let $OPT_j = \{v_h ~|~ 1 \leq h \leq j\}$. Then, for every element $v_j \in OPT \setminus S_0$,
\begin{align*}
    f(\DS{v_j} \cap OPT + v_j ~|~ S_{i - 1})
    ={} &
    f(v_j ~|~ (\DS{v_j} \cap OPT_{j - 1}) \cup S_{i - 1}) + f(\DS{v_j} \cap OPT_{j - 1} ~|~ S_{i - 1})\\
    \geq{} &
    f(v_j ~|~ (\DS{v_j} \cap OPT_{j - 1}) \cup S_{i - 1})
    \geq
    f(v_j ~|~ OPT_{j - 1} \cup S_{i - 1})
    \enspace,
\end{align*}
where the equality follows by Definition~\ref{def:marginal} (Marginal set function);
the first inequality follows by monotonicity and the second by Definition~\ref{def:supermodular} (supermodular degree). 
Summing up the above inequality over all elements of $OPT \setminus S_0$, we get:
\begin{align*}
		\sum_{j = 1}^{k'} f(\DS{v_j} \cap OPT + v_j \mid S_{i - 1})
		\geq{} &
    \sum_{j = 1}^{k'} f(v_j ~|~ OPT_{j - 1} \cup S_{i - 1})\\
    ={} &
    f(OPT \cup S_{i - 1}) - f(S_{i - 1})
    \geq
    f(OPT) - f(S_{i - 1})
    \enspace,
\end{align*}
where the equality follows by Definition~\ref{def:marginal} (Marginal set function)
and the second inequality follows by monotonicity. 
Note that the pair $(v_j, \DS{v_j} \cap OPT)$ is a candidate pair to be selected as $(u_i, D^+_{best}(u_i))$ for every element $v_j \in OPT \setminus S_0$, since $d'=\max\limits_{u \in OPT} | \DS{u} \cap OPT |$.
Hence, $f(D^+_{best}(u_i) + u_i \mid S_{i - 1}) \geq [f(OPT) - f(S_{i - 1})]/k'$. Thus,
\begin{align*}
    f(S_i)
    ={} &
    f(D^+_{best}(u_i) + u_i ~|~ S_{i - 1}) + f(S_{i - 1})
    \geq
    \frac{f(OPT) - f(S_{i - 1})}{k'} + f(S_{i - 1})\\
    ={} &
    \frac{f(OPT)}{k'} + \frac{k' - 1}{k'} \cdot f(S_{i - 1})\\
    \geq{} &
    \frac{f(OPT)}{k'} + \frac{k' - 1}{k'} \cdot \left[(1 - 1/k')^{i - 1} \cdot f(S_0) + [1 - (1 - 1/k')^{i-1}] \cdot f(OPT)\right]\\
    ={} &
    (1 - 1/k')^i \cdot f(S_0) + [1 - (1 - 1/k')^i] \cdot f(OPT)
    \enspace,
\end{align*}
where the last inequality follows by induction hypothesis.
\end{proof}

\begin{corollary}
The approximation ratio of Algorithm~\ref{alg:GuessGreedy} is $1 - e^{-1/(d' + 1)} \geq 1 - e^{-1/(d + 1)}$.
\end{corollary}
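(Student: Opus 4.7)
The plan is to apply Lemma~\ref{le:approximation_guess} at the final index $i = \ell$ and then translate the resulting additive inequality into the claimed multiplicative guarantee using the standard estimate $(1 - 1/k')^{k'} \leq 1/e$. The key observations are that $S_0 \subseteq C \subseteq OPT$ (so the term $f(S_0) \geq 0$ may be dropped without loss), that $\ell = k'/(d'+1)$ by construction, and that the exhaustive ``guess'' at the top of Algorithm~\ref{alg:GuessGreedy} enumerates all admissible triples $(d', u^*, C)$ and returns the best resulting solution, so it suffices to analyze the iteration corresponding to the correct guess.

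For such a correct guess, Lemma~\ref{le:approximation_guess} applied at $i = \ell$ yields
\[
f(S_\ell) \geq (1 - 1/k')^{\ell} \cdot f(S_0) + \bigl[1 - (1 - 1/k')^{\ell}\bigr] \cdot f(OPT) \geq \bigl[1 - (1 - 1/k')^{k'/(d'+1)}\bigr] \cdot f(OPT),
\]
where the second inequality drops the non-negative term $(1-1/k')^{\ell} f(S_0)$. The standard bound $(1 - 1/k')^{k'} \leq 1/e$ then gives $(1 - 1/k')^{k'/(d'+1)} \leq e^{-1/(d'+1)}$, which yields the instance-dependent approximation ratio $1 - e^{-1/(d'+1)}$.

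The second inequality in the statement follows since $d' = |\DS{u^*} \cap OPT| \leq |\DS{u^*}| \leq \DSF{f} = d$ by definition of the supermodular degree, so $1/(d'+1) \geq 1/(d+1)$ and therefore $1 - e^{-1/(d'+1)} \geq 1 - e^{-1/(d+1)}$.

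One boundary case deserves a brief check: if $k' = 0$ then $\ell = 0$ and $r = k$; since $r \leq d' \leq |OPT| = k$ we must have $d' = k$, so $|C| = d' = k$ and $S_0 \subseteq C \subseteq OPT$ with $|S_0| = k = |OPT|$ forces $S_0 = OPT$, meaning the algorithm returns $OPT$ exactly and the bound holds trivially. I do not anticipate a serious obstacle beyond confirming this edge case and invoking the exponential estimate above.
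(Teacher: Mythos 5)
Your proof is correct and follows essentially the same route as the paper's: apply Lemma~\ref{le:approximation_guess} at $i=\ell$, drop the non-negative $f(S_0)$ term, and use $(1-1/k')^{k'/(d'+1)} \leq e^{-1/(d'+1)}$ together with $d' \leq d$. The $k'=0$ edge case you check never actually arises (since $u^* \notin \DS{u^*}$, we have $d' \leq k-1$ and hence $k' \geq 1$), but including it is harmless.
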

\begin{proof}
By Lemma~\ref{le:approximation_guess}, Algorithm~\ref{alg:GuessGreedy} outputs a set $S_\ell$ obeying:
\begin{align*}
    f(S_\ell)
    \geq{} &
    (1 - 1/k')^\ell \cdot f(S_0) + [1 - (1 - 1/k')^\ell] \cdot f(OPT)\\
    \geq{} &
    [1 - (1 - 1/k')^{k'/(d' + 1)}] \cdot f(OPT)
    \geq
    [1 - e^{-1/(d' + 1)}] \cdot f(OPT)
    \enspace.
    \qedhere
\end{align*}
\end{proof}

\subsection{Hardness (Proof of Theorem~\ref{th:hardness_uniform})}
The \textsc{Gap Small-Set Expansion problem} (introduced by \cite{RS10}) is the following promise problem.
\begin{problem} [\textsc{Gap Small-Set Expansion}($\eta, \delta$)] \label{prob:sse}
\ \\\textsf{Input:} An undirected graph $G=(V,E)$.
\\\textsf{Output:} Is $\phi_G(\delta) \geq 1-\eta$ or $\phi_G(\delta) \leq \eta$?
($\phi_G(\delta)$ is the edge expansion of $G$ with respect to subsets of size exactly $\delta |V|$.)
\end{problem}

The Small-Set Expansion Hypothesis (SSE), introduced by Raghavendra and Steurer \cite{RS10} (see, also, \cite{RST12}) is the following.
\begin{hypothesis}
For every $\eta > 0$, there exists $\delta$ such that Problem~\ref{prob:sse} parametrized by $\eta$ and $\delta$ is $\NP$-hard.
\end{hypothesis}

A hypergraph representation $F = (V_F, E_F, w_f)$ of a set function $f: 2^{\cN} \to \nonnegR$ (defined by \cite{CEEM08, CSS05}) is the following. The set $V_F$ contains exactly a single vertex for each element of the ground set $\cN$. The set $E_F$ is the set of the hyperedges of the hypergraph. The function $w_F : E_F \rightarrow \mathbb{R}$ assigns a real value for each hyperedge $e \in E_F$, and these values obey the following property: for every set $S \subseteq \cN$, the sum of the values of the hyperedges in the hypergraph induced by the vertices representing $S$ is exactly $f(S)$.
It is well known that any set function (normalized to have $f(\varnothing)=0$) can be uniquely represented by a hypergraph representation and vice versa.

Using the above definition, we show that maximizing a monotone set function subject to a uniform matroid constraint captures Problem~\ref{prob:sse}. Let $G=(V,E)$ be an arbitrary instance of Problem~\ref{prob:sse} with parameters $\eta$ and $\delta$.
We construct from $G$ an hypergraph representation $F=(V_F, E_F, w_F)$. The sets $V_F$ and $E_F$ are chosen as identical to $V$ and $E$, respectively (\ie, all hyperedges are of rank~2, hence, the hypergraph is in fact a graph). The value function $w_F$ gives a value of~1 for every edge of $E_F$. We can now consider the problem of finding a set of size at most $\delta |V|$ maximizing the set function $f$ corresponding to the hypergraph representation $F$. Theorem~\ref{th:hardness_uniform} follows immediately by the observation that a constant approximation for the last problem implies a constant approximation for Problem~\ref{prob:sse}.
\section{Future Research}
We view this work as a proof of concept showing that one can obtain interesting results for the problem of maximizing an arbitrary monotone set function subject to non-trivial constraints.
We would like to point out two possible directions for future research.
The first direction is studying the approximation ratio that can be guaranteed for more general problems as a function of the supermodular degree. Two possible such generalizations are a general $k$-system constraint and a non-monotone objective. Note that non-monotone objectives are interesting even in the unconstrained case.

The second direction is determining the guarantees that can be achieved for other complexity measures (with respect to either monotone or non-monotone set functions).
Specifically, we would like to draw attention to two complexity measures introduced by \cite{FFIILS14}, namely MPH (for monotone set functions) and PLE (for not necessarily monotone set functions). Both measures are based on fractionally sub-additive functions, a strict super-class of submodular functions, and they generally give lower values to set functions in comparison to the supermodular degree.
Thus, it is intriguing to show positive results for either of these measures.

\paragraph{Acknowledgments.}
Work of Moran Feldman is supported in part by ERC Starting Grant 335288-OptApprox.
Work of Rani Izsak is supported in part by the Israel Science Foundation (grant No. 621/12) and by the I-CORE Program of the Planning and Budgeting Committee and the Israel Science Foundation (grant No. 4/11).
We are grateful to Uri Feige and Irit Dinur for valuable discussions.
We are also grateful to Chidambaram Annamalai for his comments on a previous version of this manuscript.
\bibliographystyle{plain}
\bibliography{supermodular}

\appendix

\section{Proof of Theorem~\ref{th:dependency_extendible}} \label{ap:dependency_extendile}
We consider in this section Algorithm~\ref{alg:ExtendibleSystemGreedyDependency}, and prove it fulfills all the guarantees of Theorem~\ref{th:dependency_extendible}. 

\begin{algorithm}[h!t]
\caption{\textsf{Extendible System Greedy - Dependency Degree}$(f, \cI)$} \label{alg:ExtendibleSystemGreedyDependency}
Initialize: $S_0 \leftarrow \varnothing$, $i \leftarrow 0$.\\
\While{$S_i$ is not a base}
{
    $i \leftarrow i + 1$.\\
    Let $u_i \in \cN \setminus S_{i - 1}$ and $D_{best}(u_i) \subseteq \DD{u_i}$ be a pair of an element and a set maximizing $f(u_i ~|~ D_{best}(u_i) \cup S_{i - 1})$ among all pairs obeying $S_{i - 1} \cup D_{best}(u_i) + u_i \in \cI.$ \label{ln:selection_extendible_system_dependency}\\
    $S_i \leftarrow S_{i - 1} \cup D_{best}(u_i) + u_i$.
}
Return $S_i$.
\end{algorithm}

For the analysis of Algorithm~\ref{alg:ExtendibleSystemGreedyDependency} we use the same notation introduced in Section~\ref{sc:extendible}, except that we set $d = \DDF{f}$. Observe that both Observation~\ref{ob:find_base} and Lemma~\ref{le:removed_items_count_extendible} (and their proofs) apply also to Algorithm~\ref{alg:ExtendibleSystemGreedyDependency}. The following lemma is a counterpart of Lemma~\ref{le:tradeoff_iteration_extendible}.

\begin{lemma} \label{le:tradeoff_iteration_extendible_dependency}
For every iteration $1 \leq i \leq \ell$, $f(H_{i - 1}) - f(H_i) \leq [( k(d + 1) - 1 ] \cdot f(u_i ~|~ D_{best}(u_i) \cup S_{i - 1})$,
where $u_i$ and $D^+_{best}(u_i)$ are the greedy choices made by Algorithm~\ref{alg:ExtendibleSystemGreedy} at iteration $i$.
\end{lemma}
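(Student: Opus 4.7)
The plan is to parallel the proof of Lemma~\ref{le:tradeoff_iteration_extendible}, replacing the supermodular inequality by the equality afforded by Definition~\ref{def:dependency} (marginals of $v_j$ depend only on the intersection of the conditioning set with $\DD{v_j}$), and to squeeze an extra unit out of the prefactor by a case analysis. First, I would order the elements of $H_{i-1}\setminus H_i$ arbitrarily as $v_1,\ldots,v_r$ and set $\bar H_j=H_{i-1}\setminus\{v_h:1\le h\le j\}$. Observation~\ref{ob:find_base} and Lemma~\ref{le:removed_items_count_extendible} go through verbatim for Algorithm~\ref{alg:ExtendibleSystemGreedyDependency} (neither one uses anything specific to the supermodular variant of the greedy rule), yielding $r\le k\cdot|S_i\setminus H_{i-1}|\le k(d+1)$. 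An easy induction shows $S_{i-1}\subseteq H_{i-1}$, so $\bar H_j\cup S_{i-1}=\bar H_j$, and telescoping gives $\sum_{j=1}^r f(v_j\mid\bar H_j\cup S_{i-1})=f(H_{i-1})-f(\bar H_r)\ge f(H_{i-1})-f(H_i)$, the last step by monotonicity of $f$.

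For each $j$, set $D_j=\DD{v_j}\cap\bar H_j\setminus S_{i-1}\subseteq\DD{v_j}$. Because $(D_j\cup S_{i-1})\cap\DD{v_j}=(\bar H_j\cup S_{i-1})\cap\DD{v_j}$, Definition~\ref{def:dependency} delivers the key equality $f(v_j\mid D_j\cup S_{i-1})=f(v_j\mid\bar H_j\cup S_{i-1})$. Since $S_{i-1}\cup D_j+v_j\subseteq H_{i-1}\in\cI$, hereditary makes $(v_j,D_j)$ a valid candidate for Line~\ref{ln:selection_extendible_system_dependency}, so greedy maximality enforces $f(v_j\mid\bar H_j\cup S_{i-1})\le g_i$, where $g_i:=f(u_i\mid D_{best}(u_i)\cup S_{i-1})$. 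Combining, $f(H_{i-1})-f(H_i)\le r\cdot g_i$.

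To replace $r$ by $k(d+1)-1$, I would case-split on whether $(\{u_i\}\cup D_{best}(u_i))\cap H_{i-1}\ne\varnothing$. If the intersection is nonempty, then $|S_i\setminus H_{i-1}|\le d$ and Lemma~\ref{le:removed_items_count_extendible} already gives $r\le kd\le k(d+1)-1$ (since $k\ge 1$). The main obstacle lies in the complementary case, where $\{u_i\}\cup D_{best}(u_i)$ is disjoint from $H_{i-1}$, so that $H_i=\bar H_r\,\cupdot\,(D_{best}(u_i)+u_i)$ and the naive bound only gives $r\le k(d+1)$. The plan here is to refine the telescoping into $f(H_{i-1})-f(H_i)=[f(H_{i-1})-f(\bar H_r)]-f(D_{best}(u_i)+u_i\mid\bar H_r)$ and to argue that the subtracted ``extra'' term absorbs one full $g_i$. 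The argument I intend to use exploits the greedy optimality on the augmented pair $(u_i,\,D_{best}(u_i)\cup((\bar H_r\cap\DD{u_i})\setminus S_{i-1}))$, whose support lies in $\bar H_r\cup D_{best}(u_i)+u_i\subseteq H_i\in\cI$ and which is therefore feasible; combined with the dependency equality this pins down the value of the extra term in terms of $g_i$. If this direct absorption does not close the gap cleanly, a swap fallback selects a particular $j^\star$ whose marginal $f(v_{j^\star}\mid\bar H_{j^\star})$ is pointwise dominated by $f(D_{best}(u_i)+u_i\mid\bar H_r)$ (feasibility of the swap uses $\bar H_r+v_{j^\star}\subseteq H_{i-1}\in\cI$), so the $j^\star$-term cancels against the extra, leaving $r-1\le k(d+1)-1$ full $g_i$ contributions.
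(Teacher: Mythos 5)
Your first half is exactly the paper's argument: the telescoping sum, the dependency equality $f(v_j\mid(\DD{v_j}\cap\bar H_j)\cup S_{i-1})=f(v_j\mid\bar H_j\cup S_{i-1})$, and the candidacy of the pairs $(v_j,(\DD{v_j}\cap\bar H_j)\setminus S_{i-1})$ give $f(H_{i-1})-f(H_i)\le r\cdot g_i$ with $g_i=f(u_i\mid D_{best}(u_i)\cup S_{i-1})$, and this suffices whenever $r\le k(d+1)-1$. The gap is in your ``direct absorption'' for the remaining case. Greedy optimality applied to the augmented pair $(u_i,D')$ with $D'=D_{best}(u_i)\cup((\bar H_r\cap\DD{u_i})\setminus S_{i-1})$ gives $f(u_i\mid D'\cup S_{i-1})\le g_i$, i.e.\ it upper-bounds the very quantity you are trying to use as a lower bound for the extra term: combined with the dependency equality and monotonicity it only shows that \emph{both} $g_i$ and $f(D_{best}(u_i)+u_i\mid\bar H_r)$ dominate $f(u_i\mid D_{best}(u_i)\cup\bar H_r)$, not that the extra term dominates $g_i$. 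Moreover, the inequality you actually need, $f(u_i\mid D_{best}(u_i)\cup\bar H_r)\ge f(u_i\mid D_{best}(u_i)\cup S_{i-1})$, can genuinely fail when $D_{best}(u_i)\subsetneq\DD{u_i}$, since elements of $\DD{u_i}\setminus D_{best}(u_i)$ sitting in $\bar H_r$ may \emph{decrease} the marginal of $u_i$ (dependencies are not only positive). The swap fallback is likewise unsubstantiated: feasibility of $\bar H_r+v_{j^\star}$ gives no reason why some $f(v_{j^\star}\mid\bar H_{j^\star}\cup S_{i-1})$ should be dominated by $f(D_{best}(u_i)+u_i\mid\bar H_r)$.

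The missing observation (which is how the paper closes the case) is that the only situation where the naive bound $r\cdot g_i$ is insufficient is $r=k(d+1)$, and then Lemma~\ref{le:removed_items_count_extendible} forces $|S_i\setminus H_{i-1}|=d+1$; this yields not only disjointness of $D_{best}(u_i)+u_i$ from $H_{i-1}$ but also $|D_{best}(u_i)|=d$, hence $D_{best}(u_i)=\DD{u_i}$ (as $D_{best}(u_i)\subseteq\DD{u_i}$ and $|\DD{u_i}|\le d$). Once the conditioning set contains \emph{all} of $\DD{u_i}$, Definition~\ref{def:dependency} gives the exact equality $f(u_i\mid\DD{u_i}\cup\bar H_r)=f(u_i\mid D_{best}(u_i)\cup S_{i-1})=g_i$, so by monotonicity $f(H_i)=f(\DD{u_i}+u_i\mid\bar H_r)+f(\bar H_r)\ge g_i+f(\bar H_r)$, which is precisely the absorption of one $g_i$ you wanted; no appeal to greedy optimality is needed (and in this case your augmented pair degenerates to $D_{best}(u_i)$ itself). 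If instead you insist on splitting by disjointness as you propose, you must additionally note that in the disjoint case with $|D_{best}(u_i)|<d$ the naive bound already suffices, since $r\le k(|D_{best}(u_i)|+1)\le kd\le k(d+1)-1$; without that remark, and with the absorption mechanism as you state it, the critical case is not actually covered.
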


\begin{proof}
Order the elements of $H_{i - 1} \setminus H_i$ in an arbitrary order $v_1, v_2, \ldots v_r$, and let $\bar{H}_j = H_{i - 1} \setminus \{v_h ~|~ 1 \leq h \leq j\}$. 
By Definition~\ref{def:dependency} (dependency set), for every $1 \leq j \leq r$,
\[
    f(v_j ~|~ (\DD{v_j} \cap \bar{H}_j) \cup S_{i - 1})
    =
    f(v_j ~|~ \bar{H}_j \cup S_{i - 1})
    \enspace.
\]
Since $\bar{H}_j \cup S_{i - 1} = \bar{H}_{j - 1} \cup S_{i - 1} - v_j$, we get:
\begin{align} \label{eq:sum_bound}
    \sum_{j = 1}^r f(v_j ~|~ (\DD{v_j} \cap \bar{H}_j) \cup S_{i - 1})
    ={} &
    \sum_{j = 1}^r f(v_j ~|~ \bar{H}_j \cup S_{i - 1})\\ \nonumber
    ={} &
    f(\bar{H}_0 \cup S_{i - 1}) - f(\bar{H}_r \cup S_{i - 1})
    =
    f(H_{i - 1}) - f(\bar{H}_r)
    \enspace,
\end{align}
where the last equality holds since $\bar{H}_0 = H_{i - 1} \supseteq S_{i - 1}$ and $\bar{H}_r = H_{i - 1} \cap H_i \supseteq S_{i - 1}$.
We upper bound $f(\bar{H}_r)$ by recalling that $\bar{H}_r \subseteq H_i$, which gives by monotonicity $f(\bar{H}_r) \leq f(H_i)$ and then, by~\eqref{eq:sum_bound}, we have:
\[
	\sum_{j = 1}^r f(v_j ~|~ (\DD{v_j} \cap \bar{H}_j) \cup S_{i - 1})
	\geq
	f(H_{i - 1}) - f(H_i)
	\enspace.
\]
Note that the pair $(v_j, \left( \DD{v_j} \cap \bar{H}_j \right) \setminus S_{i - 1})$ is a candidate pair that Algorithm~\ref{alg:ExtendibleSystemGreedyDependency} can choose at Line~\ref{ln:selection_extendible_system_dependency} for every element $v_j \in H_{i - 1} \setminus H_i$. This implies the lemma, unless $r = k(d + 1)$ (recall that $r \leq k(d + 1)$ by Lemma~\ref{le:removed_items_count_extendible}).

Thus, we may assume from now on that $r = k(d + 1)$, which implies by Lemma~\ref{le:removed_items_count_extendible} that $|S_i \setminus H_{i - 1}| = d + 1$. In other words, the algorithm adds $u_i$ and all of $\DD{u_i}$ in the $i^{th}$ iteration. This means that the marginal contribution of $u_i$ is maximized when all of $\DD{u_i}$ is in the set, and thus, $u_i$ contributes to the hybrid solution the same value it contributes to the final solution. 
Formally, $|S_i \setminus H_{i - 1}|$ implies $D_{best}(u_i) = \DD{u_i}$ and $(\DD{u_i} + u_i) \cap H_{i - 1} = \varnothing$. Hence, since $\DD{u_i} + u_i \subseteq H_i$:
\begin{align*}
	f(H_i)
	= &
	f(\DD{u_i} + u_i \mid \bar{H}_r) + f(\bar{H}_r)\\
	\geq{} &
	f(u_i \mid \DD{u_i} \cup \bar{H}_r) + f(\bar{H}_r)
	=
	f(u_i \mid D_{best}(u_i) \cup S_{i - 1}) + f(\bar{H}_r)
	\enspace,
\end{align*}
where the inequality follows by monotonicity and the second equality by Definition~\ref{def:dependency} (dependency set) together with $D_{best}(u_i) = \DD{u_i}$.
Combining with \eqref{eq:sum_bound}, we get:
\[
	\sum_{j = 1}^{k(d + 1)} f(v_j ~|~ (\DD{v_j} \cap \bar{H}_j) \cup S_{i - 1}) - f(u_i \mid D_{best}(u_i) \cup S_{i - 1})
	\geq
	f(H_{i - 1}) - f(H_i)
	\enspace,
\]
which implies the lemma.
\end{proof}

\begin{corollary}
Algorithm~\ref{alg:ExtendibleSystemGreedyDependency} is a $(1/(k(d + 1)))$-approximation algorithm.
\end{corollary}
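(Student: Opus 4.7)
The plan is to mimic the corollary proof at the end of Section~\ref{sc:extendible}, but with the improved per-iteration bound supplied by Lemma~\ref{le:tradeoff_iteration_extendible_dependency}. First, I would telescope: summing Lemma~\ref{le:tradeoff_iteration_extendible_dependency} over all iterations $1 \leq i \leq \ell$ gives
\[
f(H_0) - f(H_\ell) \;\leq\; [k(d+1) - 1] \cdot \sum_{i=1}^\ell f(u_i \mid D_{best}(u_i) \cup S_{i-1}).
\]
By construction, $f(H_0) = f(OPT)$, and since $S_\ell$ is a base (Observation~\ref{ob:find_base}), the only independent set containing $S_\ell$ is $S_\ell$ itself, so $H_\ell = S_\ell$.

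The next step is to upper bound the right-hand side by $f(S_\ell)$. For each $i$, I would write
\[
f(S_i) - f(S_{i-1}) \;=\; f(D_{best}(u_i) + u_i \mid S_{i-1}) \;=\; f(u_i \mid D_{best}(u_i) \cup S_{i-1}) + f(D_{best}(u_i) \mid S_{i-1}),
\]
and discard the second summand by monotonicity. Telescoping yields
\[
\sum_{i=1}^\ell f(u_i \mid D_{best}(u_i) \cup S_{i-1}) \;\leq\; f(S_\ell) - f(S_0) \;\leq\; f(S_\ell),
\]
where the last inequality uses $f(S_0) = f(\varnothing) \geq 0$.

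Combining the two displays gives $f(OPT) - f(S_\ell) \leq [k(d+1)-1] \cdot f(S_\ell)$, i.e., $f(S_\ell) \geq f(OPT)/(k(d+1))$, which is the desired approximation ratio. This derivation is entirely routine once Lemma~\ref{le:tradeoff_iteration_extendible_dependency} is in hand; the only subtlety worth flagging is the asymmetry between the ``$+1$'' denominator of Theorem~\ref{th:supermodular_extendible} and its absence here. That gap is absorbed precisely by the improvement from $k(d+1)$ to $k(d+1)-1$ in Lemma~\ref{le:tradeoff_iteration_extendible_dependency}: the ``missing'' unit of marginal per iteration exactly cancels the otherwise-needed additive loss of $f(S_\ell)$ on the right-hand side, so no extra additive constant appears in the final denominator.
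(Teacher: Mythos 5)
Your proposal is correct and follows essentially the same route as the paper's own proof: you sum Lemma~\ref{le:tradeoff_iteration_extendible_dependency}, bound $f(u_i \mid D_{best}(u_i) \cup S_{i-1}) \leq f(D_{best}(u_i) + u_i \mid S_{i-1}) = f(S_i) - f(S_{i-1})$ by monotonicity, telescope, and use $H_\ell = S_\ell$, $f(H_0) = f(OPT)$, $f(S_0) \geq 0$, exactly as the paper does. The explicit marginal decomposition you give for the monotonicity step and the closing remark about how the $k(d+1)-1$ factor accounts for the absent ``$+1$'' are accurate elaborations of the same argument, not a different approach.
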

\begin{proof}
We have
\begin{align} \label{eq:dep:approximationRatio}
    [k(d + 1) - 1] \cdot [f(S_\ell) - f(S_0)]
    ={} &
    [k(d + 1) - 1] \cdot \sum_{i = 1}^\ell f(D_{best}(u_i) + u_i ~|~ S_{i - 1}) \\
		\geq{} &
		[k(d + 1) - 1] \cdot \sum_{i = 1}^\ell f(u_i ~|~ D_{best}(u_i) \cup S_{i - 1}) \nonumber \\
    \geq{} &
    \sum_{i = 1}^\ell [f(H_{i - 1}) - f(H_i)]
    =
    f(H_0) - f(H_\ell) \nonumber
    \enspace,
\end{align}
where the first inequality follows by monotonicity and the second by adding up
Lemma~\ref{le:tradeoff_iteration_extendible_dependency} over $1 \leq i \leq \ell$. 
Note that $H_\ell = S_\ell$ because $S_\ell$ is a base, and therefore, every independent set containing $S_\ell$ must be $S_\ell$ itself. Recall also that $f(H_0) = f(OPT)$ and $f(S_0) \geq 0$. Plugging these observations into~\eqref{eq:dep:approximationRatio} gives:
\[
    [k(d + 1) - 1] \cdot f(S_\ell) \geq f(OPT) - f(S_\ell)
    \Rightarrow
    f(S_\ell) \geq \frac{f(OPT)}{k(d + 1)}
    \enspace.
    \qedhere
\]
\end{proof}

\subsection{A tight example}
In this section we present an example showing that our analysis of Algorithm~\ref{alg:ExtendibleSystemGreedyDependency} is tight even when the independence system $(\cN, \cI)$ belongs to $k$-intersection (recall that any independence system in $k$-intersection is also $k$-extendible, but not vice versa).

\begin{proposition} \label{p:tight_extendible_dependency}
For every $k \geq 1$, $d \geq 0$ and $\ee > 0$, there exists a $k$-intersection independence system $(\cN, \cI)$ and a function $f : 2^\cN \rightarrow \mathbb{R}^+$ with $\DDF{f} = d$ for which Algorithm~\ref{alg:ExtendibleSystemGreedyDependency} produces a $(1 + \ee) / (k(d + 1))$ approximation.
\end{proposition}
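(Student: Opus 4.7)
The plan is to construct an explicit tight instance. In contrast to the grid-based example of Section~\ref{ssc:tight_supermodular}, whose objective $f'$ (``number of distinct rows'') is submodular but has large \emph{dependency} degree (two elements sharing a row decrease each other's marginals), the constraint $\DDF{f} = d$ forces the ``non-trick'' part of $f$ to be linear, so I would abandon the grid entirely and use a more modular construction. Concretely, let $\cN = \{T_0, T_1, \ldots, T_d\} \cupdot \{R_{i, j} : 1 \leq i \leq k,\, 0 \leq j \leq d\}$, and for each $1 \leq i \leq k$ let $\cM_i = (\cN, \cI_i)$ be the partition matroid whose blocks are the pairs $\{T_j, R_{i, j}\}$ for $0 \leq j \leq d$, together with the singletons $\{R_{i', j}\}$ for $i' \neq i$, each block with capacity~$1$; set $\cI = \bigcap_{i=1}^k \cI_i$, which is a $k$-intersection system by construction. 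Take $f(S) = |S \cap R| + (1+\ee) \cdot \mathbf{1}[\{T_0, \ldots, T_d\} \subseteq S]$, where $R$ denotes the collection of all real elements $R_{i, j}$.

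First I would verify $\DDF{f} = d$. For any real $R_{i, j}$, adding it to a set always increases $|S \cap R|$ by exactly~$1$ and never affects the trick indicator (since $R_{i, j}$ is not a trick), so $f(R_{i, j} \mid S) = 1$ independently of $S$ and $\DD{R_{i, j}} = \varnothing$. For any trick $T_j$, $f(T_j \mid S) = (1+\ee) \cdot \mathbf{1}[\{T_0, \ldots, T_d\} - T_j \subseteq S]$ depends precisely on whether the other $d$ tricks lie in $S$, yielding $\DD{T_j} = \{T_0, \ldots, T_d\} - T_j$ of size~$d$, hence $\DDF{f} = d$. Next I would analyze Algorithm~\ref{alg:ExtendibleSystemGreedyDependency} iteration by iteration. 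In the first iteration the candidate pairs $(R_{i, j}, \varnothing)$ contribute marginal~$1$; any pair $(T_j, D)$ with $D \subsetneq \{T_0, \ldots, T_d\} - T_j$ contributes~$0$; and only the pair $\bigl(T_j,\, \{T_0, \ldots, T_d\} - T_j\bigr)$---whose element/dependency union equals the entire trick set, independent in each $\cM_i$ because the tricks lie in distinct blocks---achieves the maximum marginal $1 + \ee$. The greedy therefore selects such a pair and sets $S_1 = \{T_0, \ldots, T_d\}$. In any subsequent iteration the only remaining candidates are the real elements $R_{i, j}$, but each shares the block $\{T_j, R_{i, j}\}$ of $\cM_i$ with $T_j \in S_1$ and so cannot be added; thus $S_1$ is a base and the algorithm returns it with value $f(S_1) = 1 + \ee$.

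To finish, I would exhibit the set $S^* = \{R_{i, j} : 1 \leq i \leq k,\, 0 \leq j \leq d\}$: in every $\cM_i$ it meets each block $\{T_j, R_{i, j}\}$ in the single element $R_{i, j}$ and each singleton block in at most one element, so $S^* \in \cI$; since $S^*$ contains no tricks, $f(S^*) = |S^*| = k(d+1)$. The ratio $f(S_1)/f(S^*) = (1+\ee)/(k(d+1))$ then delivers Proposition~\ref{p:tight_extendible_dependency}. The one place that requires care---and thus the main obstacle---is the dependency computation: I must rule out any cross-dependency between tricks and reals, but this is immediate because the only nonlinear term of $f$ is the trick indicator, whose value depends solely on whether the trick set is present and in particular is unaffected by the presence or absence of any real element. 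The remaining verifications (that the described blocks really partition $\cN$, and that the greedy cannot do better by selecting a proper subset of the other tricks as its dependency set) are routine.
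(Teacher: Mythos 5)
Your construction is correct, and all the key checks go through: each $\cM_i$ is a genuine partition matroid (the pair blocks $\{T_j,R_{i,j}\}$ and the singletons do partition $\cN$), $\DD{R_{i,j}}=\varnothing$ and $\DD{T_j}=\{T_0,\dots,T_d\}-T_j$ give $\DDF{f}=d$ exactly, the selection rule of Algorithm~\ref{alg:ExtendibleSystemGreedyDependency} (maximizing $f(u_i\mid D_{best}(u_i)\cup S_{i-1})$) indeed yields $1+\ee$ only for a pair $\bigl(T_j,\{T_0,\dots,T_d\}-T_j\bigr)$ with the full trick set independent in every $\cM_i$, the resulting $S_1=\{T_0,\dots,T_d\}$ is a base since every remaining $R_{i,j}$ is blocked by $T_j$ in $\cM_i$, and $S^*$ is independent of value $k(d+1)$. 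The mechanism is the same as the paper's: a linear ``base'' value plus an $\ee$-bonus on a trigger set of $d+1$ elements that the greedy grabs in its first iteration, after which each of those $d+1$ elements blocks $k$ unit-value optimum elements, one per matroid, for a total loss of $k(d+1)$. What differs is the instantiation. The paper recycles the grid machinery of its supermodular tight example: the ground set consists of a large family $\{u_T\mid T\in\cT\}$ of grid points together with auxiliary value-$0$ elements $v_0,\dots,v_{k(d+1)-1}$, the bonus is triggered by $u_{\hat T}$ together with $v_1,\dots,v_d$, and the blocked optimum is a ``diagonal'' family $\{u_{T^*(j)}\}_{j=1}^{k(d+1)}$ whose disjointness requires a short modular-arithmetic argument. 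Your gadget is minimal --- ground set of size $(d+1)(k+1)$, with blocking and independence of $S^*$ immediate from the pair-block structure --- at the cost of losing the structural parallelism with the supermodular example that the paper maintains (and which there is needed because Algorithm~\ref{alg:ExtendibleSystemGreedy}'s joint-marginal selection rule and the extra $+1$ in its denominator require the finer grid construction). Both proofs share the same benign degeneracy at $k=1,d=0$, so that is not a defect of your version.
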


The rest of this section is devoted for constructing the independence system guaranteed by Proposition~\ref{p:tight_extendible_dependency}. Let $\cT$ be the collection of all sets $T \subseteq \{1, 2, \ldots, k\} \times \{0, 1, \ldots, k(d + 1) - 1\}$ obeying the following properties:
\begin{itemize}
	\item For every $1 \leq i \leq k + 1$, there exists exactly one $x$ such that $T$ contains the pair $(i, x)$.
	\item At least one pair $(i, x)$ in $T$ has $x \leq d$.
\end{itemize}

Let $\cN$ be the ground set $\{u_T \mid T \in \cT\} \cup \{v_x\}_{x = 0}^{k(d + 1) - 1}$. We define $k$ matroids on this ground set as follows. For every $1 \leq i \leq k$, $\cM_i = (\cN, \cI_i)$, where a set $S \subseteq \cN$ belongs to $\cI_i$ if and only if for every $0 \leq x < k(d + 1)$, $|S \cap \{v_x\}| + |\{u_T \in S \mid (i, x) \in T\}| \leq 1$. One can easily verify that $\cM_i$ is a partition matroid. The independence system we construct is the intersection of these matroids, \ie, it is $(\cN, \cI)$, where $\cI = \bigcap_{i = 1}^k \cI_i$. Next, we define the objective function $f : 2^\cN \rightarrow \mathbb{R}^+$, as follows.
We first define the following function $f'$.
\[
	f'(S) = |\{u_T \in S \mid T \in \cT\}|
	\enspace.
\]
Let $\hat{T} = \{(i, 0)\}_{i = 1}^k$ (note that $\hat{T} \in \cT$). Then,
\[
	f(S)
	=
	\begin{cases}
	f'(S) + \ee & \text{if $u_{\hat{T}} \in S$ and $\{v_i\}_{i = 1}^d \subseteq S$} \enspace, \\
	f'(S) & \text{otherwise} \enspace.
	\end{cases}
\]
Since $f'(S)$ is a linear function, $\DDF{f} = d$.

\begin{claim}
Given the above constructed independence system $(\cN, \cI)$ and objective function $f$, Algorithm~\ref{alg:ExtendibleSystemGreedyDependency} outputs a solution of value $1 + \ee$.
\end{claim}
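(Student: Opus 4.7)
The plan is to follow the greedy trajectory of Algorithm~\ref{alg:ExtendibleSystemGreedyDependency} explicitly. The first step is to compute the dependency sets: since $f$ deviates from the linear function $f'$ only by the single bonus of $\ee$ activated when $\{u_{\hat{T}}, v_1, \ldots, v_d\} \subseteq S$, one obtains $\DD{u_{\hat{T}}} = \{v_1, \ldots, v_d\}$, $\DD{v_x} = \{u_{\hat{T}}\} \cup (\{v_1, \ldots, v_d\} \setminus \{v_x\})$ for every $1 \leq x \leq d$, and $\DD{u} = \varnothing$ for every remaining element $u \in \cN$. In particular, $\DD{v_0} = \varnothing$ and $\DD{v_x} = \varnothing$ for every $x > d$, while for every $T \neq \hat{T}$ the element $u_T$ has $\DD{u_T} = \varnothing$ and $f(u_T \mid S) = 1$ for every $S \subseteq \cN$.

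For iteration~$1$ (where $S_0 = \varnothing$) I would enumerate the candidate pairs $(u, D)$ with $D \subseteq \DD{u}$ and $\{u\} \cup D \in \cI$ and argue that $(u_{\hat{T}}, \{v_1, \ldots, v_d\})$ is the maximizer. Independence of $\{u_{\hat{T}}, v_1, \ldots, v_d\}$ follows directly from the partition-matroid definitions: in each $\cM_i$, the element $u_{\hat{T}}$ is the only element of the set touching column $(i, 0)$ and each $v_x$ is the only element touching column $(i, x)$. The resulting marginal is $f(u_{\hat{T}} \mid \{v_1, \ldots, v_d\}) = 1 + \ee$, which strictly dominates every competitor: a pair $(u_T, \varnothing)$ with $T \neq \hat{T}$ has marginal~$1$; a pair $(v_x, D)$ with $1 \leq x \leq d$ has marginal at most $\ee$ (since $v_x$ contributes nothing to $f'$); and any pair involving $v_0$ or some $v_x$ with $x > d$ has marginal~$0$. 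Hence $S_1 = \{u_{\hat{T}}, v_1, \ldots, v_d\}$ with $f(S_1) = 1 + \ee$.

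Next I would show that every subsequent iteration leaves $f$ unchanged. The crucial claim is that no element $u_T$ can ever be added to $S_1$: by the definition of $\cT$, every $T \in \cT$ contains some pair $(i, x)$ with $x \leq d$, and this pair conflicts in matroid $\cM_i$ either with $u_{\hat{T}}$ (when $x = 0$) or with $v_x$ (when $1 \leq x \leq d$). Among the remaining elements, $v_0$ conflicts with $u_{\hat{T}}$ at column $(i, 0)$ in every $\cM_i$ and is therefore infeasible, while each $v_x$ with $x > d$ has $\DD{v_x} = \varnothing$ and zero marginal on $S_1$ (it contributes neither to $f'$ nor to the bonus, which is already accounted for). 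The algorithm therefore merely pads $S_1$ with some of these $v_x$'s until $S_i$ is a base, and the final value is $f(S_\ell) = f(S_1) = 1 + \ee$.

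The main obstacle is the case analysis in iteration~$1$: simultaneously certifying independence of $\{u_{\hat{T}}, v_1, \ldots, v_d\}$ across all $k$ partition matroids and ruling out every competing pair $(u, D)$. Once this step is done, the impossibility of adding any $u_T$ in a later iteration follows cleanly from the ``at least one coordinate with $x \leq d$'' requirement built into $\cT$ combined with the partition-matroid conflicts induced by $u_{\hat{T}}$ and $v_1, \ldots, v_d$.
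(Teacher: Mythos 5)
Your proposal is correct and follows essentially the same route as the paper's proof: show that in the first iteration the pair $(u_{\hat{T}}, \{v_1, \ldots, v_d\})$ is the greedy choice, yielding $S_1$ of value $1 + \ee$, and then show that every $u_T$ with $T \in \cT$ conflicts in some matroid $\cM_i$ with either $u_{\hat{T}}$ (when $x = 0$) or $v_x$ (when $1 \leq x \leq d$), so later iterations only add value-zero elements $v_x$. Your write-up is just slightly more explicit (computing all dependency sets and noting $v_0$ is infeasible); the only tiny omission is the trivially dominated candidates $(u_{\hat{T}}, D)$ with $D \subsetneq \DD{u_{\hat{T}}}$, whose marginal is $1$.
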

\begin{proof}
At the first iteration, it is clear that Algorithm~\ref{alg:ExtendibleSystemGreedyDependency} picks exactly the elements of $\{v_i\}_{i = 1}^d + u_{\hat{T}}$, since $\{v_i\}_{i = 1}^d$ is the dependency set of $u_{\hat{T}}$, and the marginal contribution of any other element is at most~1, given any subset of $\cN$.

To complete the proof, we show that Algorithm~\ref{alg:ExtendibleSystemGreedyDependency} cannot increase the value of its solution at the next iterations. Consider an arbitrary element $u \in \cN \setminus (\{v_i\}_{i = 1}^d + u_{\hat{T}})$. If $u = v_x$ for some $0 \leq x < k(d + 1)$, then the addition of $v_x$ does not affect the value of $f$. On the other hand, if $u = u_T$ for some $T \in \cT$, then $T$ must contain a pair $(i, x)$ such that $0 \leq x \leq d$. There are two cases:
\begin{compactitem}
	\item If $x \neq 0$, then $u_T$ cannot coexist in an independent set of $\cM_i$ with $v_x$.
	\item If $x = 0$, then $u_T$ cannot coexist in an independent set of $\cM_i$ with $u_{\hat{T}}$ because both correspond to sets containing the pair $(i, 0)$. \qedhere
\end{compactitem}
\end{proof}

To prove Proposition~\ref{p:tight_extendible_dependency}, we still need to show that $(\cN, \cI)$ contains an independent set of a high value. Consider the set $S^* = \{u_{T^*(j)}\}_{j = 1}^{k(d + 1)}$, where $T^*(j) = \{(i, x) \mid 1 \leq i \leq k \text{ and } x = (i(d + 1) - j) \bmod k(d + 1)\}$.

\begin{claim}
$S^* \subseteq \cN$, hence, $f(S^*) = |S^*| = k(d + 1)$, because $S^* \cap \{v_x\}_{x = 0}^{k(d + 1) - 1} = \varnothing$.
\end{claim}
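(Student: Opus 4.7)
The plan is to verify the two defining conditions of $\cT$ for every $T^*(j)$, then deduce distinctness of the $u_{T^*(j)}$ from a simple modular argument, and finally read off $f(S^*)$ from the case analysis of $f$.

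First I would show $T^*(j) \in \cT$ for each $1 \leq j \leq k(d+1)$. The first property of $\cT$ (exactly one $x$ per coordinate $i$) is immediate from the definition of $T^*(j)$, which specifies $x = (i(d+1) - j) \bmod k(d+1)$ uniquely for each $i \in \{1, \ldots, k\}$. For the second property (at least one pair $(i,x) \in T^*(j)$ with $x \leq d$), the key step is to set $i^{\star} = \lceil j / (d+1) \rceil$. Since $1 \leq j \leq k(d+1)$, we have $1 \leq i^{\star} \leq k$, and a direct calculation shows
\[
0 \;\leq\; i^{\star}(d+1) - j \;=\; \lceil j/(d+1) \rceil (d+1) - j \;\leq\; d,
\]
so this quantity is already in the range $[0, k(d+1)-1]$ and therefore equals its own residue modulo $k(d+1)$. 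Hence $(i^{\star}, x) \in T^*(j)$ with $0 \leq x \leq d$, as required. This establishes $S^* \subseteq \cN$.

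Next I would show that the map $j \mapsto T^*(j)$ is injective on $\{1, \ldots, k(d+1)\}$, so that $|S^*| = k(d+1)$. If $T^*(j_1) = T^*(j_2)$, then matching the $x$-coordinate at any fixed $i$ gives
\[
i(d+1) - j_1 \equiv i(d+1) - j_2 \pmod{k(d+1)},
\]
i.e., $j_1 \equiv j_2 \pmod{k(d+1)}$. Since both $j_1$ and $j_2$ lie in $[1, k(d+1)]$, this forces $j_1 = j_2$.

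Finally, the inclusion $S^* \cap \{v_x\}_{x=0}^{k(d+1)-1} = \varnothing$ is immediate from the definition of $S^*$, which contains only elements of the form $u_{T^*(j)}$. Consequently, the hypothesis $\{v_i\}_{i=1}^{d} \subseteq S^*$ in the definition of $f$ fails, so $f(S^*) = f'(S^*)$. Since every element of $S^*$ is of the form $u_T$ with $T \in \cT$, we get $f'(S^*) = |S^*| = k(d+1)$, completing the claim. No step is really an obstacle; the only one requiring any care is verifying the range estimate $0 \leq i^{\star}(d+1) - j \leq d$, which handles the ``at least one $x \leq d$'' condition of $\cT$.
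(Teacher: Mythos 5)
Your proof is correct and follows essentially the same route as the paper: the key step in both is taking $i = \lceil j/(d+1) \rceil$ and verifying $0 \leq \lceil j/(d+1) \rceil(d+1) - j \leq d$, so that $T^*(j)$ satisfies the ``some $x \leq d$'' requirement of $\cT$. The extra pieces you include (the trivial first property of $\cT$ and the injectivity of $j \mapsto T^*(j)$ via the congruence $j_1 \equiv j_2 \pmod{k(d+1)}$) are exactly the argument the paper defers to its subsequent claim, so there is no substantive difference.
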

\begin{proof}
We need to show that for every $1 \leq j \leq k(d + 1)$, $u_{T^*(j)} \in \cN$. 
Let $i = \lceil j / (d+1) \rceil$. Clearly $1 \leq i \leq k$ and $T^*(j)$ contains the pair $(i, x)$ for:
\[
	x
	=
	(i(d+1) - j) \bmod k(d+1)
	=
	(\lceil j / (d+1) \rceil \cdot (d+1) - j) \bmod k(d+1)
	\enspace.
\]
To prove the claim, we need to show that $0 \leq x \leq d$. This follows since $\lceil j / (d+1) \rceil \cdot (d+1) - j \geq (j / (d+1)) \cdot (d+1) - j = 0$ and $\lceil j / (d+1) \rceil \cdot (d+1) - j < [j/(d+1) + 1] \cdot (d+1) - j = d + 1$.
\end{proof}

\begin{claim}
For every two values $1 \leq j_1 < j_2 \leq k(d + 1)$, $T^*(j_1) \cap T^*(j_2) = \varnothing$. Hence $S^* \in \cI$.
\end{claim}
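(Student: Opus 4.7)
The plan is to derive the disjointness directly from the arithmetic definition of $T^*(j)$ and then immediately deduce $S^* \in \cI$ from it.

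First I would assume toward contradiction that some pair $(i,x)$ lies in $T^*(j_1) \cap T^*(j_2)$ for indices $1 \leq j_1 < j_2 \leq k(d+1)$. By the definition $T^*(j) = \{(i, x) \mid 1 \leq i \leq k,\ x = (i(d+1) - j) \bmod k(d+1)\}$, membership of $(i,x)$ in both sets forces
\[
    i(d+1) - j_1 \equiv x \equiv i(d+1) - j_2 \pmod{k(d+1)},
\]
so $j_1 \equiv j_2 \pmod{k(d+1)}$. But $0 < j_2 - j_1 < k(d+1)$ by the assumed range of $j_1, j_2$, giving the required contradiction; this is essentially identical to the modular argument already used in Claim~3.4 for the supermodular construction, so no new ideas are needed.

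From the disjointness it is then immediate that $S^* \in \cI$. Recall $\cI = \bigcap_{i=1}^k \cI_i$, where $S \in \cI_i$ iff for every $x$ we have $|S \cap \{v_x\}| + |\{u_T \in S : (i,x) \in T\}| \leq 1$. Since $S^*$ contains none of the elements $v_x$, only the second term can be nonzero, so I would only have to verify that for every fixed $i$ and $x$ at most one index $j \in \{1,\dots,k(d+1)\}$ satisfies $(i,x) \in T^*(j)$. If two distinct indices $j_1 < j_2$ had $(i,x)$ in both $T^*(j_1)$ and $T^*(j_2)$, this would contradict the disjointness just established, completing the argument.

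There is no genuine obstacle here; the only subtlety is noting that because the $v_x$ elements are absent from $S^*$ the partition-matroid constraints of the $\cM_i$ reduce to a pairwise statement about the sets $T^*(j)$, which is exactly what the modular computation provides.
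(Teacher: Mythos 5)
Your proposal is correct and follows essentially the same route as the paper: the same modular-arithmetic contradiction from $(i,x) \in T^*(j_1) \cap T^*(j_2)$, with the deduction of $S^* \in \cI$ (which the paper leaves implicit after ``Hence'') spelled out via the observation that $S^*$ avoids the $v_x$ elements. No gaps.
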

\begin{proof}
Assume towards contradiction that $(i, x) \in T^*(j_1) \cap T^*(j_2)$. Then, modulo $k(d + 1)$, the following equivalence must hold:
\[
	(i(d + 1) - j_1) \equiv (i(d + 1) - j_2)
	\Rightarrow
	j_1 \equiv j_2
	\enspace,
\]
which is a contradiction since $j_1 \neq j_2$ and they are both in the range $[1, k(d + 1)]$.
\end{proof}

\end{document}